\newcommand{\p}[1]{\mathbb{P}}
\newcommand{\fa}[1]{\forall}
\newcommand{\bP}{\mathbb{P}}
\newcommand{\ti}{\times}
\newcommand{\N}{\mathbb{N}}
\newcommand{\R}{\mathbb{R}}
\newcommand{\hd}[2]{d_{H}( #1, #2 )}
\theoremstyle{definition}
\newtheorem{defi}{Definition}[section]
\theoremstyle{remark}
\theoremstyle{plain}
\newtheorem{lemma}[defi]{Lemma}
\newtheorem{coro}[defi]{Corollary}
\newtheorem{thm}[defi]{Theorem}
\newcommand{\sgn}{\operatorname{sign}}
\newcommand{\rad}{\operatorname{rad}}
\newcommand{\Sp}{\mathbb{S}}
\newcommand{\eu}[1]{\left\lVert#1\right\rVert_2}
\newcommand{\skp}[2]{\langle #1, #2\rangle}
\newcommand{\eps}{\varepsilon}
\newcommand{\del}{\delta}
\newcommand{\Pb}{\mathbb{P}}
\newcommand{\E}{\mathbb{E}}
\newcommand{\D}{\mathcal{D}}
\newcommand{\sign}{\operatorname{sign}}
\newcommand{\la}{\lambda}
\newcommand{\ta}{\tau}
\newcommand{\tae}{\sigma}
\numberwithin{equation}{section}
\begin{document}

\title[Binarized Johnson-Lindenstrauss embeddings]{Binarized Johnson-Lindenstrauss embeddings}

\author{Sjoerd Dirksen}
\address{Utrecht University, Mathematical Institute, The Netherlands}
\email{s.dirksen@uu.nl}

\author{Alexander Stollenwerk}
\address{UCLouvain, ICTEAM Institute, ISPGroup,  Belgium}
\email{alexander.stollenwerk@uclouvain.be}

\maketitle

\begin{abstract}
We consider the problem of encoding a set of vectors into a minimal number of bits while preserving information on their Euclidean geometry. We show that this task can be accomplished by applying a Johnson-Lindenstrauss embedding and subsequently binarizing each vector by comparing each entry of the vector to a uniformly random threshold. Using this simple construction we produce two encodings of a dataset such that one can query Euclidean information for a pair of points using a small number of bit operations up to a desired additive error - Euclidean distances in the first case and inner products and squared Euclidean distances in the second. In the latter case, each point is encoded in near-linear time. The number of bits required for these encodings is quantified in terms of two natural complexity parameters of the dataset - its covering numbers and localized Gaussian complexity - and shown to be near-optimal.
\end{abstract}

\section{Introduction}

In modern data analysis one is frequently confronted with datasets that not only contain a large number of data points, but in addition each point is represented by a very high-dimensional vector. High-dimensional data imposes challenges in terms of storage consumption, computational expense during data analysis, and transmission of data across computing clusters. To alleviate these burdens, dimensionality reduction techniques have been applied in various areas, including clustering \cite{MMR19}, computational geometry \cite{Indyk01}, privacy \cite{BlockiBDS12}, spectral graph theory \cite{SpielmanS11}, linear programming \cite{CLAD20,LeS14}, numerical linear algebra \cite{TYUC19,Woo14}, multitask learning \cite{WDLSA09}, computational learning theory \cite{BalcanB05,BalcanBV06}, manifold learning \cite{Clarkson08,BHW07}, motif-finding in computational biology \cite{BuhlerT02}, and astronomy \cite{ContrerasM12}. Dimensionality reduction methods map a given dataset into a lower-dimensional space, while preserving essential structure for a target application. Although the structure to be preserved is application-specific, it is frequently desired to preserve the Euclidean geometry of the dataset, i.e., to preserve inter-point Euclidean distances, inner products, and angles. That is, for a given point set $\mathcal{D}\subset\R^n$ with $n$ very large, one desires to find a map $f:\mathcal{D}\rightarrow\R^m$, $m\ll n$, such that for any two query points $x,y\in \mathcal{D}$ one can accurately and quickly evaluate $\|x-y\|_2,\langle x,y\rangle$, and $\arccos(\langle x,y\rangle/(\|x\|_2\|y\|_2))$ using their images $f(x)$ and $f(y)$.\par
A classical tool is the Johnson-Lindenstrauss lemma \cite{JL84}, which states that Euclidean dimensionality reduction can be achieved using a random linear map. In its modern formulation, it states that if $A\in \R^{m\times n}$ has independent, $K$-subgaussian entries with mean zero and unit variance and $f:\R^n\to\R^m$ is defined by $f(x)=\frac{1}{\sqrt{m}}Ax$, then for any finite set $\mathcal{D}$ and $\eps>0$, 
\begin{equation}
\label{eqn:J-L}
(1-\eps)\|x-y\|_2^2 \leq \|f(x)-f(y)\|_2^2 \leq (1+\eps)\|x-y\|_2^2, \qquad \text{for all } x,y\in \mathcal{D} 
\end{equation}   
with high probability, provided that $m\gtrsim \eps^{-2}\log|\mathcal{D}|$, where $|\mathcal{D}|$ denotes the number of points in $\mathcal{D}$ and $a\lesssim b$ means that $a\leq Cb$ for a constant $C>0$ that only depends on the subgaussian constant $K$.\par     
The Johnson-Lindenstrauss lemma is remarkable in at least two respects. First, the dimensionality reducing map $f$ is \emph{data-oblivious}, i.e., it is constructed without any prior knowledge of the dataset to be embedded. This property is crucial in certain applications, e.g., one-pass streaming applications \cite{ClarksonW09} and data structural problems such as nearest neighbor search \cite{HarPeledIM12}. Second, the Johnson-Lindenstrauss embedding is in general \emph{optimal}: Larsen and Nelson \cite{LaN17} showed under the condition $\eps>\min\{n,|\mathcal{D}|\}^{-0.49}$ that any map $f:\mathcal{D}\to \R^m$ satisfying \eqref{eqn:J-L} must satisfy $m\gtrsim \eps^{-2}\log|\mathcal{D}|$. Motivated by these remarkable properties and the ubiquity of Euclidean dimensionality reduction in applications, a steady stream of works has strived to improve the construction of the classical Johnson-Lindenstrauss embedding and to refine its theoretical guarantees.\par 
In the first vein, several works have focused on improving the speed of the Johnson-Lindenstrauss embedding: if $A$ has independent subgaussian entries, then it is densely populated and hence computing $f(x)$ in general requires time $O(mn)$. The first work improving over this \cite{AC09} introduced the Fast Johnson-Lindenstrauss Transform, which achieved an embedding time $O(n\log n + m^3)$. An improved construction reduced this to time $O(n\log n + m^{2+\gamma})$ for any small constant $\gamma>0$ \cite{AL09}. More recently several works introduced transformations with $O(n\log n)$ embedding time at the expense of increasing the embedding dimension $m$ by a $(\log n)^c$ factor \cite{AL13,KW11,NPW14}. In all these works the dense matrix $A$ is replaced by a number of very sparse random matrices and discrete Fourier matrices. The improved embedding time stems from the Fast Fourier Transform (FFT) \cite{CooleyT65}. A second line of work, initiated in \cite{Achlioptas03} and strongly advanced in \cite{DKS10}, improved the embedding time by making $A$ sparse. By drawing $A$ from a distribution over matrices having at most $s$ non-zeroes per column, $Ax$ can be computed in time $O(sn)$. The best known construction in this direction is the sparse Johnson-Lindenstrauss Transform of \cite{KN14}, which achieves \eqref{eqn:J-L} with high probability under the conditions $m \gtrsim \eps^{-2}\log |\mathcal{D}|$ and $s \gtrsim \eps^{-1}\log |\mathcal{D}|$. This bound on $s$ is known to be optimal in general up to a factor $O(\log(1/\eps))$ for any linear map satisfying \eqref{eqn:J-L} \cite{NN13a}.\par
In the second vein, a number of works have refined the estimates on the embedding dimension in the Johnson-Lindenstrauss lemma. Although the bound $m\gtrsim \eps^{-2}\log|\mathcal{D}|$ is worst-case optimal, it is possible to prove refined, \emph{instance-optimal} bounds in which the factor $\log|\mathcal{D}|$ is replaced by quantities that measure the \emph{complexity} or \emph{intrinsic dimension} of the set. The first result in this direction, derived by Gordon \cite{Gordon88}, states that if $A$ has i.i.d.\ Gaussian entries, then $f(x)=\frac{1}{\sqrt{m}} Ax$ satisfies \eqref{eqn:J-L} with high probability if $m \gtrsim \eps^{-2}\ell_*(\mathcal{D_{\operatorname{nc}}})^2$, where $\mathcal{D}_{\operatorname{nc}} = \{(x-y)/\|x-y\|_2 \ : \ x,y\in \mathcal{D}\}$ denotes the set of normalized cords associated with $\mathcal{D}$ and 
\begin{equation}
\label{eqn:GaussComplexity}
\ell_*(T) = \E \sup_{x\in T} |\langle g,x\rangle|,
\end{equation}
where $g$ is a standard Gaussian vector, denotes the \emph{Gaussian complexity}. This result was later extended to matrices with independent subgaussian rows \cite{Dir16,KM05,MPT07}. It is easy to see that $\ell_*(\mathcal{D_{\operatorname{nc}}})^2$ is bounded by $\log|\mathcal{D}|$ for any finite set, but may be much smaller if the set has a low-complexity structure. For instance, if $\mathcal{D}$ is a set of (even infinitely many) $k$-sparse vectors, then $\ell_*(\mathcal{D_{\operatorname{nc}}})^2\lesssim Ck\log(en/k)$ (see \cite{Dir16} for many more examples of low-complexity sets). More recently, similar instance-optimal bounds have been found for a large class of fast Johnson-Lindenstrauss transforms \cite{ORS15} and the sparse Johnson-Lindenstrauss transform \cite{BDN15}.

\subsection{Main results}

Our paper fits into a general line of work \cite{AlK17,DM18,DiS18,HuS20,Jac15,Jac17,JaCa17,JLB13,OyR15,PlV14} developing a further enhancement of the Johnson-Lindenstrauss lemma: whereas all aforementioned embeddings encode every datum into a vector in $\R^m$, this new research line seeks to encode each datum into a minimal number of \emph{bits}. Specifically, one would like to find an embedding map $f:\mathcal{D}\to\{-1,1\}^B$ and reconstruction map $d:\{-1,1\}^B\times \{-1,1\}^B\to \R$ such that for any pair $x,y\in\mathcal{D}$, $d(f(x),f(y))$ is an accurate proxy of $\|x-y\|_2$, $\langle x,y\rangle$, or $\arccos(\langle x,y\rangle/(\|x\|_2\|y\|_2))$. Ideally, the bit complexity $B$ is minimal and both the embedding time (i.e., the time needed to compute $f(x)$) and the query time (the time to compute $d(u,v)$) are low. In this paper we show that surprisingly good results in this direction can be obtained by applying a traditional Johnson-Lindenstrauss embedding followed by a very simple binarization operation, in which each vector entry is compared to a random threshold. That is, we use maps of the form 
\begin{equation}
\label{eqn:GaussBinEmdDefIntro}
f: \R^n \to \{-1,1\}^m, \qquad x\mapsto f(x)=\sign(Ax+\ta),
\end{equation}
where $A\in \R^{m\times n}$ is a random linear embedding, $\tau\in \R^m$ is a random vector and the sign-function is applied component-wise. We quantify the sufficient number of bits in terms of two complexity measures: the localized version $\ell_*((\mathcal{D}-\mathcal{D})\cap \eps B_2^n)$ of the Gaussian complexity given in \eqref{eqn:GaussComplexity} and the covering number $\mathcal{N}(\mathcal{D},\eps)$, i.e., the smallest number of Euclidean balls with radius $\eps$ needed to cover $\mathcal{D}$.\par  
In our first result we aim to encode a dataset into a minimal number of bits, while preserving information on the Euclidean distances between points. Below, $d_H$ denotes the Hamming distance.
\begin{thm}
\label{thm:GaussianDithered}
Fix $R>0$ and $\D\subset RB^n_2$. Let $\del\in (0,\frac{R}{2}]$ and suppose that
\begin{align}
\label{eqn:GaussianDitheredBoundm}
& \la\gtrsim R\sqrt{\log(R/\del)}, \nonumber \\
& m\gtrsim \la^2\del^{-2} \log\mathcal{N}(\D,\eps) +  \la \del^{-3} \ell_*((\D-\D)\cap \eps B^n_2)^2, 
\end{align}
where $\eps \lesssim \del/\sqrt{\log(e\la/\del)}$. Let $A\in \R^{m\times n}$ be standard Gaussian and let $\ta\in [-\la,\la]^m$ be uniformly distributed and independent of $A$.  Then, with probability at least $1-2 \exp(-c\del^2 m/\la^2)$, the map $f$ in \eqref{eqn:GaussBinEmdDefIntro} satisfies
\begin{equation}
\label{eqn:GaussianDithered}
\sup_{x,y\in \D}\Big|\frac{\sqrt{2\pi}\la}{m}\hd{f(x)}{f(y)}-\|x-y\|_2\Big|\leq  \del.
\end{equation}
\end{thm}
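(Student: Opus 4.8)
The plan is to prove a pointwise ``dithered one-bit'' identity and then promote it to the uniform bound \eqref{eqn:GaussianDithered} via a single-scale net together with an oscillation estimate: the net yields the covering-number term in \eqref{eqn:GaussianDitheredBoundm}, the oscillation yields the localized Gaussian complexity term.

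\textbf{Step 1 (pointwise identity).} Fix $x,y\in\D$ and put $v=x-y$. For a standard Gaussian $a\in\R^n$ and $t\sim\mathrm{Unif}[-\la,\la]$, the number $\tfrac12\big|\sign(\skp ax+t)-\sign(\skp ay+t)\big|$ is the indicator that $t$ lies between $-\skp ax$ and $-\skp ay$; conditioned on $a$ its probability equals $|\skp av|/(2\la)$ on $E_a:=\{|\skp ax|\le\la,\,|\skp ay|\le\la\}$ and is $\le|\skp av|/(2\la)$ in general. Since $\E|\skp av|=\sqrt{2/\pi}\,\eu{x-y}$, $\eu x,\eu y\le R$, and $\Pb(E_a^c)\le 4e^{-\la^2/(2R^2)}$, a Cauchy--Schwarz estimate of the defect gives
\[
\Big|\tfrac{\sqrt{2\pi}\la}{m}\,\E\,\hd{f(x)}{f(y)}-\eu{x-y}\Big|\ \lesssim\ R\,e^{-\la^2/(4R^2)}\ \le\ \tfrac\del8 ,
\]
the last step being exactly the role of $\la\gtrsim R\sqrt{\log(R/\del)}$. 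Note $m$ plays no role here: this controls the mean of the Binomial-type sum $\hd{f(x)}{f(y)}=\sum_{i=1}^m\mathbf 1\{\sign(\skp{a_i}x+\ta_i)\ne\sign(\skp{a_i}y+\ta_i)\}$.

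\textbf{Step 2 (net and oscillation).} Let $\D_\eps$ be a minimal $\eps$-net of $\D$ and $\pi x\in\D_\eps$ a nearest point. The triangle inequality for $d_H$ gives $\big|\hd{f(x)}{f(y)}-\hd{f(\pi x)}{f(\pi y)}\big|\le\hd{f(x)}{f(\pi x)}+\hd{f(y)}{f(\pi y)}$, while $\big|\eu{\pi x-\pi y}-\eu{x-y}\big|\le2\eps\le\del/4$. Hence, via Step 1, it suffices that with the asserted probability both
\[
\sup_{p,q\in\D_\eps}\Big|\tfrac{\sqrt{2\pi}\la}{m}\hd{f(p)}{f(q)}-\eu{p-q}\Big|\le\tfrac\del4
\qquad\text{and}\qquad
\sup_{x\in\D}\,\tfrac{\sqrt{2\pi}\la}{m}\hd{f(x)}{f(\pi x)}\le\tfrac\del4 .
\]
For the first supremum, apply Step 1 and Hoeffding's inequality to each of the $\le\cN(\D,\eps)^2$ sums of independent $\{0,1\}$ variables and take a union bound; this holds with probability $\ge1-2e^{-c\del^2m/\la^2}$ once $m\gtrsim\la^2\del^{-2}\log\cN(\D,\eps)$, the first term of \eqref{eqn:GaussianDitheredBoundm}.

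\textbf{Step 3 (the oscillation estimate --- the crux).} A sign flip between $x$ and $\pi x$ at hyperplane $i$ forces $|\skp{a_i}{\pi x}+\ta_i|\le|\skp{a_i}{x-\pi x}|$, so with $p=\pi x$ and $v=x-\pi x$,
\[
\hd{f(x)}{f(\pi x)}\le\sum_{i=1}^m\mathbf 1\big\{|\skp{a_i}p+\ta_i|\le|\skp{a_i}v|\big\}=:\sum_{i=1}^m h_v^p(a_i,\ta_i),
\]
and it is enough to control $\sup_{v\in T}\tfrac1m\sum_i h_v^p(a_i,\ta_i)$ for each fixed $p\in\D_\eps$, where $T:=(\D-\D)\cap\eps B_2^n$; the final union bound over the $\cN(\D,\eps)$ choices of $p$ only costs $\log\cN(\D,\eps)$, already absorbed above. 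Because $\ta_i$ has density $\le\tfrac1{2\la}$, conditioning on $a_i$ yields $\E h_v^p\le\E|\skp{a_i}v|/\la=\sqrt{2/\pi}\,\eu v/\la$, so the expectation part contributes $\lesssim\eps$ after normalization by $\sqrt{2\pi}\la$ --- which is $\le\del/8$ once $\eps\lesssim\del$. For the fluctuation $\sup_{v\in T}\tfrac1m\sum_i(h_v^p-\E h_v^p)$, an empirical process of $\{0,1\}$-valued functions, the point that rescues us from the non-Lipschitz thresholds is that, again by the density bound, $\E\big|h_v^p-h_{v'}^p\big|\le\sqrt{2/\pi}\,\eu{v-v'}/\la$ for $v,v'\in T$; i.e., the per-sample increments are bounded by $1$ and have variance $\lesssim\eu{v-v'}/\la$. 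A Bernstein-type generic chaining bound (equivalently, symmetrization and chaining of the resulting sub-Gaussian process with its data-dependent metric) then controls the expected fluctuation in terms of chaining functionals $\gamma_1,\gamma_2$ of $T$ for the Euclidean metric, which for subsets of $\R^n$ are governed by $\ell_*(T)$. Carrying out the chaining carefully --- the thresholds $|\skp{a_i}v|$, $v\in T$, are effectively capped at $\asymp\eps\sqrt{\log(e\la/\del)}$, which is the source of the constraint $\eps\lesssim\del/\sqrt{\log(e\la/\del)}$ --- together with a matching deviation bound (Talagrand's concentration inequality for empirical processes, with variance proxy $\lesssim\eps/\la$), one obtains $\sup_{v\in T}\tfrac{\sqrt{2\pi}\la}{m}\sum_i(h_v^p-\E h_v^p)\le\del/8$ with probability $\ge1-e^{-c\del^2m/\la^2}$ as soon as $m\gtrsim\la\del^{-3}\ell_*(T)^2$, the second term of \eqref{eqn:GaussianDitheredBoundm}. (The deviation tail is in fact no worse than $e^{-c\del m/\la}\le e^{-c\del^2m/\la^2}$ since $\del\le\la$, so the probability in the statement is dictated by the Hoeffding step of Step~2.)

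\textbf{Step 4 (assembly).} On the intersection of the events from Steps~2 and~3 --- of probability $\ge1-2e^{-c\del^2m/\la^2}$ --- combining the two supremum bounds, $2\eps\le\del/4$, and the mean estimate of Step~1 through the decomposition of Step~2 gives $\sup_{x,y\in\D}\big|\tfrac{\sqrt{2\pi}\la}{m}\hd{f(x)}{f(y)}-\eu{x-y}\big|\le\del$. The genuine obstacle is Step~3: a supremum of sums of hard-threshold indicators that are not Lipschitz in the index $v$, which is tamed by exploiting the uniform density of the dither $\ta$ to upgrade the naive $L_2$-increment estimate to a sub-exponential one, after which chaining delivers precisely the localized Gaussian complexity $\ell_*\big((\D-\D)\cap\eps B_2^n\big)$; the precise bookkeeping of the logarithmic factors, which fixes the admissible range of $\eps$, is the delicate part.
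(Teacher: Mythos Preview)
Your overall scaffold (Steps~1, 2, 4) matches the paper: expectation control via the dither, Hoeffding on a net, and a separate oscillation term. The genuine gap is Step~3.

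The issue is the chaining you invoke. From $\E|h_v^p-h_{v'}^p|\lesssim\eu{v-v'}/\la$ you only get an $L^2$-increment bound $d_2(v,v')\lesssim\sqrt{\eu{v-v'}/\la}$, i.e., the \emph{square root} of the Euclidean metric, together with a trivial $L^\infty$ bound of $1$. Bernstein-type generic chaining then produces $\gamma_2(T,d_2)$ and $\gamma_1(T,d_\infty)$, not the Euclidean $\gamma_2(T,\eu{\cdot})\simeq\ell_*(T)$ you claim. A Dudley computation shows $\gamma_2(T,\sqrt{\eu{\cdot}/\la})$ is not controlled by $\ell_*(T)$ in the way you need (the entropy integral fails to converge at the required rate), and $\gamma_1$ with a constant metric gives nothing useful. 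So the step ``chaining functionals $\gamma_1,\gamma_2$ of $T$ for the Euclidean metric, which \ldots\ are governed by $\ell_*(T)$'' is the unjustified leap.

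The paper avoids chaining the nonlinear indicators $h_v^p$ altogether via a decoupling split that you are missing. For a fixed threshold $\del$,
\[
1_{\sign(\skp{a_i}{x}+\ta_i)\neq\sign(\skp{a_i}{\pi(x)}+\ta_i)}\ \le\ 1_{|\skp{a_i}{\pi(x)}+\ta_i|\le\del}\ +\ 1_{|\skp{a_i}{x-\pi(x)}|>\del}.
\]
The first indicator depends only on the net point (not on $v$), so Chernoff plus a union bound over $N_\eps$ handles it. The second depends only on $v$ and $A$ (not on $\ta$): $\sum_i 1_{|\skp{a_i}{v}|>\del}\le k$ is equivalent to $(Av)^*_k\le\del$, and one bounds $\sup_{v\in T}\|Av\|_{[k]}$ uniformly via the standard subgaussian estimate $\|Av\|_{[k]}\lesssim\ell_*(T)+\rad(T)\sqrt{k\log(em/k)}$. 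Taking $k=\lfloor\del m/\la\rfloor$ and $\rad(T)\le\eps$ yields exactly $m\gtrsim\la\del^{-3}\ell_*(T)^2$ and the constraint $\eps\lesssim\del/\sqrt{\log(e\la/\del)}$. The point is that chaining enters only for the \emph{linear} process $v\mapsto Av$ in the $\|\cdot\|_{[k]}$-norm, where the Euclidean $\gamma_2$ and hence $\ell_*(T)$ appear naturally; your attempt to chain the thresholded indicators directly runs into the wrong metric.
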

This result has a geometric interpretation in terms of the hyperplanes $H_{a_i,\tau_i} = \{ z \in \R^n : \langle a_i,z\rangle +\tau_i=0\}$ that are generated by the rows $a_i$ of $A$: \eqref{eqn:GaussianDithered} means that $\lambda\sqrt{2\pi}$ times the fraction of the hyperplanes that separate $x$ and $y$ approximates the Euclidean distance between $x$ and $y$ up to an additive error $\delta$. The embedding in Theorem~\ref{thm:GaussianDithered} enjoys a very low query time (requiring only the comparison of two bit strings of length $m$) and, as we will discuss below, a near-optimal bit complexity. On the downside, the embedding time is $O(mn)$.\par 
In our second main result, we use a different construction featuring a fast Johnson-Lindenstrauss embedding to bring the embedding time down to $O(n\log n)$, while maintaining a low query time and bit complexity. In this result, however, our reconstruction map instead approximates \emph{squared} Euclidean distances. We consider the random circulant matrix $A=R_I\Gamma_{\xi}D_{\theta}$. Here, $I\subset [n]$ is a fixed subset with $|I|=m$, $\Gamma_{\xi}$ is the circulant matrix generated by a mean-zero random vector $\xi\in \R^n$ with independent, unit variance, $K$-subgaussian entries and $D_{\theta}$ is a diagonal matrix containing independent Rademachers on its diagonal. To state our result, we use the following notation. For $\la>0$, we consider the binary embedding map 
\begin{equation}
\label{eqn:defbigF}
F:\R^n\to \{-1,1\}^{2m}, \quad x\mapsto F(x)= \begin{bmatrix}
\sign(Ax+\ta)\\
\sign(Ax+\ta')
\end{bmatrix}
\end{equation}
where $\ta,\ta'$ are uniformly distributed in $[-\la,\la]^m$ and $A$, $\tau$, and $\tau'$ are independent. We let
\begin{equation*}
\skp{\;}{\,}_{S_m}: \R^{2m}\times \R^{2m}\to \R , \quad \skp{a}{b}_{S_m}:=\skp{a}{S_m b},\quad a,b\in \R^{2m},
\end{equation*}
denote the (indefinite) symmetric bilinear form on $\R^{2m}$ induced by the $2m\times 2m$ matrix 
\begin{equation*}
S_m=
\begin{pmatrix}
0 & \operatorname{Id}_{m} \\ 
\operatorname{Id}_{m} &  0
\end{pmatrix}.
\end{equation*}
\begin{thm}
\label{thm:Gaussian_Circulant_Dithered}
	There exists an absolute constant $c>0$ and a polylogarithmic factor $\alpha$ satisfying $\alpha\leq \log^4(n)+\log(\eta^{-1})$ such that the following holds.
	Let $\D\subset RB^n_2$ for $R\geq 1$ and fix $\eta\in (0,\frac{1}{2}]$. For any $\del\in (0,1)$ and $\lambda\geq R$, if
	$$\lambda\gtrsim \alpha R\sqrt{\log(e\la^2/\del R^2)}, \qquad R^2\geq \del \lambda^2,$$
	and
	\begin{equation}
	\label{eqn:GaussianCirculantDitheredBoundm}
		m\gtrsim  \alpha^2 \del^{-2} \log(\mathcal{N}(\D,r))+ \alpha^2 \lambda^{-2}\del^{-3}\ell_*((\D-\D)\cap r B^n_2)^2, \\
	\end{equation}
	for a parameter $r\leq c\del R$,
	then with probability at least $1-\eta$
	\begin{equation*}\label{eq:fast_inner_product}
	\sup_{x,y\in \D}\left|\frac{\lambda^2}{2m}\skp{F(x)}{F(y)}_{S_m} - \skp{x}{y}\right|\leq \del \la^2
	\end{equation*}
	and, as a consequence,
	\begin{equation*}\label{eq:fast_Euclidean_distance}
	\sup_{x,y\in \D}\left|\frac{\lambda^2}{2m}\skp{F(x)-F(y)}{F(x)-F(y)}_{S_m}-\|x-y\|_2^2\right|\leq 4\del \la^2.
	\end{equation*}
\end{thm}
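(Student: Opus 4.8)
The plan is to freeze the circulant matrix $A$ and run the whole argument over the dithers $\ta,\ta'$, the only remaining source of randomness. I would first introduce the ``good event'' for $A$, of probability at least $1-\tfrac\eta2$, on which: (a) $A$ is an approximate isometry, in the sense that $\big|\tfrac1m\skp{Ax}{Ay}-\skp{x}{y}\big|\lesssim\del\la^2$ for all $x,y\in\D$ and $\tfrac1m\n{Au}_2^2\asymp\n{u}_2^2$, $\tfrac1m\n{Au}_1\lesssim\n{u}_2$ for all $u\in(\D-\D)\cap rB^n_2$; and (b) $\max_{i\le m}|\skp{a_i}{z}|\le\la$ for every $z$ in a fixed minimal $r$-net $\D_r$ of $\D$. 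Item (a) is the Johnson--Lindenstrauss property of circulant random matrices; it holds under \eqref{eqn:GaussianCirculantDitheredBoundm} and $r\le c\del R$ with the stated polylogarithmic loss $\alpha$. For (b), each $\skp{a_i}{z}$ is a sum of independent mean-zero entries, hence $\Const K\n z_2\le\Const KR$-subgaussian, so $\bP(\max_i|\skp{a_i}{z}|>\la)\le 2m\exp(-c\la^2/K^2R^2)$, and a union bound over the $\cN(\D,r)$ net points makes the failure probability $\le\tfrac\eta4$ once $\la\gtrsim\alpha R\sqrt{\log(e\la^2/\del R^2)}$ (it in fact suffices that $\bP(|\skp{a_i}{z}|>\la)\lesssim\del R^2/\la^2$, with $\log m$ and $\log\cN(\D,r)$ absorbed into $\alpha^2\del^{-2}$).

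On the good event I then compute the conditional dither-expectation. The identity $\E_\ta\sign(u+\ta_i)=\la^{-1}\operatorname{clip}_\la(u)$, with $\operatorname{clip}_\la(u)=\max(-\la,\min(\la,u))$, holds because $\ta_i\sim\operatorname{Unif}[-\la,\la]$; since $S_m$ pairs the first block of $F(x)$ with the second block of $F(y)$ and $\ta\perp\ta'$,
\begin{equation*}
\E_{\ta,\ta'}\Big[\frac{\la^2}{2m}\skp{F(x)}{F(y)}_{S_m}\Big]=\frac1m\sum_{i\le m}\operatorname{clip}_\la(\skp{a_i}{x})\operatorname{clip}_\la(\skp{a_i}{y}).
\end{equation*}
Each map $x\mapsto\operatorname{clip}_\la(\skp{a_i}{x})$ is $\n{a_i}_2$-Lipschitz and bounded by $\la$, so replacing $x,y$ by their $\D_r$-representatives $x',y'$ moves the right-hand side by $O(\la)\cdot(\tfrac1m\n{A(x-x')}_1+\tfrac1m\n{A(y-y')}_1)\lesssim\la r\lesssim\del\la^2$, using (a) and $r\le c\del R\le c\del\la$; at $x',y'$ item (b) removes all clippings, leaving $\tfrac1m\skp{Ax'}{Ay'}$, which is within $\Const\del\la^2$ of $\skp{x'}{y'}$ by (a); and $|\skp{x'}{y'}-\skp{x}{y}|\lesssim rR\lesssim\del\la^2$ since $R\le\la$. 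Hence the conditional expectation is uniformly within $\Const\del\la^2$ of $\skp xy$.

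It remains to bound $\sup_{x,y\in\D}\big|\tfrac{\la^2}{2m}\skp{F(x)}{F(y)}_{S_m}-\E_{\ta,\ta'}[\,\cdot\,]\big|$ by $\Const\del\la^2$ with probability $\ge1-\tfrac\eta2$, which is the heart of the proof. The bilinear form is a sum over $i\le m$ of terms bounded by $2$ that are independent given $A$, so for a fixed pair $x',y'\in\D_r$ a Hoeffding bound gives deviation $\lesssim\la^2\sqrt{\log\cN(\D,r)/m}$, and a union bound over the $\le\cN(\D,r)^2$ pairs makes this $\lesssim\del\la^2$ exactly under the first term of \eqref{eqn:GaussianCirculantDitheredBoundm}. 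The gap between $\tfrac{\la^2}{2m}\skp{F(x)}{F(y)}_{S_m}$ and its value at $(x',y')$ cannot be treated by the Lipschitz argument above, the sign map being discontinuous: it equals $O(\tfrac{\la^2}{m})$ times the number of $i$ with $\sign(\skp{a_i}{x}+\ta_i)\ne\sign(\skp{a_i}{x'}+\ta_i)$, and for fixed $v=x-x'$ the flip at $i$ occurs iff $\ta_i$ lands in an interval of length $\le|\skp{a_i}{v}|$, so in expectation the flip count is $\lesssim\tfrac m\la\cdot\tfrac1m\n{Av}_1\lesssim\tfrac m\la\n v_2$. Keeping the flip count small \emph{uniformly} over $v\in(\D-\D)\cap rB^n_2$ is an empirical-process bound whose complexity, via a generic-chaining/Dudley estimate, is governed by the Gaussian width of $\{(\la^{-1}\skp{a_i}{v})_{i\le m}:v\in(\D-\D)\cap rB^n_2\}$, which by (a) is comparable up to $\alpha$ to $\la^{-1}\sqrt m\,\ell_*((\D-\D)\cap rB^n_2)$; tracking scales (with $r\le c\del R\le c\del\la$) turns the requirement into the second term $\alpha^2\la^{-2}\del^{-3}\ell_*((\D-\D)\cap rB^n_2)^2$ of \eqref{eqn:GaussianCirculantDitheredBoundm}. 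Combining the conditional-expectation error with the net and net-to-$\D$ deviations, and rescaling $\del$ by an absolute constant, yields $\sup_{x,y\in\D}\big|\tfrac{\la^2}{2m}\skp{F(x)}{F(y)}_{S_m}-\skp xy\big|\le\del\la^2$ on an event of probability $\ge1-\eta$. The ``consequence'' for squared distances then follows by expanding $\skp{F(x)-F(y)}{F(x)-F(y)}_{S_m}=\skp{F(x)}{F(x)}_{S_m}-2\skp{F(x)}{F(y)}_{S_m}+\skp{F(y)}{F(y)}_{S_m}$, applying the inner-product estimate to each term (the outer two with $y=x$, via $\skp xx=\n x_2^2$), and using $\n{x-y}_2^2=\n x_2^2-2\skp xy+\n y_2^2$; the four invocations produce the factor $4$.

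The main obstacle is this last step: arranging the chaining over the dithers with the correct increment pseudometric, roughly $\rho((x,y),(x',y'))^2\asymp\tfrac1m\sum_i(\skp{a_i}{x-x'}^2+\skp{a_i}{y-y'}^2)$, so that only the \emph{localized} Gaussian complexity $\ell_*((\D-\D)\cap rB^n_2)$ appears and the exponents come out as $\del^{-2}$ and $\la^{-2}\del^{-3}$. A secondary complication not present in the i.i.d.\ Gaussian setting of Theorem~\ref{thm:GaussianDithered} is that the rows $a_i$ of the circulant matrix are dependent, so the good event for $A$ must be assembled from the specialized Johnson--Lindenstrauss and concentration results for structured random matrices in place of Gordon's theorem, which is what injects the polylogarithmic factor $\alpha$ throughout.
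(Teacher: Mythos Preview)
Your overall architecture --- freeze $A$ on a good event, decompose into conditional bias plus dither-deviation on a net plus net-to-$\D$ perturbation, then derive the squared-distance bound by expanding the bilinear form --- matches the paper's four-term split and its derivation of the consequence. Two of the four steps, however, do not go through as you describe.

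\textbf{Step (b) fails under the stated hypotheses.} Claiming $\max_{i\le m}|\skp{a_i}{z}|\le\la$ for every net point $z$ requires, after the union bound, $\la^2/R^2\gtrsim\log(m\,\cN(\D,r)/\eta)$. The theorem's assumption on $\la$ gives only $\la^2/R^2\gtrsim\alpha^2\log(e\la^2/\del R^2)$, and since $R^2\ge\del\la^2$ this is at most of order $\alpha^2\log(1/\del)$. But $\log\cN(\D,r)$ is \emph{not} bounded in terms of $\alpha$ and $\del$; it can be as large as $m\del^2/\alpha^2$ (the largest value compatible with \eqref{eqn:GaussianCirculantDitheredBoundm}), which for large $m\le n$ dwarfs $\alpha^2\log(1/\del)$. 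Your parenthetical fallback does not help either: the indicators $1_{|\skp{a_i}{z}|>\la}$ are dependent across $i$ for circulant $A$, so Chernoff is unavailable, and one would still need to control the contribution of the exceedance indices to the clipped product sum. The paper never asserts that clipping is inactive on the net; it bounds the bias by $\tfrac1m\sum_i|\skp{a_i}{x}||\skp{a_i}{y}|1_{|\skp{a_i}{x}|>\la}$ via Lemma~\ref{lem:exp_mult} and then controls this uniformly through Lemma~\ref{lem:bias_terms}.

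\textbf{The net-to-$\D$ step is not done by chaining over dithers.} You flag this as the main obstacle and propose a Dudley/chaining bound over $\ta$ with increment pseudometric $\sim\tfrac1{\sqrt m}\|A(v-v')\|_2$. But the flip-count process $v\mapsto\sum_i 1_{\{\text{flip at }i\}}$ has no evident subgaussian increment bound in that metric (the indicators are discontinuous in $v$), and you give no mechanism to produce one. The paper sidesteps chaining entirely via the pointwise split
\[
1_{\sign(\skp{a_i}{x}+\ta_i)\ne\sign(\skp{a_i}{x'}+\ta_i)}\le 1_{|\skp{a_i}{x'}+\ta_i|\le\del\la}+1_{|\skp{a_i}{x-x'}|>\del\la}.
\]
The first sum runs over the net only and is handled by Chernoff in $\ta$ plus a union bound; the second is \emph{independent of $\ta$} and is bounded deterministically on the good event for $A$ via $\sup_{z\in(\D-\D)\cap rB_2^n}\|Az\|_{[\del m]}\le\del\la\sqrt{\del m}$.

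Both gaps are closed by the same technical device --- a uniform $k$-norm bound $\sup_{z\in T}\|R_I\Gamma_\xi D_\theta z\|_{[k]}$ for partial circulant matrices (Theorem~\ref{thm:k-norm_circulant} and Corollary~\ref{coro:k-norm_circulant}), built from the MRIP framework of \cite{ORS15} together with the circulant RIP of \cite{KMR14}. This is the structured-matrix replacement for Theorem~\ref{thm:monotone-subgaussian} that you anticipate in your final paragraph, but it enters the argument as a deterministic control on $A$ (applied twice, with $k=\del m$ and $k=m\del R^2/\la^2$) rather than as an input to dither-chaining.
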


\subsection{Near-optimality of the results}

Let us briefly discuss the near-optimality of the bit complexity in our two main results. Alon and Klartag \cite{AlK17} recently investigated the minimal number $B(N,n,\delta)$ of bits required to encode a set of $N$ points in $B_2^n$ into a data structure so that one can recover pairwise squared Euclidean distances up to an additive error $\delta$. They determined $B(N,n,\delta)$ for any $N\geq n\geq 1$ and $\delta^{-0.49}<N$. In particular, they showed that $B(N,n,\delta)$ is equivalent (up to universal constants) to $\delta^{-2}N\log N$ if $n\geq\delta^{-2}\log N$. We adapt their argument to give a lower bound on the number of bits that is required by any oblivious, possibly random binary embedding that is allowed to fail with probability $\eta$. Assuming $n\gtrsim \del^{-2}\log(N/\eta)$ and $N/\eta \gtrsim \del^{-4}\log^2(N/\eta)$, we show that for any random $f: B^n_2\to \{-1,1\}^m$ and $d:\{-1,1\}^m\times \{-1,1\}^m\to \R$ satisfying 	
\begin{equation}
\label{eqn:probEmbedFail}
	\bP\Big(\sup_{x,y\in \D}|d(f(x),f(y))- \eu{x-y}|\leq \del\Big)\geq 1-\eta
\end{equation}
for any $\D\subset B^n_2$ with $|\D|=N$, we must have $m\gtrsim \del^{-2} \log(N/\eta)$. The same result holds if the Euclidean distances in \eqref{eqn:probEmbedFail} are replaced by their squares. This shows that the bit complexity in Theorem~\ref{thm:GaussianDithered} is near-optimal for general finite sets. Indeed, if $\mathcal{D}$ is a finite set in the Euclidean unit ball (i.e., $R=1$ so that we can take $\lambda\sim \log^{1/2}(1/\delta)$), then it is easily seen that 
$$\log(\mathcal{N}(\D,r))\leq \log|\mathcal{D}|, \qquad \ell_*((\D-\D)\cap r B^n_2)^2\lesssim r^2\log|\mathcal{D}|$$
and hence Theorem~\ref{thm:GaussianDithered} implies that \eqref{eqn:GaussianDithered} holds with probability at least $1-\eta$ for any $\D\subset B^n_2$ with $|\D|=N$ if $m\gtrsim \del^{-2}\log(1/\delta)\log(N/\eta)$. Similarly, Theorem~\ref{thm:Gaussian_Circulant_Dithered} is optimal up to logarithmic factors.\par
Under additional restrictions (in particular, that $f$ induces convex quantization cells), a more refined lower bound can be derived which features the complexity parameters in Theorems~\ref{thm:GaussianDithered} and \ref{thm:Gaussian_Circulant_Dithered}. Our main result in this direction, Theorem~\ref{thm:lowerbound:continuous}, shows in particular that $m$ needs to scale at least as 
$$m\sim \delta^{-2}\log\mathcal{N}(\D,\delta) + \del^{-2} \ell_*((\D-\D)\cap \delta B^n_2)^2$$ 
for the conclusions of Theorems~\ref{thm:GaussianDithered} and \ref{thm:Gaussian_Circulant_Dithered} to hold for $R=1$. Hence, our bounds on $m$ in these two results are optimal up to logarithmic factors and possibly the scaling $\delta^{-3}$ in front of the localized Gaussian complexity parameter in \eqref{eqn:GaussianDitheredBoundm} and \eqref{eqn:GaussianCirculantDitheredBoundm}.

\subsection{Related work} Several prior works have analyzed quantized John\-son-Lin\-den\-strauss embeddings, i.e., compositions of Johnson-Lindenstrauss embeddings with quantization schemes stemming from signal processing, to encode data vectors into bit strings. Apart from the interest in these embeddings for the purpose of dimensionality reduction, quantized Johnson-Lindenstrauss embeddings are tightly connected with compressed sensing with quantized measurements, see \cite{Dir19} for further details. We divide our review of the literature according to the quantization scheme used. To facilitate the comparison of the various bit complexity estimates with \eqref{eqn:GaussianDitheredBoundm} and \eqref{eqn:GaussianCirculantDitheredBoundm}, we recall that $\log\mathcal{N}(\D,\eps)\lesssim \eps^{-2}\ell_*(\mathcal{D})^2$ for any $\eps>0$ by Sudakov's inequality and note that trivially $\ell_*((\D-\D)\cap \eps B^n_2)\leq \eps\ell_*(\mathcal{D}_{\operatorname{nc}})$. 
\par 
\vspace{0.1cm}
\textbf{Memoryless one-bit quantization.} The works \cite{DM18,DiS18,JLB13,OyR15,PlV14} studied quantized Johnson-Lindenstrauss embeddings featuring the memoryless one-bit quantizer $\sign(\cdot+\tau)$ featured in our present paper. A beautiful embedding result for general subsets of the unit sphere was derived by Plan and Vershynin \cite{PlV14}. They showed that if $\mathcal{D}\subset S^{n-1}$, $m\gtrsim \delta^{-6} \ell_*(\mathcal{D})^2$, and $A \in \R^{m \times n}$ is a standard Gaussian matrix then, with probability at least $1-2e^{-cm\delta^2}$, for all $x,y\in \mathcal{D}$,
\begin{equation}
\label{eqn:PlVEmbed}
\Big|\frac{1}{m}d_H(\operatorname{sign}(Ax),\operatorname{sign}(Ay)) - d_{S^{n-1}}(x,y)\Big|\leq \del,
\end{equation}
where $d_{S^{n-1}}=\tfrac{1}{\pi}\arccos(\langle x,y\rangle/(\|x\|_2\|y\|_2))$ denotes the angular distance. It was later shown in \cite{OyR15} that \eqref{eqn:PlVEmbed} remains true if $m\gtrsim \delta^{-4} \ell_*(\mathcal{D})^2$. Moreover, for certain `simple' sets (e.g., if $\mathcal{D}$ is the set of unit norm sparse vectors) it is known that $m\gtrsim \delta^{-2} \ell_*(\mathcal{D})^2$ suffices for \eqref{eqn:PlVEmbed} (see \cite{JLB13,OyR15,PlV14} for examples).\par
In \cite{DM18} it was shown that one can embed general sets in a Euclidean ball (instead of the unit sphere) and approximate Euclidean distances (rather than angular distances) by taking uniformly distributed thresholds in the quantizer. Concretely, suppose that $A$ has i.i.d.\ symmetric, isotropic, $K$-subgaussian rows and that the entries of $\tau$ are i.i.d.\ uniformly distributed on $[-\la,\la]$. If $\mathcal{D} \subset R B_2^n$, $\la = c_0 R$ and
\begin{equation}
\label{eqn:bitCompDM18}
m \geq c_1\frac{R \log(eR/\delta)}{\delta^3} \ell_*(\mathcal{D})^2,
\end{equation}
then with probability at least $1-8\exp(-c_2m\delta/R)$, for any $x,y$ in the convex hull $\operatorname{conv}(\mathcal{D})$ of $\mathcal{D}$ satisfying $\|x-y\|_2 \geq \delta$, one has
\begin{equation} \label{eq:isomorphic}
c_3\frac{ \|x-y\|_2}{R} \leq \frac{1}{m}d_H(\operatorname{sign}(Ax+\tau),\operatorname{sign}(Ay+\tau)) \leq c_4\sqrt{\log(eR/\delta)} \cdot \frac{\|x-y\|_2}{R},
\end{equation}
where $c_0,\ldots,c_4$ depend only on $K$. In contrast to \eqref{eqn:GaussianDithered} and \eqref{eqn:PlVEmbed}, which are near-isometric bounds, \eqref{eq:isomorphic} is an isomorphic bound: Euclidean distances are only approximated up to a constant/logarithmic factor. Although one cannot expect a near-isometric estimate to hold for general subgaussian matrices, Theorem~\ref{thm:GaussianDithered} shows that this is possible if $A$ is Gaussian, under a near-optimal bit complexity.
\par     
\vspace{0.1cm}
\textbf{Uniform scalar quantization.} The works \cite{Jac15,Jac17,JaCa17} analyzed quantized Johnson-Lindenstrauss embeddings related to the uniform scalar quantizer $Q_{\rho,\tau}:\R^m\rightarrow (\rho\mathbb{Z})^m$ defined by 
$$Q_{\rho,\tau}(z) = \big(\rho\lfloor (z_i+\tau_i)/\rho\rfloor\big)_{i=1}^m.$$
Geometrically, $Q_{\rho,0}$ divides $\R^m$ into half-open cubes with side lengths equal to $\rho$ and maps any vector $z\in \R^m$ to the corner of the cube in which it is located. In \cite{JaCa17} it is shown that if $\tau$ is uniformly distributed in $[-\rho,\rho]^m$, $m\gtrsim \epsilon^{-2}\log \mathcal{N}(\mathcal{D},\rho\epsilon^2)$ and 
\begin{equation}
\label{eqn:RIP12def}
(1-\nu)\|z\|_2\leq \frac{1}{m}\|Az\|_1\leq (1+\nu)\|z\|_2, \qquad \text{for all } z\in \mathcal{D}-\mathcal{D},
\end{equation}
then with probability at least $1-C e^{-cm\epsilon^2},$ 
\begin{equation}
\label{eqn:QJL}
\Big|\frac{1}{m}\|Q_{\rho,\tau}(Ax)-Q_{\rho,\tau}(Ay)\|_1-\|x-y\|_2\Big|\leq \nu\|x-y\|_2 + c\rho\epsilon
\end{equation}
for all $x,y\in \mathcal{D}$. A special case of a result of Schechtman \cite{Sch06} shows that if $B$ is standard Gaussian and $A=\sqrt{\frac{\pi}{2}}B$, then $A$ satisfies \eqref{eqn:RIP12def} with probability at least $1-2e^{-m\nu^2/2}$ if $m\gtrsim \nu^{-2} \ell_*(\mathcal{D}_{\text{nc}})^2$ (see also \cite[Lemma 2.1]{PlV14} for a short proof of this special case). Note that in these results $m$ does not represent the bit complexity: it will depend on the $\ell_{\infty}$-diameter of $A\mathcal{D}$ and the choice of $\rho$. If $\mathcal{D}\subset B_2^n$, then the bit complexity is with high probability $O(m\log(m/\rho))$.
\par 
\vspace{0.1cm}
\textbf{Noise shaping methods.} The recent work \cite{HuS20} considered the use of noise shaping methods, a family of adaptive quantization methods that include the popular sigma-delta quantization method. Let $Q$ be a stable one-bit distributed noise-shaping quantizer and let $V\in \R^{p\times m}$ be the normalized condensation matrix associated with this scheme (see \cite{HuS20} for precise definitions and further details). For the purpose of this discussion, it is sufficient to know that $Q(z)$ and $Vz$ can be computed in time $O(m)$. Let $A=R_I\Gamma_{\xi}D_{\theta}$ be the subgaussian circulant matrix featured in Theorem~\ref{thm:Gaussian_Circulant_Dithered}, let $D_{\sigma}$ be a diagonal matrix with independent Rademachers on its diagonal that is independent of $A$ and let $g(z)=Q(8D_{\sigma}Az/9)$. The main result \cite[Theorem 5.8]{HuS20} states the following: if $\mathcal{D}\subset B_1^n$ and 
$$m\geq p\gtrsim \nu^{-2}\log^2(1/\eta)\log^4(n)\max\Big\{1,\frac{\ell_*(\mathcal{D}-\mathcal{D})^2}{\operatorname{rad}(\mathcal{D}-\mathcal{D})^2}\Big\}$$
then with probability at least $1-\eta$,
$$|\|Vg(x)-Vg(y)\|_2-\|x-y\|_2|\leq \max\{\sqrt{\nu},\nu\} \operatorname{rad}(\mathcal{D}-\mathcal{D}) + c^{-m/p}$$
for all $x,y\in \D$.
Assuming the nontrivial case where $\operatorname{rad}(\mathcal{D}-\mathcal{D})\geq \delta$, we can set $\nu=\delta^2/ \operatorname{rad}(\mathcal{D}-\mathcal{D})^2\leq 1$ and take $m/p\sim \log(1/\delta)$ to find that
$$|\|Vg(x)-Vg(y)\|_2-\|x-y\|_2|\leq\delta$$
when
$$m\geq \delta^{-4}\log(1/\delta)\log^2(1/\eta)\log^4(n)\max\Big\{\operatorname{rad}(\mathcal{D}-\mathcal{D})^4,\operatorname{rad}(\mathcal{D}-\mathcal{D})^2\ell_*(\mathcal{D}-\mathcal{D})^2\Big\}.$$

\subsection{Organization}
Our paper is structured as follows: in Section~\ref{sec:proofThmMain1} we prove Theorem~\ref{thm:GaussianDithered}, Section~\ref{sec:proofThmMain2} is devoted to the proof of Theorem~\ref{thm:Gaussian_Circulant_Dithered}, and in the final section we derive lower bounds on the best possible bit complexity. 

\subsection{Notation}

For an integer $n\in \N$, we set $[n]=\{1,\ldots,n\}$. For any $t\in \R$, $\lfloor t\rfloor$ is the largest integer that is smaller than $t$. For a given set $A$ we denote its cardinality by $|A|$. For $p\geq 1$, the $\ell_p$-norm of a vector $x\in \R^n$ is denoted by
$\|x\|_p$ and the associated unit ball is $B^n_p$.
Given $x\in \R^n$ we define $D_x\in \R^{n\ti n}$ to be the diagonal matrix obtained from $x$, i.e., $(D_x)_{i,j}=\del_{i,j}x_i$. $\Gamma_x$ denotes the $n\times n$ circulant matrix generated by $x\in \R^n$, that is, $(\Gamma_{x})_{i,j}=x_{j-i \, \text{mod} \, n}$. A random variable $X$ is called $K$-subgaussian if $\|X\|_{\psi_2}=\inf\{t>0\; : \; \E\exp(X^2/t^2)\leq 2\}\leq  K$.  More generally, a random vector $X\in \R^n$ is $K$-subgaussian if 
$\|X\|_{\psi_2}=\sup_{x\in B^n_2}\|\skp{X}{x}\|_{\psi_2}\leq K$. A random variable is called Rademacher if it takes values $1$ and $-1$ with equal probability. A random vector is called Rademacher if it has independent Rademacher entries. For $k\in [n]$ and $x\in \R^n$, we denote the $\ell_2$-norm of the $k$ in magnitude largest entries of $x$ by
$$\|x\|_{[k]} = \Big(\sum_{i=1}^k (x^*_i)^2\Big)^{1/2} = \sup_{|I|\leq k} \Big(\sum_{i\in I} x_i^2\Big)^{1/2},$$
where $x^{\ast}\in \R^n$ denotes the nonincreasing rearrangement of $(|x_i|)_{i\in [n]}$.
The Hamming distance between $a,b\in \{-1,1\}^m$ is given by $\hd{a}{b}=|\{i\in [m]\; : \; a_i\neq b_i\}|$.
We use $C,c>0$ to denote constants that may depend only on the subgaussian constants involved. Their value might change from line to line.
We write $a\lesssim b$ if $a\leq Cb$ for a constant $C>0$ that only depends on subgaussian constants, and we use the abbreviation $a\simeq b$ if both $a \lesssim b$ and $b\lesssim a$ hold (with possibly different implicit constants).

\section{Proof of Theorem~\ref{thm:GaussianDithered}}
\label{sec:proofThmMain1}

Throughout this subsection, we consider an $m\times n$ standard Gaussian matrix $A$, and for a parameter
$\la>0$ we let $\ta\in [-\la,\la]^m$ denote a uniformly distributed random vector which is independent of $A$. 
Further, we consider the map 
\begin{equation}
\label{eqn:GaussBinEmdDef}
f: \R^n \to \{-1,1\}^m, \qquad x\mapsto f(x)=\sign(Ax+\ta),
\end{equation}
where the sign-function is applied component-wise. Finally, we use shorthand-notation for a rescaled version of the Hamming distance on $\{-1,1\}^m$, namely 
\begin{equation*}
d_\la(b,c)=\frac{\sqrt{2\pi}\la}{m}\hd{b}{c}.
\end{equation*}
To prepare for the proof of Theorem~\ref{thm:GaussianDithered} we collect several useful observations. 
\begin{lemma}\label{lem:exp} 
Fix $\lambda>0$ and let $\tae$ be uniformly distributed in $[-\la, \la]$. Define the function 
\begin{equation*}
\phi_\la(s)=(|s|-\la)1_{|s|\geq \la} \qquad \text{for } s\in \R.
\end{equation*}
Then for $a,b\in \R$,
\begin{align*}
\big|2\la \bP(\sgn(a+\tae)\neq \sgn(b+\tae))- |a-b|\big|\leq \phi_\la(a)+\phi_\la(b).
\end{align*}
\end{lemma}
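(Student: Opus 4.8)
The plan is to reduce the claim to an elementary fact about the uniform law on $[-\la,\la]$, after which the reverse triangle inequality finishes it.

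\emph{Step 1: identify the disagreement event.} Since $\tae$ has a density, $\bP(a+\tae=0)=\bP(b+\tae=0)=0$, so up to a null set the event $\{\sgn(a+\tae)\neq\sgn(b+\tae)\}$ is exactly the event that $\tae$ lies strictly between $-a$ and $-b$, equivalently that $-\tae$ lies in the open interval with endpoints $a$ and $b$. Writing $I_{a,b}$ for the closed interval with endpoints $a$ and $b$ (so $|I_{a,b}|=|a-b|$, with $|\cdot|$ Lebesgue measure), this gives
\[
2\la\,\bP\big(\sgn(a+\tae)\neq\sgn(b+\tae)\big)=\big|\,I_{a,b}\cap[-\la,\la]\,\big|.
\]

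\emph{Step 2: rewrite the truncated length via clipping.} Introduce the clipping map $\pi_\la\colon\R\to[-\la,\la]$, $\pi_\la(t)=\max\{-\la,\min\{\la,t\}\}$. It is nondecreasing, odd, $1$-Lipschitz, and a direct check gives $|t-\pi_\la(t)|=\plus{|t|-\la}=\phi_\la(t)$ for all $t\in\R$. Since $\pi_\la$ is nondecreasing, clipping the interval $I_{a,b}$ to $[-\la,\la]$ produces precisely the interval with endpoints $\pi_\la(a)$ and $\pi_\la(b)$; inspecting the possible configurations (both endpoints of $I_{a,b}$ inside $[-\la,\la]$; one outside; both outside on the same side; both outside on opposite sides) shows in every case that
\[
\big|\,I_{a,b}\cap[-\la,\la]\,\big|=|\pi_\la(a)-\pi_\la(b)|,
\]
and hence $2\la\,\bP(\sgn(a+\tae)\neq\sgn(b+\tae))=|\pi_\la(a)-\pi_\la(b)|$.

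\emph{Step 3: conclude.} By the reverse triangle inequality followed by the ordinary one,
\[
\Big|\,|\pi_\la(a)-\pi_\la(b)|-|a-b|\,\Big|
\le\big|(\pi_\la(a)-a)-(\pi_\la(b)-b)\big|
\le|\pi_\la(a)-a|+|\pi_\la(b)-b|
=\phi_\la(a)+\phi_\la(b),
\]
which is the assertion. The only step carrying any content is the geometric identity in Step 2; it is a short finite case check (equivalently: clipping a segment onto the convex set $[-\la,\la]$ coincides with clipping its two endpoints, and a nondecreasing map sends an interval to an interval of length equal to the difference of the clipped endpoints). Everything else is routine, so I do not anticipate a genuine obstacle.
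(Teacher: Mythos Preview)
Your proof is correct, and it is genuinely different from the paper's argument. The paper proceeds by a direct case analysis: assuming $a<b$, it splits according to whether $a<-\la$, $|a|\le\la$, or $a>\la$, with further sub-cases on $b$, and in each case computes both sides explicitly. Your route is more conceptual: you identify $2\la\,\bP(\sgn(a+\tae)\neq\sgn(b+\tae))$ with the length $|I_{a,b}\cap[-\la,\la]|$, rewrite that as $|\pi_\la(a)-\pi_\la(b)|$ via the clipping map, and then the reverse triangle inequality together with $|t-\pi_\la(t)|=\phi_\la(t)$ gives the bound in one line. The residual case check in your Step~2 is shorter than the paper's full analysis and could even be suppressed by noting that $\pi_\la$ is the metric projection onto the convex set $[-\la,\la]$, so that for $a\le b$ one has $I_{a,b}\cap[-\la,\la]=[\pi_\la(a),\pi_\la(b)]$ whenever the intersection is nonempty, and both sides vanish otherwise. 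What you gain is a cleaner structure and a proof that makes the role of $\phi_\la$ transparent (it is exactly the clipping error); what the paper's version buys is a completely explicit calculation requiring no auxiliary observation.
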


\begin{proof}
We may assume that $a<b$. If $|a|\leq \la$ and $|b|\leq \la$ then $\phi_\la(a)+\phi_\la(b)=0$. Hence, in this case we have to show that
\begin{equation*}
2\la\bP(\sgn(a+\tae)\neq \sgn(b+\tae))=|a-b|. 
\end{equation*}
To see this, note that
\begin{equation*}
\bP(\sign(a+\tae)\neq \sign(b+\tae))=\bP(a<-\tae\leq b)=(b-a)/2\la,
\end{equation*}
as $-\tae$ is uniformly distributed in $[-\la,\la]$.
Now assume that $|a|>\la$ or $|b|>\la$. We 
distinguish three cases according to the value of $a$:
\begin{itemize}
\item [1)] \underline{$a<-\la$.} We distinguish three sub-cases according to the value of $b$:\\
If $b<-\la$ then $2\la\bP(\sgn(a+\tae)\neq \sgn(b+\tae))=0$, and 
\begin{align*}
|2\la\bP(\sgn(a+\tae)\neq \sgn(b+\tae))- |a-b||=|a-b|&\leq |a+\la|+|b+\la|\\
&=-(a+\la)-(b+\la)\\
&=\phi_\la(a)+\phi_\la(b).
\end{align*}
If $|b|\le \la$ then $\sign(a+\tau)=-1$ implies 
$$2\la\bP(\sgn(a+\tae)\neq \sgn(b+\tae))=2\la\bP(b\geq -\tae)=(b-(-\la))=b+\la$$ 
and so
\begin{align*}
|2\la\bP(\sgn(a+\tae)\neq \sgn(b+\tae))- |a-b||&=| b+\la -|a-b| |\\
&=| b+\la -(b-a) |\\
&=\phi_\la(a)\leq \phi_\la(a)+\phi_\la(b).
\end{align*}
If $\la<b$ then $2\la\bP(\sgn(a+\tae)\neq \sgn(b+\tae))=2\la$, and 
\begin{align*}
|2\la\bP(\sgn(a+\tae)\neq \sgn(b+\tae))- |a-b||&=|2\la - (b-a)|\\
&=(b-a)-2\la\\
&=-a-\la + b-\la\\
&=\phi_\la(a)+\phi_\la(b).
\end{align*}
\item [2)] \underline{$|a|\leq \la$.} Then $\la<b$ and 
$$2\la\bP(\sgn(a+\tae)\neq \sgn(b+\tae))=2\la\bP(a<-\tae) = \la-a.$$
Hence
\begin{align*}
|2\la\bP(\sgn(a+\tae)\neq \sgn(b+\tae))- |a-b||&=| \la-a  - (b-a)  |\\
&=|\la-b|=b-\la\\
&=|b|-\la=\phi_\la(b)\leq\phi_\la(a)+\phi_\la(b).
\end{align*}
\item [3)] \underline{$\la<a$.} Then $\la<b$ and 
$2\la\bP(\sgn(a+\tae)\neq \sgn(b+\tae))=0$. Hence
\begin{align*}
|2\la\bP(\sgn(a+\tae)\neq \sgn(b+\tae))- |a-b||=|a-b|&\leq |a-\la|+ |b-\la|\\
&=(|a|-\la)+ (|b|-\la)\\
&=\phi_\la(a)+\phi_\la(b).
\end{align*}
\end{itemize} 
\end{proof}
Our next lemma shows that $d_{\lambda}(f(x),f(y))$ is in expectation a good proxy for the distance between $x$ and $y$. 
\begin{lemma}\label{lem:bias_reduc}
Let $f:\R^n \to \{-1,1\}^m$ be the map defined in \eqref{eqn:GaussBinEmdDef}.
Then, for any $x,y\in \R^n$,
\begin{equation*}
\big|\E d_{\lambda}(f(x),f(y)) - \|x-y\|_2\big|\leq 2 r_{x,y}\exp(-\la^2/2r_{x,y}^2),
\end{equation*}
where $r_{x,y}=\max\{\eu{x},\eu{y}\}$.
\end{lemma}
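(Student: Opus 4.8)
\emph{Proof plan.} The plan is to integrate out the threshold and the Gaussian matrix and reduce everything to a one-dimensional computation to which Lemma~\ref{lem:exp} applies. Since $\ta$ has i.i.d.\ entries independent of $A$ and the rows $a_1,\dots,a_m$ of $A$ are i.i.d.\ standard Gaussian vectors, linearity of expectation gives
$$\E d_{\la}(f(x),f(y)) = \sqrt{2\pi}\,\la\;\bP\big(\sgn(\skp{a_1}{x}+\ta_1)\neq\sgn(\skp{a_1}{y}+\ta_1)\big),$$
so it is enough to estimate this single probability. I would condition on the row $a_1$: then $a:=\skp{a_1}{x}$ and $b:=\skp{a_1}{y}$ are fixed reals while $\ta_1$ is still uniform on $[-\la,\la]$, and Lemma~\ref{lem:exp} yields the pointwise bound
$$\big|2\la\,\bP\big(\sgn(a+\ta_1)\neq\sgn(b+\ta_1)\mid a_1\big)-|a-b|\big|\leq \phi_\la(a)+\phi_\la(b).$$

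Next I would take the expectation over $a_1$ and use the triangle inequality. After inserting the first display, the first term becomes $\sqrt{2/\pi}\,\E d_\la(f(x),f(y))$; the middle term becomes $\E|\skp{a_1}{x-y}|=\sqrt{2/\pi}\,\eu{x-y}$, since $\skp{a_1}{x-y}\sim\cN(0,\eu{x-y}^2)$; and there remain the two error terms $\E\phi_\la(\skp{a_1}{x})$ and $\E\phi_\la(\skp{a_1}{y})$. Writing $\skp{a_1}{x}$ as $\eu{x}\,g$ in distribution for a standard normal $g$ and bounding $\phi_\la(s)\leq |s|\,1_{|s|\geq\la}$, the elementary identity $\int_t^\infty s\,e^{-s^2/2}\,ds=e^{-t^2/2}$ gives $\E\phi_\la(\skp{a_1}{x})\leq \sqrt{2/\pi}\,\eu{x}\,\exp(-\la^2/2\eu{x}^2)$, and similarly for $y$. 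Collecting the three pieces and multiplying through by $\sqrt{\pi/2}$ yields
$$\big|\E d_\la(f(x),f(y))-\eu{x-y}\big|\leq \eu{x}\,e^{-\la^2/2\eu{x}^2}+\eu{y}\,e^{-\la^2/2\eu{y}^2}.$$

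Finally I would observe that $s\mapsto s\,e^{-\la^2/2s^2}$ is increasing on $(0,\infty)$ (its derivative is $e^{-\la^2/2s^2}(1+\la^2/s^2)>0$), so each of the two summands is at most $r_{x,y}\,e^{-\la^2/2r_{x,y}^2}$ with $r_{x,y}=\max\{\eu{x},\eu{y}\}$, which gives the claimed estimate. I do not expect any genuine obstacle here: the argument is an exact expectation computation obtained by conditioning on the rows of $A$, and the only mildly technical steps are the Gaussian tail bound for $\E\phi_\la$ (where one simply discards the non-positive $-\la$ contribution and uses the identity above) and the monotonicity observation used to replace $\eu{x}$ and $\eu{y}$ by $r_{x,y}$.
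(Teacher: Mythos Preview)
Your proposal is correct and follows essentially the same route as the paper: condition on the Gaussian row, apply Lemma~\ref{lem:exp} pointwise, take expectations, and bound $\E\phi_\la(\skp{a_1}{z})$ via the Gaussian tail identity $\int_t^\infty s\,e^{-s^2/2}\,ds=e^{-t^2/2}$. Your explicit monotonicity observation for $s\mapsto s\,e^{-\la^2/2s^2}$ is a welcome addition that the paper leaves implicit when passing from the separate $x$- and $y$-terms to the claimed bound in $r_{x,y}$.
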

\begin{proof}
Since every row $a_i$ of $A$ is standard Gaussian, 
$$\sqrt{\frac{\pi}{2}} \E |\skp{a_i}{x-y}|=\|x-y\|_2$$
for all $1\leq i\leq m$ and hence
\begin{equation*}
\E \sqrt{\frac{\pi}{2}} \frac{1}{m}\|A(x-y)\|_1=\frac{1}{m}\sum_{i=1}^m \sqrt{\frac{\pi}{2}} \E |\skp{a_i}{x-y}|=\eu{x-y}.
\end{equation*}
Set $\phi_\la(s)=(|s|-\la)1_{|s|\geq \la}$.
Using the independence of $A$ and $\tau$, Jensen's inequality and the triangle inequality, we obtain 
\begin{align*}
&\big|\E d_{\lambda}(f(x),f(y)) - \|x-y\|_2\big|\\
&\leq \sqrt{\frac{\pi}{2}} \E_A \Big(\frac{1}{m}\sum_{i=1}^m \big|2\la  \bP_\ta \big(\sign(\skp{a_i}{x}+\ta_i)\neq 
\sign(\skp{a_i}{y}+\ta_i)\big) - |\skp{a_i}{x-y}|\big| \Big)\\
&\leq \sqrt{\frac{\pi}{2}} \E_A \Big( \frac{1}{m}\sum_{i=1}^m [\phi_\la(\skp{a_i}{x})+ \phi_\la(\skp{a_i}{y})]\Big),
\end{align*}
where the last inequality follows from Lemma~\ref{lem:exp}. 
Again using that every vector $a_i$ is distributed as a standard Gaussian random vector $g$, we obtain
\begin{equation}
\label{eqn:biasRedEstimate}
|\E d_{\lambda}(f(x),f(y)) - \|x-y\|_2| \leq \sqrt{\frac{\pi}{2}} \E(\phi_\la(\skp{g}{x})+ \phi_\la(\skp{g}{y})).
\end{equation}
If $z= 0$, then $\E(\phi_\la(\skp{g}{z})=\phi_\la(0)=0$. Otherwise,
\begin{align}\label{eq:bound_expectation_gaussian}
\E \phi_\la(\skp{g}{z})\leq \E [|\skp{g}{z}|1_{|\skp{g}{z}|\geq \la}]
&= \sqrt{\frac{2}{\pi}}\eu{z}\int_{\la/\eu{z}}^{\infty} t \exp(-t^2/2)\,dt\\ \nonumber
&=\sqrt{\frac{2}{\pi}}\eu{z}\exp(-\lambda^2/2\eu{z}^2),
\end{align}
Using this with $z=x$ and $z=y$ in \eqref{eqn:biasRedEstimate} yields the result.
\end{proof}
\begin{thm} \label{thm:monotone-subgaussian}
There exist constants $c, C>0$ that only depend on $K$ such that the following holds. 
Let $A$ be an $m\times n$ matrix whose rows $a_1, \ldots, a_m$ are independent copies of an isotropic $K$-subgaussian random vector $a$. Let $T \subset \R^n$ and $k\in [m]$.
For any $u \geq 1$ with probability at least $1-2\exp(-c u^2 k \log(em/k))$,
$$
\sup_{z \in T} \|Az\|_{[k]} \leq C \left(\ell_*(T) + u \rad(T) \sqrt{k \log(em/k)} \right).
$$
\end{thm}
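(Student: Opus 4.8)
The plan is to realize $\sup_{z\in T}\|Az\|_{[k]}$ as the supremum of a subgaussian process and to apply generic chaining. Let $U_{m,k}=\{w\in\R^m:\eu{w}\leq1,\ \|w\|_0\leq k\}$ and $\Xi=T\times U_{m,k}$. Since $\|v\|_{[k]}=\sup_{w\in U_{m,k}}\skp{v}{w}$ for every $v\in\R^m$, we have
$$\sup_{z\in T}\|Az\|_{[k]}=\sup_{(z,w)\in\Xi}X_{z,w},\qquad X_{z,w}:=\skp{Az}{w}=\sum_{i=1}^m w_i\skp{a_i}{z},$$
and for any fixed $z_0\in T$ the base point $(z_0,0)\in\Xi$ satisfies $X_{z_0,0}=0$. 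As the rows $a_i$ are independent, mean zero and $K$-subgaussian, the increment $X_{z,w}-X_{z',w'}=\sum_{i=1}^m\skp{a_i}{w_iz-w_i'z'}$ is a sum of independent centered $\psi_2$ variables, so
$$\subg{X_{z,w}-X_{z',w'}}^2\lesssim K^2\sum_{i=1}^m\eu{w_iz-w_i'z'}^2\lesssim K^2\big(\eu{z-z'}^2+\rad(T)^2\eu{w-w'}^2\big),$$
using $\eu{w}\leq1$ and $\eu{z'}\leq\rad(T)$ in the last step. Hence $(X_{z,w})_{(z,w)\in\Xi}$ has subgaussian increments with respect to the product metric $\rho\big((z,w),(z',w')\big)=CK\big(\eu{z-z'}+\rad(T)\,\eu{w-w'}\big)$.

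Next I would invoke the standard deviation inequality for processes with subgaussian increments (generic chaining): for every $v\geq1$, with probability at least $1-2\exp(-cv^2)$,
$$\sup_{z\in T}\|Az\|_{[k]}=\sup_{(z,w)\in\Xi}\big(X_{z,w}-X_{z_0,0}\big)\leq C\big(\gamma_2(\Xi,\rho)+v\,\diam_\rho(\Xi)\big).$$
Here $\diam_\rho(\Xi)\lesssim K\rad(T)$ since $U_{m,k}\subset B^m_2$. For the $\gamma_2$-term, subadditivity of $\gamma_2$ under the ``rectangular'' product metric gives $\gamma_2(\Xi,\rho)\lesssim K\big(\gamma_2(T,\eu{\cdot})+\rad(T)\,\gamma_2(U_{m,k},\eu{\cdot})\big)$, and Talagrand's majorizing measure theorem yields $\gamma_2(T,\eu{\cdot})\lesssim\ell_*(T)$ and $\gamma_2(U_{m,k},\eu{\cdot})\lesssim\E\sup_{w\in U_{m,k}}\skp{g}{w}=\E\|g\|_{[k]}$, where $g$ is a standard Gaussian vector in $\R^m$. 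The remaining estimate $\E\|g\|_{[k]}\lesssim\sqrt{k\log(em/k)}$ follows from a union bound over the $\binom{m}{k}\leq(em/k)^k$ coordinate subsets of size $k$ together with the concentration of $\|g_I\|_2$ around $\sqrt{k}$. Combining the pieces, $\gamma_2(\Xi,\rho)\lesssim K\big(\ell_*(T)+\rad(T)\sqrt{k\log(em/k)}\big)$.

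Finally I would choose $v=u\sqrt{k\log(em/k)}$, which satisfies $v\geq1$ because $u\geq1$, $k\geq1$ and $m\geq k$. The exceptional probability then becomes $2\exp(-cu^2k\log(em/k))$, while $v\,\diam_\rho(\Xi)\lesssim Ku\,\rad(T)\sqrt{k\log(em/k)}$; adding this to the bound on $\gamma_2(\Xi,\rho)$ and absorbing its $\rad(T)\sqrt{k\log(em/k)}$ summand into the (larger, since $u\geq1$) deviation term gives $\sup_{z\in T}\|Az\|_{[k]}\lesssim K\big(\ell_*(T)+u\,\rad(T)\sqrt{k\log(em/k)}\big)$, which is the claim. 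The step requiring the most care is precisely this rescaling: generic chaining only produces a deviation term of order $v\cdot\diam_\rho(\Xi)$ at confidence level $1-2\exp(-cv^2)$, so to reach the much smaller exceptional probability one must push $v$ all the way up to $u\sqrt{k\log(em/k)}$ and check that this is still consistent with the size of the $\gamma_2$-term and with the admissible range $v\geq1$; a secondary technical point is that passing from $\gamma_2(T,\eu{\cdot})$ and $\gamma_2(U_{m,k},\eu{\cdot})$ to the Gaussian widths $\ell_*(T)$ and $\E\|g\|_{[k]}$ uses the majorizing measure theorem.
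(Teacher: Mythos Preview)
Your argument is correct and matches the approach the paper indicates: the authors explicitly omit the proof, saying only that it ``is based on generic chaining (see e.g.\ \cite[Theorem 3.2]{Dir15}) combined with Talagrand's majorizing measures theorem,'' which is precisely the route you take---dualizing $\|\cdot\|_{[k]}$ over $k$-sparse unit vectors, establishing subgaussian increments for the resulting process on $T\times U_{m,k}$, applying the generic chaining tail bound, and converting $\gamma_2$ to Gaussian widths via majorizing measures. Your handling of the deviation parameter $v=u\sqrt{k\log(em/k)}$ is exactly what is needed to reach the stated exceptional probability.
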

We omit the proof of Theorem \ref{thm:monotone-subgaussian}, which is standard. It is based on generic chaining (see e.g.\ \cite[Theorem 3.2]{Dir15}) combined with Talagrand's majorizing measures theorem \cite{Tal14}.\par
We are now ready to prove the main result of this section.\par
\begin{proof}[Proof of Theorem~\ref{thm:GaussianDithered}]
Let $N_\eps \subset \D$ be a minimal $\eps$-net of $\D$ with respect to the Euclidean metric. 
For any $x\in \D$ let $\pi(x)\in \text{argmin}_{z\in N_\eps}\|x-z\|_2$. By the triangle inequality, 
\begin{align}\label{eq:four_summands}
\big|d_{\lambda}(f(x),f(y))-\|x-y\|_2\big| \nonumber
&\leq \big|d_{\lambda}(f(x),f(y))- d_{\lambda}(f(\pi(x)),f(\pi(y)))\big|\\ \nonumber
&\qquad+\big|d_{\lambda}(f(\pi(x)),f(\pi(y))) - \E d_{\lambda}(f(\pi(x)),f(\pi(y)))\big|\\ \nonumber
&\qquad +\big|\E d_{\lambda}(f(\pi(x)),f(\pi(y))) - \|\pi(x)-\pi(y)\|_2\big|\\
&\qquad+\big|\|\pi(x)-\pi(y)\|_2-\|x-y\|_2\big|.
\end{align}
Clearly, the last summand is bounded by $2\eps$. Since $\eu{\pi(x)}\leq R$ for every $x \in \D$, 
Lemma~\ref{lem:bias_reduc} implies
\begin{equation*}
\sup_{x,y\in \D}\big|\E d_{\lambda}(f(\pi(x)),f(\pi(y))) - \|\pi(x)-\pi(y)\|_2\big|\leq 2R\exp(-\la^2/2R^2)\leq \del,
\end{equation*}
where the last inequality follows from $\la\gtrsim R\sqrt{\log(R/\del)}$.
In order to bound the second summand on the right hand side of \eqref{eq:four_summands}, we invoke Hoeffding's inequality which implies that for every $x,y\in \D$ and $\del>0$,
\begin{equation*}
\bP\big(\big|d_{\lambda}(f(\pi(x)),f(\pi(y))) - \E d_{\lambda}(f(\pi(x)),f(\pi(y)))\big|\geq \del\big)\leq 2 \exp(-c\del^2 m/\la^2),
\end{equation*}
where $c>0$ denotes an absolute constant. If $m\gtrsim \del^{-2}\la^2 \log\mathcal{N}(\D,\eps)$ then 
the union bound yields  
\begin{equation*}
\sup_{x,y\in \D}\big|d_{\lambda}(f(\pi(x)),f(\pi(y))) - \E d_{\lambda}(f(\pi(x)),f(\pi(y)))\big|\leq \del
\end{equation*}
with probability at least $1-2 \exp(-c\del^2 m/\la^2)$. 
To bound the first summand on the right hand side of \eqref{eq:four_summands} we first apply the triangle inequality: 
\begin{align*}
&\big|d_{\lambda}(f(x),f(y))- d_{\lambda}(f(\pi(x)),f(\pi(y)))\big|\\
&\leq\frac{\sqrt{2\pi}\la}{m}\sum_{i=1}^m \big|1_{\sign(\skp{a_i}{x}+\ta_i)\neq \sign(\skp{a_i}{y}+\ta_i)} -1_{\sign(\skp{a_i}{\pi(x)}+\ta_i)\neq \sign(\skp{a_i}{\pi(y)}+\ta_i)}\big|.
\end{align*}
Observe that if  
$\sign(\skp{a_i}{x}+\ta_i)=\sign(\skp{a_i}{\pi(x)}+\ta_i)$ 
and 
$\sign(\skp{a_i}{y}+\ta_i)=\sign(\skp{a_i}{\pi(y)}+\ta_i)$,
then 
$$1_{\sign(\skp{a_i}{x}+\ta_i)\neq \sign(\skp{a_i}{y}+\ta_i)} -1_{\sign(\skp{a_i}{\pi(x)}+\ta_i)\neq \sign(\skp{a_i}{\pi(y)}+\ta_i)}=0.$$
Consequently, 
\begin{align}\label{eq:applyDM18}
&\sup_{x,y\in \D}\big|d_{\lambda}(f(x),f(y))- d_{\lambda}(f(\pi(x)),f(\pi(y)))\big|\\ \nonumber
&\qquad \leq 2\sup_{x\in \D} \frac{\sqrt{2\pi}\la}{m}\sum_{i=1}^m 1_{\sign(\skp{a_i}{x}+\ta_i)\neq \sign(\skp{a_i}{\pi(x)}+\ta_i)}.
\end{align}
In addition,
\begin{equation}
\label{eqn:indZeroSep}
1_{\sign(\skp{a_i}{x}+\ta_i)\neq \sign(\skp{a_i}{\pi(x)}+\ta_i)}=0
\end{equation}
on 
$$A_{\delta} = \{ |\langle a_i,\pi(x)\rangle+\tau_i|>\delta\geq |\langle a_i,x-\pi(x)\rangle|\}.$$
As a consequence, 
\begin{equation}
\label{eqn:signChangeCov}
1_{\sign(\skp{a_i}{x}+\ta_i)\neq \sign(\skp{a_i}{\pi(x)}+\ta_i)}\leq 1_{A_{\delta}^c} \leq 1_{|\langle a_i,\pi(x)\rangle+\tau_i|\leq \delta} + 1_{|\langle a_i,x-\pi(x)\rangle|>\delta}.
\end{equation}
We conclude that 
\begin{align}\label{eq:proof:Gaussian:twoprocesses}
&\sup_{x,y\in \D}\big|d_{\lambda}(f(x),f(y))- d_{\lambda}(f(\pi(x)),f(\pi(y)))\big|\\ 
&\qquad \leq 2\sup_{y\in N_{\eps}} \frac{\sqrt{2\pi}\la}{m}\sum_{i=1}^m 1_{|\langle a_i,y\rangle+\tau_i|\leq \delta} 
  + 2\sup_{z\in (\D-\D)\cap \eps B_2^n} \frac{\sqrt{2\pi}\la}{m}\sum_{i=1}^m 1_{|\langle a_i,z\rangle|>\delta}\nonumber.
\end{align}
Since 
$$\bP_{\tau}(|\langle a_i,y\rangle+\tau_i|\leq \delta)\leq \frac{\delta}{\lambda},$$
the Chernoff bound implies that 
$$\sum_{i=1}^m 1_{|\langle a_i,y\rangle+\tau_i|\leq \delta}\leq \frac{2\delta m}{\lambda}$$
with probability at least $1-\exp(-cm\delta/\lambda)$. Therefore, if $m\gtrsim \lambda \delta^{-1} \log|N_\eps|$
then the union bound implies that 
$$\sup_{y\in N_{\eps}} \frac{\sqrt{2\pi}\la}{m}\sum_{i=1}^m 1_{|\langle a_i,y\rangle+\tau_i|\leq \delta} \lesssim \delta$$
with probability at least $1-\exp(-c'm\delta/\lambda)$. 
To estimate the second term on the right hand side of \eqref{eq:proof:Gaussian:twoprocesses}, first
observe that 
\begin{equation}
\label{eqn:largeCoeffk-norm1}
\sup_{z\in (\D-\D)\cap \eps B_2^n}|\{i \in [m] \ : \ |\skp{a_i}{z}|> \del\}|\leq \Big\lfloor\frac{\del m}{\lambda}\Big\rfloor
\end{equation}
if and only if $(Az)^*_{\lfloor\delta m/\lambda\rfloor}\leq \delta$ for all $z\in (\D-\D)\cap \eps B_2^n$. Clearly, 
\begin{equation}
\label{eqn:largeCoeffk-norm2}
(Az)^*_{\lfloor\delta m/\la\rfloor} \leq \Big(\frac{\la}{\delta m} \|Az\|_{[\lfloor\delta m/\la\rfloor]}^2\Big)^{1/2}.
\end{equation} 
Theorem~\ref{thm:monotone-subgaussian} with $T=(\D-\D)\cap \eps B_2^n$ and $k=\lfloor\delta m/\la\rfloor$ yields that with probability at least $1-2\exp(-c_1 m\delta \log(e\la/\delta)/\la)$, 
\begin{align*}
\sup_{z\in (\D-\D)\cap \eps B_2^n}\Big(\frac{\la}{\delta m} \|Az\|_{[\lfloor\delta m/\la\rfloor]}^2\Big)^{1/2} & \lesssim \sqrt{\frac{\la}{\del m}} \ell_*((\D-\D)\cap \eps B_2^n) + \eps \sqrt{\log(e\la/\delta)},
\end{align*}
which is bounded by $\delta$ if $\eps\lesssim\delta/\sqrt{\log(e\la/\delta)}$ and 
$m\gtrsim\la \del^{-3} \ell_*((\D-\D)\cap \eps B^n_2)^2$.
Combining our estimates we find that if
\begin{align}\label{eqn:GaussianCond}
\la\gtrsim R\sqrt{\log(R/\del)},\quad
m\gtrsim \la^2\del^{-2} \log\mathcal{N}(\D, \eps)+  \la \del^{-3} \ell_*((\D-\D)\cap \eps B^n_2)^2
\end{align}
for $\eps\lesssim \del/\sqrt{\log(e\la/\del)}$, then with probability at least $1-2 \exp(-c\del^2 m/\la^2)$,
\begin{equation*}
\sup_{x,y\in \D}|d_{\lambda}(f(x),f(y))-\|x-y\|_2|\lesssim \del.
\end{equation*}
By rescaling $\del>0$ with a universal constant we obtain the result.
\end{proof}

\section{Proof of Theorem~\ref{thm:Gaussian_Circulant_Dithered}}
\label{sec:proofThmMain2}

Theorem~\ref{thm:Gaussian_Circulant_Dithered} immediately follows from the following result.
\begin{thm}
	\label{thm:Gaussian_Circulant_Dithered_main}
	There exists an absolute constant $c>0$ and a polylogarithmic factor $\alpha$ satisfying $\alpha\leq \log^4(n)+\log(\eta^{-1})$ such that the following holds.
	Let $\D\subset RB^n_2$ for $R\geq 1$, $\eta\in (0,\frac{1}{2}]$, $\del\in (0,1)$, and $\lambda\geq R$. Let $I\subset [n]$ satisfy $|I|=m$, $\Gamma_{\xi}$ be the circulant matrix generated by a mean-zero random vector $\xi\in \R^n$ with independent, variance one and $K$-subgaussian entries, and $D_{\theta}$ is a diagonal matrix containing independent Rademachers on its diagonal. Let $A=R_I\Gamma_{\xi}D_{\theta}$, $\ta,\ta'$ be uniformly distributed in $[-\la,\la]^m$ and $A$, $\tau$, and $\tau'$ be independent. Consider the mappings 
	\begin{equation*}
f(x)=\sign(Ax+\ta), \qquad f'(x)=\sign(Ax+\ta').
\end{equation*}
If
	$$\lambda\gtrsim \alpha R\sqrt{\log(e\la^2/\del R^2)}, \quad R^2\geq \del \lambda^2,$$
	and
	\begin{align*}
		m&\gtrsim \alpha^2\del^{-2} \log(\mathcal{N}(\D,r))+\alpha^2\lambda^{-2}\del^{-3}\ell_*((\D-\D)\cap r B^n_2)^2, 
	\end{align*}
	for a parameter $r\leq c\del R$, then with probability at least $1-\eta$,
	\begin{equation*}\label{eq:binaryembedding_fast_finite}
	\sup_{x,y\in \D}\big|\tfrac{\la^2}{m}\skp{f(x)}{f'(y)} - \skp{x}{y}\big|\leq  \del \la^2.
	\end{equation*}
\end{thm}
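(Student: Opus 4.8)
The plan is to run the argument of Theorem~\ref{thm:GaussianDithered}, with three changes: the Hamming identity is replaced by $ab=1-2\cdot 1_{a\neq b}$ for $a,b\in\{-1,1\}$; the \emph{independence} of the two dithers $\ta,\ta'$ is used decisively; and the Gaussian estimate of Theorem~\ref{thm:monotone-subgaussian} is replaced by its partial random circulant counterpart, which is what forces the polylogarithmic loss $\alpha$. The starting point is the conditional bias. Writing $\psi_\la(t)=\sign(t)\min\{|t|/\la,1\}$ and using that $\ta_i,\ta_i'$ are independent and uniform on $[-\la,\la]$, one has $\E_{\ta}\sign(s+\ta_i)=\psi_\la(s)$; hence, by independence of $\ta$ and $\ta'$, conditionally on $A$,
$$\E_{\ta,\ta'}\skp{f(w)}{f'(z)}=\sum_{i=1}^m\psi_\la(\skp{a_i}{w})\,\psi_\la(\skp{a_i}{z}),\qquad w,z\in\R^n.$$
Since the ``overflow'' $\rho(t):=t-\la\psi_\la(t)$ has modulus $\phi_\la(t)=(|t|-\la)1_{|t|\geq\la}$ as in Lemma~\ref{lem:exp}, expanding the product yields
$$\la^2\,\E_{\ta,\ta'}\skp{f(w)}{f'(z)}=\skp{Aw}{Az}-\skp{\rho(Aw)}{Az}-\skp{Aw}{\rho(Az)}+\skp{\rho(Aw)}{\rho(Az)},$$
where $\rho$ acts entrywise; note $\|\rho(Aw)\|_2\leq\|Aw\|_{[k]}$ whenever at most $k$ entries of $Aw$ exceed $\la$ in modulus, i.e.\ $(Aw)^*_k\leq\la$.

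Next, fix a minimal $r$-net $N_r$ of $\D$ with $r\leq c\del R$, put $\pi(x)\in\operatorname{argmin}_{w\in N_r}\|x-w\|_2$, and split $\tfrac{\la^2}{m}\skp{f(x)}{f'(y)}-\skp{x}{y}$ into: (i) $\tfrac{\la^2}{m}(\skp{f(x)}{f'(y)}-\skp{f(\pi x)}{f'(\pi y)})$; (ii) $\tfrac{\la^2}{m}(\skp{f(\pi x)}{f'(\pi y)}-\E_{\ta,\ta'}\skp{f(\pi x)}{f'(\pi y)})$; (iii) $\tfrac{\la^2}{m}\E_{\ta,\ta'}\skp{f(\pi x)}{f'(\pi y)}-\skp{\pi x}{\pi y}$; (iv) $\skp{\pi x}{\pi y}-\skp{x}{y}$. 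Term (iv) is at most $2rR\lesssim\del R^2\leq\del\la^2$ by Cauchy--Schwarz and $R\leq\la$. For term (ii), note that conditionally on $A$ the quantity $\skp{f(\pi x)}{f'(\pi y)}=\sum_i\sign(\skp{a_i}{\pi x}+\ta_i)\sign(\skp{a_i}{\pi y}+\ta_i')$ is a sum of $m$ summands that are independent over $i$ (the pairs $(\ta_i,\ta_i')$ are independent) and bounded by $1$; Hoeffding's inequality plus a union bound over the at most $\mathcal{N}(\D,r)^2$ pairs give $|\text{(ii)}|\leq\del\la^2$ with probability $\geq 1-\eta/4$ once $m\gtrsim\del^{-2}\log(\mathcal{N}(\D,r)/\eta)$, which is implied by \eqref{eqn:GaussianCirculantDitheredBoundm}.

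For the bias term (iii), the first paragraph reduces it to $\tfrac1m|\skp{A\pi x}{A\pi y}-m\skp{\pi x}{\pi y}|$ plus a clipping correction $\tfrac1m(\|\rho(A\pi x)\|_2\|A\pi y\|_2+\|A\pi x\|_2\|\rho(A\pi y)\|_2+\|\rho(A\pi x)\|_2\|\rho(A\pi y)\|_2)$. The first piece is handled by a Johnson--Lindenstrauss estimate for the partial random circulant matrix $A=R_I\Gamma_\xi D_\theta$ (a suprema-of-chaos-process bound in the spirit of the restricted-isometry analysis of such matrices): with probability $\geq1-\eta/4$, $|\tfrac1m\|Aw\|_2^2-\|w\|_2^2|\leq\beta\|w\|_2^2$ simultaneously for $w\in N_r\cup(N_r\pm N_r)$, provided $m\gtrsim\alpha^2\beta^{-2}\log\mathcal{N}(\D,r)$; polarizing and choosing $\beta\simeq\del\la^2/R^2$ (legitimate since $\del\la^2\leq R^2$) bounds the first piece by $\del\la^2$ and turns the requirement into $m\gtrsim\alpha^2\del^{-2}(R/\la)^4\log\mathcal{N}(\D,r)$, which is implied by \eqref{eqn:GaussianCirculantDitheredBoundm} because $\la\geq R$. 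The clipping correction is made $O(\del\la^2)$ by the circulant substitute for Theorem~\ref{thm:monotone-subgaussian} applied at each of the finitely many net points: choosing $k\simeq m\,e^{-c\la^2/(\alpha R)^2}$ one gets $|\{i:|\skp{a_i}{w}|>\la\}|\leq k$ and $\|\rho(Aw)\|_2\leq\|Aw\|_{[k]}\lesssim\alpha(\ell_*(\{w\})+\|w\|_2\sqrt{k\log(em/k)})$, and the two hypotheses $\la\gtrsim\alpha R\sqrt{\log(e\la^2/\del R^2)}$ and $R^2\geq\del\la^2$ are precisely what make all three clipping inner products $\lesssim\del\la^2 m$.

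Finally, for the net-transfer term (i) I follow \eqref{eq:applyDM18}--\eqref{eq:proof:Gaussian:twoprocesses}: using $|ab-cd|\leq 2(1_{a\neq c}+1_{b\neq d})$ for signs, term (i) is at most $\tfrac{2\la^2}{m}\sum_i\big(1_{\sign(\skp{a_i}{x}+\ta_i)\neq\sign(\skp{a_i}{\pi x}+\ta_i)}+1_{\sign(\skp{a_i}{y}+\ta_i')\neq\sign(\skp{a_i}{\pi y}+\ta_i')}\big)$, and, exactly as in \eqref{eqn:signChangeCov}, each indicator is $\leq 1_{|\skp{a_i}{\pi x}+\ta_i|\leq\del'}+1_{|\skp{a_i}{x-\pi x}|>\del'}$ with $\del'\simeq\del\la$. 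The dithering sums are $\lesssim\del'm/\la\simeq\del m$ by the Chernoff bound and a union bound over $N_r$ (needing $m\gtrsim\la(\del')^{-1}\log\mathcal{N}(\D,r)\simeq\del^{-1}\log\mathcal{N}(\D,r)$), while the remaining sums, uniformly over $z\in(\D-\D)\cap rB^n_2$, are at most $\lfloor\del m\rfloor$ as soon as $(Az)^*_{\lfloor\del m\rfloor}\leq\del'$; by the circulant analogue of Theorem~\ref{thm:monotone-subgaussian} this holds with probability $\geq1-\eta/4$ provided $r\lesssim\del\la/(\alpha\sqrt{\log(1/\del)})$ --- implied by $r\leq c\del R$ --- and $m\gtrsim\alpha^2\la^{-2}\del^{-3}\ell_*((\D-\D)\cap rB^n_2)^2$. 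Collecting (i)--(iv) and rescaling $\del$ by an absolute constant proves Theorem~\ref{thm:Gaussian_Circulant_Dithered_main}, and Theorem~\ref{thm:Gaussian_Circulant_Dithered} follows by assembling $f,f'$ into $F$ (the form $\skp{\cdot}{\cdot}_{S_m}$ exactly reads off the two cross terms $\skp{f(x)}{f'(y)}$ and $\skp{f(y)}{f'(x)}$) and applying the triangle inequality for the squared-distance consequence. The main obstacle is the circulant replacement for Theorem~\ref{thm:monotone-subgaussian} used in paragraphs three and four (and the companion circulant Johnson--Lindenstrauss bound): controlling $\sup_{z\in T}\|R_I\Gamma_\xi D_\theta z\|_{[k]}$ requires bounding suprema of second-order chaos processes driven by a \emph{single} random vector $\xi$ together with the Rademacher diagonal $D_\theta$, and the chaining for these structured matrices is exactly what produces the polylogarithmic factor $\alpha$.
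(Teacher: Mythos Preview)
Your proposal is correct and follows essentially the same route as the paper. The four-term split (i)--(iv) is identical to the paper's (1)--(4); Hoeffding plus a union bound handles (ii) in both; the sign-change indicator trick \eqref{eqn:signChangeCov} together with Chernoff and the circulant $k$-norm bound handles (i) in both; and for (iii) your $\psi_\la/\rho$ expansion is exactly the content of the paper's Lemma~\ref{lem:exp_mult}, after which both proofs split into a Johnson--Lindenstrauss piece $\tfrac1m\skp{A\pi x}{A\pi y}-\skp{\pi x}{\pi y}$ (handled via polarization and the circulant RIP/JL) and a clipping correction controlled by the circulant $k$-norm. The ``circulant analogue of Theorem~\ref{thm:monotone-subgaussian}'' you invoke is precisely what the paper develops in Theorem~\ref{thm:k-norm_circulant} and Corollary~\ref{coro:k-norm_circulant}, and the accompanying bias estimate is packaged as Lemma~\ref{lem:bias_terms}.

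Two minor points of difference worth noting. First, for the clipping correction you use the cruder Cauchy--Schwarz bound $\|\rho(Aw)\|_2\|Az\|_2$, whereas the paper (inside Lemma~\ref{lem:bias_terms}) exploits the $k$-sparsity of the overflow to bound by $\|Aw\|_{[k]}\|Ay\|_{[k]}$; both succeed under the stated hypotheses on $\la$ and $R$, but the paper's version makes the role of $\la\gtrsim\alpha R\sqrt{\log(e\la^2/\del R^2)}$ more transparent. Second, your phrase ``applied at each of the finitely many net points'' is slightly loose: the paper applies the circulant $k$-norm bound to the \emph{set} $N_r$ (via $\ell_*(N_r)\lesssim R\sqrt{\log|N_r|}$) rather than unioning pointwise, which is what keeps the polylog factor $\alpha$ from picking up an extra $\log\mathcal{N}(\D,r)$.
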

\begin{proof}[Proof of Theorem~\ref{thm:Gaussian_Circulant_Dithered}]
	By the definition \eqref{eqn:defbigF} of $F$,  
	\begin{equation*}
	\skp{F(x)}{F(y)}_{S_m}=\skp{f(x)}{f'(y)}+\skp{f(y)}{f'(x)}
	\end{equation*}
	and
	\begin{equation*}
	\skp{F(x)-F(y)}{F(x)-F(y)}_{S_m}=2\skp{f(x)-f(y)}{f'(x)-f'(y)}.
	\end{equation*}
	Hence, by the triangle inequality 
	\begin{align*}
	& \big|\tfrac{\la^2}{2m}\skp{F(x)}{F(y)}_{S_m} - \skp{x}{y}\big|\\
	& \qquad \leq \tfrac{1}{2}\left(\big|\tfrac{\la^2}{m}\skp{f(x)}{f'(y)}-\skp{x}{y}\big|+\big|\tfrac{\la^2}{m}\skp{f(y)}{f'(x)}-\skp{y}{x}\big|\right)
	\end{align*}
	and 
	\begin{align*}
	&\big|\tfrac{\la^2}{2m}\skp{F(x)-F(y)}{F(x)-F(y)}_{S_m}-\|x-y\|_2^2\big|\\
	&\leq \big|\tfrac{\la^2}{m}\skp{f(x)}{f'(x)}  -\|x\|_2^2\big| + \big|\tfrac{\la^2}{m}\skp{f(y)}{f'(y)} - \|y\|_2^2\big| \\
	&\quad + \big|\tfrac{\la^2}{m}\skp{f(x)}{f'(y)}- \skp{x}{y}\big| + \big|\tfrac{\la^2}{m}\skp{f(y)}{f'(x)}- \skp{y}{x}\big|. 
	\end{align*}
	The result now follows from Theorem~\ref{thm:Gaussian_Circulant_Dithered_main}. 
\end{proof}
The remainder of this section is devoted to the proof of Theorem~\ref{thm:Gaussian_Circulant_Dithered_main}. It follows the general ideas of the proof of Theorem~\ref{thm:GaussianDithered}. However, as Theorem~\ref{thm:Gaussian_Circulant_Dithered_main} features a highly structured random embedding matrix with strong stochastic dependencies across the rows, we need to overcome several additional technical hurdles. Our first objective is to prove a version of Theorem~\ref{thm:monotone-subgaussian} for this matrix, which is stated in Corollary~\ref{coro:k-norm_circulant} below. To prove it, we will make use of three ingredients. To state the first one, we use the following terminology from \cite{ORS15}. We say that $A\in \R^{m\times n}$ satisfies RIP$(s,\delta)$ if
$$\Big| \ \|Ax\|_2^2 - \|x\|_2^2\Big|\leq \max\{\delta,\delta^2\} \|x\|_2^2$$
for all $x\in \R^n$ with $\|x\|_0\leq s$. Let $L=\lceil \log_2(n)\rceil $. Given $\delta>0$ and $s\geq 1$, we say that $A\in \R^{m\times n}$ satisfies MRIP$(s,\delta)$ if $A$ satisfies the RIP$(s_{\ell},\delta_{\ell})$ simultaneously for $\ell=1,\ldots,L$, where $s_{\ell} = 2^{\ell} s$ and $\delta_{\ell} = 2^{\ell/2} \delta$. 
\begin{thm}
\label{thm:ORS}
\cite{ORS15} There exist absolute constants $C_1, C_2>0$ such that the following holds.
Let $T\subset\R^n$ and set $\operatorname{rad}(T)=\sup_{x\in T} \|x\|_2$. Let $D_{\theta} \in \R^{n\times n}$ be a diagonal matrix containing i.i.d.\ Rademacher random variables on its diagonal. If $A\in \R^{m\times n}$ satisfies MRIP$(s,\tilde{\delta})$ for 
$$s=C_1(1+\eta), \qquad \tilde{\delta} = C_2 \delta \frac{\operatorname{rad}(T)}{\max\{\operatorname{rad}(T),\ell_*(T)\}}$$
then with probability at least $1-e^{-\eta}$,
$$\sup_{x\in T}\Big| \ \|AD_{\theta} x\|_2^2 - \|x\|_2^2\Big| \leq \max\{\delta,\delta^2\}\operatorname{rad}^2(T).$$
\end{thm}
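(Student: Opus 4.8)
The plan, following \cite{KW11} and its refinement in \cite{ORS15}, is to establish this near‑isometry guarantee for $AD_\theta$ --- $A$ deterministic satisfying MRIP, $D_\theta$ a Rademacher diagonal --- uniformly over $T$ by reducing it, through a chaos‑process analysis, to the restricted isometry property; the point being to get the dependence on $\ell_*(T)$ rather than $\log|T|$. Since $A$ is fixed, all the randomness sits in $\theta$, and I would first linearize and reduce to a chaos supremum: writing $A_x:=AD_x$, so that $AD_\theta x=A_x\theta$, and using $\theta_i^2=1$,
\[
\|AD_\theta x\|_2^2=\theta^{\top}\big(D_xA^{\top}AD_x\big)\theta=\|x\|_2^2+\theta^{\top}N_x\theta,\qquad N_x:=D_x\big(A^{\top}A-\mathrm{Id}\big)D_x .
\]
Since each $e_i$ is $1$-sparse (and MRIP implies RIP at sparsity $1$ with constant $\lesssim\tilde\delta$), the mean $\operatorname{tr}(N_x)=\sum_i x_i^2\big(\|Ae_i\|_2^2-1\big)$ has modulus $\le\max\{\tilde\delta,\tilde\delta^2\}\|x\|_2^2$, so it remains to bound $\sup_{x\in T}\big|\theta^{\top}N_x\theta-\operatorname{tr}(N_x)\big|$, a supremum of a centered second‑order Rademacher chaos.

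For a single $x$ this needs only RIP of order two: off‑diagonal entries of $A^{\top}A$ are inner products $\langle Ae_i,Ae_j\rangle$ of distinct columns and so have modulus $\lesssim\tilde\delta$, whence
\[
\|N_x\|_F^2=\sum_i x_i^4\big(\|Ae_i\|_2^2-1\big)^2+\sum_{i\ne j}x_i^2x_j^2\langle Ae_i,Ae_j\rangle^2\lesssim\tilde\delta^2\Big(\sum_i x_i^2\Big)^2=\tilde\delta^2\|x\|_2^4 ,
\]
and, since $\|N_x\|_{2\to2}\le\|N_x\|_F$, the Hanson--Wright inequality gives $\big|\theta^{\top}N_x\theta-\operatorname{tr}(N_x)\big|\lesssim\tilde\delta\|x\|_2^2(\sqrt u+u)$ with probability $\ge1-e^{-u}$. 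The same computation applied to $N_x-N_{x'}=D_{x-x'}(A^{\top}A-\mathrm{Id})D_x+D_{x'}(A^{\top}A-\mathrm{Id})D_{x-x'}$ shows $\{\theta^{\top}N_x\theta\}_{x\in T}$ is a chaos process with mixed sub‑gaussian/sub‑exponential increments, and a generic‑chaining tail bound for such processes (of the kind used for Theorem~\ref{thm:monotone-subgaussian}; see \cite[Theorem~3.2]{Dir15}) would already prove the theorem --- but with the sub‑exponential term governed by a $\gamma_1$-functional of $T$, i.e.\ with an extra logarithmic factor over $\ell_*(T)$. Removing that factor is precisely the purpose of the full MRIP hypothesis.

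To remove it I would not chain the quadratic forms $N_x$ directly. Split $x=x_S+x_{S^c}$ with $S$ the index set of the $s$ largest entries of $x$; the piece $AD_\theta x_S$ is supported on $s$ coordinates, where RIP at sparsity $s$ (with constant $\lesssim\tilde\delta$) applies directly and yields a $\tilde\delta$-small near‑isometry, while the tail $AD_\theta x_{S^c}$ is treated by the standard dyadic ``shelling'': decompose $x_{S^c}$ into blocks of coordinates of geometrically increasing size, ordered by magnitude and matched to the MRIP sparsity levels $s_\ell=2^\ell s$, apply RIP$(s_\ell,\delta_\ell)$ with $\delta_\ell=2^{\ell/2}\tilde\delta$ block by block, and observe that the geometric growth of the constants $\delta_\ell$ is offset by the geometric decay of the block sup‑norms, so the $L=\lceil\log_2 n\rceil$ contributions sum without a logarithmic blow‑up. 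The smallness of each $\delta_\ell$ relative to $\tilde\delta$ is what carries the $\tilde\delta$-factor through, and the block structure is compatible with a $\gamma_2$-type chaining of the linear maps $AD_x$ rather than a $\gamma_1$-type chaining of the $N_x$, which is what removes the logarithmic loss. Executed carefully, this yields, with probability $\ge1-e^{-\eta}$, a bound $\sup_{x\in T}\big|\|AD_\theta x\|_2^2-\|x\|_2^2\big|\lesssim\max\{\tilde\delta,\tilde\delta^2\}\,\rad(T)\big(\rad(T)+\ell_*(T)\big)$, the failure parameter $\eta$ being absorbed by the choice of $s$ proportional to $1+\eta$ (a larger $\eta$ forces RIP at proportionally higher sparsity, as in \cite{KW11}). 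The calibration $\tilde\delta=C_2\delta\,\rad(T)/\max\{\rad(T),\ell_*(T)\}$ then bounds the right‑hand side by $\lesssim\max\{\delta,\delta^2\}\rad^2(T)$ --- the $\delta^2$ mattering only when $\delta>1$ --- and rescaling $\delta$ by an absolute constant completes the proof.

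I expect the third step to be the main obstacle. The single‑point estimate is elementary and uses only RIP of order two, but passing to the supremum over a general set $T$ with the right complexity dependence --- logarithmic overhead genuinely eliminated, the smallness $\tilde\delta$ preserved at every chaining scale --- is the technical heart of the Krahmer--Ward/Oymak--Recht--Soltanolkotabi method; the delicate bookkeeping is precisely the interleaving of the dyadic shelling with the generic‑chaining bound (the split into the $s$ largest coordinates versus the tail, and the coupling of $s$ to $1+\eta$). The remaining steps --- the reduction to the chaos, Hanson--Wright for a single point, and the final calibration --- are routine by comparison.
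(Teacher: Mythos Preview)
The paper does not prove this theorem; it is quoted verbatim as a black box from \cite{ORS15} and used as an ingredient in the proof of Theorem~\ref{thm:new}. So there is no ``paper's own proof'' to compare against.

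Your sketch is a reasonable high-level outline of the Krahmer--Ward/Oymak--Recht--Soltanolkotabi method, and you correctly identify the two main points: the reduction to a second-order Rademacher chaos indexed by $\{D_x(A^\top A-\mathrm{Id})D_x : x\in T\}$, and the fact that MRIP is what allows one to replace a $\gamma_1$-type bound by a $\gamma_2$-type bound and thereby avoid the logarithmic loss. One comment on the mechanism: in \cite{ORS15} the logarithm is removed not by shelling individual vectors $x$ into coordinate blocks, but by controlling the chaining directly on the matrix class $\{AD_x\}$ in operator norm --- the multiresolution RIP is invoked at each chaining scale to bound $\|AD_{x-x'}\|_{2\to2}$ via the restricted isometry constant at the appropriate sparsity level $s_\ell$, and the geometric decay built into the definition of MRIP is what makes the $\gamma_2$-sum converge. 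Your ``split off the $s$ largest coordinates and shell the tail'' picture is closer to the classical RIP proofs for subsampled Fourier matrices than to the ORS argument, though the underlying bookkeeping is similar in spirit. The coupling of $s$ to $1+\eta$ and the final calibration of $\tilde\delta$ you describe are exactly right.
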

Our second ingredient can be viewed as a special case of Corollary~\ref{coro:k-norm_circulant} for the set $T=\Sigma_{s,n}:=\{x\in \R^n \; : \; \|x\|_0\leq s, \; \|x\|_2\leq 1\}$ of all $s$-sparse vectors in the Euclidean unit ball. 
\begin{thm}\cite[Lemma 2.1]{DiM18b} \label{thm:structure-3}
Let $\xi\in \R^n$ be a centered, isotropic, $K$-subgaussian random vector.
There exist constants $c_1, c_2>0$ that only depend on $K$ such that the following holds. Let $s\in [n]$. For any $u\geq 1$, with probability at least $1-e^{-c_1u^2}$, 
\begin{equation*}
\sup_{x\in \Sigma_{s,n}}\|\Gamma_\xi x\|_{[s]} \leq c_2 (\sqrt{s}\log(en)\log(s) +  u\sqrt{s}).
\end{equation*}
\end{thm}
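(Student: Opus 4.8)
The plan is to reduce the statement to a supremum of a subgaussian process and then to a purely deterministic chaining estimate for circular correlations of sparse vectors. First I would rewrite the left-hand side as a linear functional of $\xi$: for any $x,y\in\R^n$ one has $\skp{\Gamma_\xi x}{y}=\skp{\xi}{w(x,y)}$, where $w(x,y)\in\R^n$ is the circular correlation $w(x,y)_k=\sum_i x_{i+k}y_i$ (indices mod $n$). Since $\|v\|_{[s]}=\sup\{\skp{v}{y}:y\in\Sigma_{s,n}\}$, this gives
\begin{equation*}
\sup_{x\in\Sigma_{s,n}}\|\Gamma_\xi x\|_{[s]}=\sup_{w\in\mathcal W}\skp{\xi}{w},\qquad \mathcal W:=\{w(x,y):x,y\in\Sigma_{s,n}\}.
\end{equation*}
(Using that circular convolution is commutative, one may equivalently write this as $\sup_{A\in\mathcal A}\|A\xi\|_2$ with $\mathcal A=\{R_I\Gamma_x:x\in\Sigma_{s,n},\ |I|=s\}$, which puts it in the framework of suprema of order-two chaos processes; I will keep the linear reformulation since it handles a general $K$-subgaussian $\xi$.) Because $\xi$ is $K$-subgaussian, $w\mapsto\skp{\xi}{w}$ is a subgaussian process on $(\mathcal W,\|\cdot\|_2)$ and $0\in\mathcal W$, so Talagrand's generic chaining tail bound (in the form of \cite[Theorem 3.2]{Dir15}) yields, for every $u\geq1$, with probability at least $1-e^{-c_1u^2}$,
\begin{equation*}
\sup_{x\in\Sigma_{s,n}}\|\Gamma_\xi x\|_{[s]}\lesssim \gamma_2(\mathcal W,\|\cdot\|_2)+u\,\rad(\mathcal W).
\end{equation*}
A one-line computation controls the radius: by Young's inequality $\|w(x,y)\|_2\leq\|x\|_1\|y\|_2\leq\sqrt{\|x\|_0}\,\|x\|_2\|y\|_2\leq\sqrt s$ for $x,y\in\Sigma_{s,n}$, so $\rad(\mathcal W)\leq\sqrt s$, which accounts for the $u\sqrt s$ term.

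The hard part is the deterministic estimate $\gamma_2(\mathcal W,\|\cdot\|_2)\lesssim\sqrt s\,\log(en)\log(s)$. The naive approach fails: bounding increments by $\|w(x,y)-w(x',y')\|_2\leq\sqrt{2s}\,(\|x-x'\|_2+\|y-y'\|_2)$ and reducing to $\gamma_2(\Sigma_{s,n},\|\cdot\|_2)\asymp\sqrt{s\log(en/s)}$ only yields $\gamma_2(\mathcal W)\lesssim s\sqrt{\log(en/s)}$, because this Lipschitz reduction throws away the averaging built into the correlation. To recover the polylogarithmic dependence one must use the Fourier diagonalization of circulant matrices: with $F_n$ the discrete Fourier transform, $\widehat{w(x,y)}=\sqrt n\,\widehat x\odot\overline{\widehat y}$ and $\Gamma_x=F_n^*D_{\widehat x}F_n$, so $\mathcal W$ (and analogously $\mathcal A$) is an entrywise-product set built from images of sparse balls under $F_n$, and the $\gamma_2$-bound I need is exactly of the type underpinning the restricted isometry property of subsampled Fourier matrices and partial random circulant matrices. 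Concretely I would run a generic-chaining argument in which the net at scale $t$ is chosen using, at large scales, the volumetric covering estimate $\log N(\mathcal W,\|\cdot\|_2,t)\lesssim s\log(en/s)+s\log(\sqrt s/t)$, and at small scales the sharper estimate of order $s\,t^{-2}\log(n)\log^2(s)$ coming from the boundedness and low coherence of the rows of $F_n$ together with Maurey's empirical method; balancing the two across dyadic scales gives $\gamma_2(\mathcal W,\|\cdot\|_2)\lesssim\sqrt s\,\log(en)\log(s)$. This is essentially the chaining analysis of Krahmer--Mendelson--Rauhut adapted to the present family of matrices; alternatively one may invoke their chaos-process theorem directly, together with the trivial bounds $\sup_{A\in\mathcal A}\|A\|_F\leq\sqrt s$ and $\sup_{A\in\mathcal A}\|A\|_{2\to2}\leq\sqrt s$, the identity $\E\|A\xi\|_2^2=\|A\|_F^2$, and the same bound on $\gamma_2(\mathcal A,\|\cdot\|_{2\to2})$.

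Combining the two displays, for $u\geq1$ with probability at least $1-e^{-c_1u^2}$ we obtain $\sup_{x\in\Sigma_{s,n}}\|\Gamma_\xi x\|_{[s]}\leq c_2(\sqrt s\,\log(en)\log(s)+u\sqrt s)$, as claimed, with $c_1,c_2$ depending only on $K$. I expect essentially all the difficulty to sit in the $\gamma_2$ estimate of the second paragraph; the reformulation, the radius bound and the final assembly are routine.
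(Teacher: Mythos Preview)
The paper does not prove this statement: Theorem~\ref{thm:structure-3} is quoted from \cite[Lemma~2.1]{DiM18b} and invoked as a black box in the proof of Theorem~\ref{thm:k-norm_circulant}, so there is no argument here to compare your proposal against.

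Your outline is nonetheless sound and is in fact the natural route to such a result. The reformulation $\sup_{x\in\Sigma_{s,n}}\|\Gamma_\xi x\|_{[s]}=\sup_{w\in\mathcal W}\skp{\xi}{w}$ with $\mathcal W=\{w(x,y):x,y\in\Sigma_{s,n}\}$ is correct, and since the hypothesis only assumes $\xi$ is a centered isotropic $K$-subgaussian vector (no independence of coordinates), the subgaussian-process formulation you chose is the right generality; the chaos-process alternative from \cite{KMR14} would require independent entries. The radius bound $\rad(\mathcal W)\leq\sqrt s$ via Young's inequality is immediate and, together with the generic-chaining tail bound, produces the $u\sqrt s$ deviation term with the stated probability. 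You are also right that the entire difficulty sits in the estimate $\gamma_2(\mathcal W,\|\cdot\|_2)\lesssim\sqrt s\,\log(en)\log(s)$, that the naive Lipschitz reduction to $\Sigma_{s,n}\times\Sigma_{s,n}$ loses a factor $\sqrt s$, and that the remedy is the Fourier-diagonalization plus two-regime covering argument (volumetric at coarse scales, Maurey's empirical method at fine scales) underlying \cite{KMR14}. That machinery does deliver the required $\gamma_2$ bound for this correlation set, so the assembly in your last paragraph goes through.
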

The final ingredient is a well-known result on the restricted isometry property of a subsampled random circulant matrix.  
\begin{thm}\cite{KMR14} \label{thm:structure-4}
Let $\xi\in \R^n$ be a centered random vector with independent, variance one, $K$-subgaussian entries. 
There exist constants $c_1, c_2>0$ that only depend on $K$ such that the following holds. Let $I\subset [n]$ with $|I|=m$.
For any $\del>0$, if
\begin{equation*}
m\geq c_1\del^{-2} s \log^2(s)\log^2(n),
\end{equation*}
then the matrix $\tfrac{1}{\sqrt{m}}R_I\Gamma_{\xi}$ satisfies RIP$(s,\delta)$ with probability at least $1-2\exp(-c_2 \del^2 m/s)$. 
\end{thm}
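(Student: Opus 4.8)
The plan is to recast RIP$(s,\delta)$ for $\Phi=\tfrac1{\sqrt m}R_I\Gamma_\xi$ as a deviation estimate for the supremum of a quadratic chaos in the subgaussian vector $\xi$, and then invoke the Hanson--Wright-type inequality for suprema of chaos processes (this is exactly the route of Krahmer, Mendelson and Rauhut, whose theorem is quoted). Since circular correlation is bilinear, for every $x\in\R^n$ the map $\xi\mapsto R_I\Gamma_\xi x$ is linear; write $R_I\Gamma_\xi x=V_x\xi$, where the rows of $V_x\in\R^{m\times n}$ are the cyclic shifts of $x$ indexed by $I$ and $x\mapsto V_x$ is linear. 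Then $\|\Phi x\|_2^2=\|\tfrac1{\sqrt m}V_x\xi\|_2^2$, and since $\xi$ is centred with independent unit-variance entries, $\mathbb{E}\,\|\tfrac1{\sqrt m}V_x\xi\|_2^2=\tfrac1m\|V_x\|_F^2=\|x\|_2^2$ (each row of $V_x$ has Euclidean norm $\|x\|_2$). Hence, with $\Sigma_{s,n}$ the set of $s$-sparse vectors in the unit ball and $\mathcal{A}:=\{\tfrac1{\sqrt m}V_x:x\in\Sigma_{s,n}\}$, RIP$(s,\delta)$ for $\delta\le1$ (the case $\delta>1$ being analogous and, since only $\max\{\delta,\delta^2\}$ is demanded, easier) is precisely the event $\sup_{M\in\mathcal{A}}\big|\|M\xi\|_2^2-\mathbb{E}\,\|M\xi\|_2^2\big|\le\delta$.

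\textbf{The chaos tail bound and the two radii.} I would apply the Krahmer--Mendelson--Rauhut bound: for subgaussian $\xi$ with independent entries and $t>0$,
$$\mathbb{P}\Big(\sup_{M\in\mathcal{A}}\big|\|M\xi\|_2^2-\mathbb{E}\,\|M\xi\|_2^2\big|\ge C E+t\Big)\le 2\exp\Big(-c\min\big\{\tfrac{t^2}{V^2},\tfrac tU\big\}\Big),$$
where, with $\gamma:=\gamma_2(\mathcal{A},\|\cdot\|_{2\to 2})$, $d_F:=\sup_{M\in\mathcal{A}}\|M\|_F$, $d_{2\to 2}:=\sup_{M\in\mathcal{A}}\|M\|_{2\to 2}$, one has $E=\gamma(\gamma+d_F)$, $V=d_{2\to 2}(\gamma+d_F)$, $U=d_{2\to 2}^2$. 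The two radii are elementary: $\|\tfrac1{\sqrt m}V_x\|_F^2=\tfrac1m\|V_x\|_F^2=\|x\|_2^2\le1$, so $d_F=1$; and $V_x$ is a row-subsampled circulant generated by $x$, whence $\|V_x\|_{2\to 2}\le\|\Gamma_x\|_{2\to 2}=\|\widehat x\|_\infty\le\|x\|_1\le\sqrt s$, so $d_{2\to 2}\le\sqrt{s/m}$. Everything then rests on $\gamma$.

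\textbf{The $\gamma_2$-functional and conclusion.} The remaining and genuinely hard estimate is $\gamma_2(\mathcal{A},\|\cdot\|_{2\to 2})\lesssim\sqrt{s/m}\,\log(s)\log(n)$. A plain Dudley integral using the Lipschitz bound $\|\tfrac1{\sqrt m}(V_x-V_{x'})\|_{2\to 2}\le\tfrac1{\sqrt m}\|x-x'\|_1$ together with a volumetric $\ell_1$-net of $\Sigma_{s,n}$ is too lossy — it costs an extra factor $\sqrt s$. Instead one runs a two-scale chaining: at fine scales one still uses the volumetric net of the sparse sphere in $\ell_1$, while at coarse scales one exploits that $\|V_z\|_{2\to 2}\le\|\widehat z\|_\infty$ is a maximum of $n$ linear functionals of $z$ and covers the relevant convex hull by a Maurey-type empirical approximation, which trades one power of $\sqrt s$ for a $\log n$ (the balancing of the two scales producing the $\log s$). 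Feeding $\gamma\lesssim\sqrt{s/m}\,\log(s)\log(n)$, $d_F=1$, $d_{2\to 2}\le\sqrt{s/m}$ into the tail bound and using $m\ge c_1\delta^{-2}s\log^2(s)\log^2(n)$ gives $E\lesssim\delta$, $V\lesssim\sqrt{s/m}$, $U\le s/m$; with $t=\delta$ the exponent is $\min\{\delta^2m/s,\delta m/s\}=\delta^2m/s$, so the failure probability is at most $2\exp(-c_2\delta^2m/s)$ and the deviation is $\lesssim\delta$. Rescaling $\delta$ by an absolute constant yields RIP$(s,\delta)$ exactly as stated.

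\textbf{Main obstacle.} All the substance is in the $\gamma_2$-estimate of $\{\tfrac1{\sqrt m}V_x:x\in\Sigma_{s,n}\}$ in the spectral norm. Because the subsampling $R_I$ destroys the circulant (hence Fourier-diagonal) structure of $V_x^\top V_x$, there is no closed form for its operator norm, and the naive entropy-integral estimate is off by a factor $\sqrt s$; the polylogarithmic overhead $\log^2(s)\log^2(n)$ in the sample complexity is precisely the price of the two-scale, Maurey-augmented chaining needed to recover the optimal linear-in-$s$ dependence.
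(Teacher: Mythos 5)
This statement is not proved in the paper at all: it is imported verbatim from Krahmer--Mendelson--Rauhut \cite{KMR14}, so there is no in-paper proof to compare against. Your proposal is a faithful reconstruction of the argument in that reference: the reduction of RIP to a supremum of a second-order chaos via the linearization $R_I\Gamma_\xi x=V_x\xi$, the computation $d_F=1$ and $d_{2\to2}\le\sqrt{s/m}$, the application of the KMR tail bound for chaos suprema, and the parameter bookkeeping giving the exponent $\delta^2 m/s$ are all correct and are exactly how \cite{KMR14} proceeds. The one substantive caveat is that the estimate $\gamma_2(\mathcal{A},\|\cdot\|_{2\to2})\lesssim\sqrt{s/m}\,\log(s)\log(n)$, which you rightly identify as carrying all the difficulty, is only described (two-scale chaining with a Maurey-type covering at coarse scales) rather than executed; as an outline it is accurate, but a self-contained proof would still require carrying out that multiscale entropy argument in full.
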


\begin{lemma}\label{lem:new}
Let $A\in \R^{m\times n}$, $\del>0$ and $s\geq 1$. For $l\in \N$ set $s_l=2^l s$, $\del_l=2^{l/2} \del$.
Suppose that $A$ satisfies MRIP$(s,\delta)$ and 
	\begin{align*}
	\sup_{x\in \Sigma_{s_l,n}}\|Ax\|_{[s_l]}\leq \max\{\del_l, \del_l^2\}, \quad \text{for all } l=1, \ldots, \lceil \log_2(n)\rceil.
	\end{align*}
	Then $\begin{bmatrix}
	A  & \operatorname{Id}_{m}
	\end{bmatrix}$ satisfies MRIP$(s, 3\del)$.
\end{lemma}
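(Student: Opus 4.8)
The plan is to prove the statement deterministically, establishing RIP$(s_\ell,3\delta_\ell)$ for the augmented matrix $\begin{bmatrix}A & \operatorname{Id}_m\end{bmatrix}$ one sparsity level at a time. For each $\ell\in\{1,\dots,\lceil\log_2 n\rceil\}$ I would split an $s_\ell$-sparse input vector into its $\R^n$-block and its $\R^m$-block, expand the square $\|\begin{bmatrix}A & \operatorname{Id}_m\end{bmatrix}z\|_2^2$, and control the diagonal terms by the assumed RIP of $A$ and the off-diagonal (cross) term by the assumed bound on $\|A(\cdot)\|_{[s_\ell]}$.

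Concretely, fix $\ell$ and $z\in\R^{n+m}$ with $\|z\|_0\le s_\ell$, and write $z=(x,y)$ with $x\in\R^n$, $y\in\R^m$, so that $\begin{bmatrix}A & \operatorname{Id}_m\end{bmatrix}z=Ax+y$, while $\|x\|_0\le s_\ell$, $\|y\|_0\le s_\ell$, and $\|z\|_2^2=\|x\|_2^2+\|y\|_2^2$. Expanding the square gives
\[
\|Ax+y\|_2^2-\|z\|_2^2=\big(\|Ax\|_2^2-\|x\|_2^2\big)+2\skp{Ax}{y}.
\]
Since $A$ satisfies MRIP$(s,\delta)$, it satisfies RIP$(s_\ell,\delta_\ell)$, so the first term has absolute value at most $\max\{\delta_\ell,\delta_\ell^2\}\|x\|_2^2\le\max\{\delta_\ell,\delta_\ell^2\}\|z\|_2^2$.

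The decisive step is the cross term. As $y$ is supported on at most $s_\ell$ coordinates, Cauchy--Schwarz gives $|\skp{Ax}{y}|\le\|Ax\|_{[s_\ell]}\|y\|_2$; applying the second hypothesis to $x/\|x\|_2\in\Sigma_{s_\ell,n}$ (the case $x=0$ being trivial) yields $\|Ax\|_{[s_\ell]}\le\max\{\delta_\ell,\delta_\ell^2\}\|x\|_2$, and combining with $2\|x\|_2\|y\|_2\le\|x\|_2^2+\|y\|_2^2=\|z\|_2^2$ gives $2|\skp{Ax}{y}|\le\max\{\delta_\ell,\delta_\ell^2\}\|z\|_2^2$. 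By the triangle inequality the two contributions add up to
\[
\big|\,\|Ax+y\|_2^2-\|z\|_2^2\,\big|\le 2\max\{\delta_\ell,\delta_\ell^2\}\|z\|_2^2 ,
\]
and I would conclude with the elementary inequality $2\max\{t,t^2\}\le\max\{3t,(3t)^2\}$ for all $t>0$ (checked on $0<t\le\tfrac13$, $\tfrac13<t\le1$, and $t>1$) applied with $t=\delta_\ell$, which upgrades the right-hand side to $\max\{3\delta_\ell,(3\delta_\ell)^2\}\|z\|_2^2$, i.e.\ RIP$(s_\ell,3\delta_\ell)$. Ranging over $\ell$ gives MRIP$(s,3\delta)$.

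I do not expect a genuine obstacle here, since the argument is short and purely deterministic; the only points needing care are the bookkeeping that makes the cross term $\skp{Ax}{y}$ couple to exactly the $[s_\ell]$-norm of $Ax$ (which is precisely why that localized quantity, rather than a plain operator-norm bound, is among the hypotheses), and the arithmetic that reconciles the factor $2$ produced above with the target constant $3$ under the $\max\{\delta,\delta^2\}$-normalization of the (M)RIP.
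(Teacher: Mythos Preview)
Your argument is correct and follows the same route as the paper: split $z=(x,y)$, expand $\|Ax+y\|_2^2-\|z\|_2^2$, bound the diagonal term by RIP$(s_\ell,\delta_\ell)$ and the cross term via $|\langle Ax,y\rangle|\le\|Ax\|_{[s_\ell]}\|y\|_2$. The only difference is that the paper works on $\Sigma_{s_\ell,n+m}$ (so $\|x\|_2,\|y\|_2\le 1$) and bounds crudely by $3\max\{\delta_\ell,\delta_\ell^2\}$, whereas your use of $2\|x\|_2\|y\|_2\le\|z\|_2^2$ yields the slightly sharper $2\max\{\delta_\ell,\delta_\ell^2\}\|z\|_2^2$; both feed into the same elementary comparison with $\max\{3\delta_\ell,(3\delta_\ell)^2\}$.

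One bookkeeping point: MRIP$(s,3\delta)$ for the $m\times(n+m)$ matrix $[A\ \operatorname{Id}_m]$ requires RIP$(s_\ell,3\delta_\ell)$ for $\ell=1,\dots,\lceil\log_2(n+m)\rceil$, not just up to $\lceil\log_2 n\rceil$ as you wrote. The paper states the correct range but, like you, invokes the hypotheses (which are only assumed for $\ell\le\lceil\log_2 n\rceil$) without comment. The fix is easy: for $\ell>\lceil\log_2 n\rceil$ one has $s_\ell\ge n$, so $\Sigma_{s_\ell,n}=B_2^n$ and RIP$(s_\ell,\delta_\ell)$ follows from RIP$(s_{\lceil\log_2 n\rceil},\delta_{\lceil\log_2 n\rceil})$; for the cross term one uses that $y$ has at most $\min(s_\ell,m)$ nonzeros, and in the regime $m\le n$ relevant to the applications this is at most $s_{\lceil\log_2 n\rceil}$, so the level-$\lceil\log_2 n\rceil$ hypothesis again suffices.
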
	
\begin{proof}
	Let $l\in  \{1, \ldots, \lceil \log_2(n+m)\rceil \}$. We need to show that $\begin{bmatrix}
	A & \operatorname{Id}_{m}
	\end{bmatrix}$ satisfies RIP$(s_l,(3\del)_l)$. We will derive the stronger estimate 
	\begin{equation}\label{eq:RIP}
	\sup_{z \in \Sigma_{s_l,n+m}} \left|\left\|\begin{bmatrix}
	A & \operatorname{Id}_{m}
	\end{bmatrix}
	z
	\right\|_2^2 - \|z\|_2^2\right|\leq \max\{(3\del)_l,(3\del)_l^2\}.
	\end{equation}
	To show \eqref{eq:RIP}, it suffices to prove that 
	\begin{equation*}
	\sup_{x\in \Sigma_{s_l,n}, y\in \Sigma_{s_l,m}} \left|\left\|\begin{bmatrix}
	A & \operatorname{Id}_{m}
	\end{bmatrix}
	\begin{bmatrix}
	x \\ 
	y
	\end{bmatrix}
	\right\|_2^2 - \left\|\begin{bmatrix}
	x \\ 
	y
	\end{bmatrix}
	\right\|_2^2\right|\leq \max\{(3\del)_l,(3\del)_l^2\}.
	\end{equation*}
	By assumption we have
	\begin{align*}
	&\sup_{x\in \Sigma_{s_l,n}, y\in \Sigma_{s_l,m}}\left|\left\|\begin{bmatrix}
	A & \operatorname{Id}_{m}
	\end{bmatrix}
	\begin{bmatrix}
	x \\ 
	y
	\end{bmatrix}
	\right\|_2^2 - \left\|\begin{bmatrix}
	x \\ 
	y
	\end{bmatrix}
	\right\|_2^2\right| \\
	&\leq \sup_{x\in \Sigma_{s_l,n}} \left|  \|Ax\|_2^2 - \|x\|_2^2\right| + 2 \sup_{x\in \Sigma_{s_l,n}, y\in \Sigma_{s_l,m}}|\langle A x,y\rangle|\\
	&\leq \sup_{x\in \Sigma_{s_l,n}} \left|  \|A x\|_2^2 - \|x\|_2^2\right| + 2 \sup_{x\in \Sigma_{s_l,n}} \|Ax\|_{[s_l]}\\
	&\leq 3 \max\{\del_l,\del_l^2\} \leq \max\{(3\del)_l,(3\del)_l^2\}.
	\end{align*}
\end{proof}

\begin{thm}\label{thm:new}
Let $A\in \R^{m\times n}$, $T\subset \eps B^n_2$, $\del, \eta>0$ and $k\in [m]$. 
There exist absolute constants $C_1, C_2>0$ such that the following holds.
Set $$s=C_1(1+\eta), \qquad \tilde{\delta} = C_2 \del\frac{\operatorname{rad}(\tilde{T})}{\max\{\operatorname{rad}(\tilde{T}),\ell_*(\tilde{T})\}}$$
where $\tilde{T}=(\tfrac{1}{\eps}T) \times \Sigma_{k,m}$. 
Suppose that $A$ satisfies MRIP$(s,\tilde{\delta})$ and 
	\begin{align*}
	\sup_{x\in \Sigma_{s_l,n}}\|Ax\|_{[s_l]}\leq \max\{\tilde{\del}_l, \tilde{\del}_l^2\}, \quad \text{for all } l=1, \ldots, \lceil \log_2(n)\rceil,
	\end{align*}
	where $s_l=2^l s$ and $\tilde{\del}_l=2^{l/2} \tilde{\del}$. Let $\theta\in \{-1,1\}^n$ be a Rademacher random vector. 
Then, with probability at least $1-e^{-\eta}$,
\begin{equation*}
\sup_{x\in T} \Big\| A  D_{\theta}x\Big\|_{[k]}\leq \max\{\del,\del^2\} \eps.
\end{equation*}
\end{thm}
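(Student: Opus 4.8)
The plan is to deduce the claim from Theorem~\ref{thm:ORS} applied to the augmented matrix $\begin{bmatrix} A & \operatorname{Id}_m\end{bmatrix}$ and the set $\tilde T = (\tfrac{1}{\eps}T)\times\Sigma_{k,m}$. The key observation is that $\|A D_\theta x\|_{[k]}$ can be written as a supremum of $\langle AD_\theta x, y\rangle$ over $y\in\Sigma_{k,m}$, and since $\langle AD_\theta x,y\rangle = \langle AD_\theta x, y\rangle + \langle z, 0\rangle$ for any choice of the second block, one has
\begin{equation*}
\sup_{x\in T}\|AD_\theta x\|_{[k]} = \sup_{x\in T,\, y\in\Sigma_{k,m}} \langle AD_\theta x, y\rangle \le \tfrac12\Big(\sup_{x\in T,\,y\in\Sigma_{k,m}}\big| \|AD_\theta x + y\|_2^2 - \|AD_\theta x\|_2^2 - \|y\|_2^2\big| + \dots\Big),
\end{equation*}
so controlling the action of $\begin{bmatrix} A & \operatorname{Id}_m\end{bmatrix} D_{(\theta,\theta')}$ on $T\times\Sigma_{k,m}$ in the $\ell_2$-sense (a near-isometry statement) immediately bounds the $[k]$-norm we want. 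More precisely, after rescaling $T$ by $1/\eps$ so that $\operatorname{rad}(\tfrac1\eps T)\le 1$, I would invoke Theorem~\ref{thm:ORS} with the matrix $\begin{bmatrix} A & \operatorname{Id}_m\end{bmatrix}$, the set $\tilde T$, the sparsity level $s=C_1(1+\eta)$ and distortion $\tilde\delta$ as in the statement; note that $\operatorname{rad}(\tilde T)\simeq 1$ and $\ell_*(\tilde T)\le \ell_*(\tfrac1\eps T) + \ell_*(\Sigma_{k,m}) \simeq \ell_*(\tfrac1\eps T) + \sqrt{k\log(em/k)}$, which is consistent with the definition of $\tilde\delta$ up to absolute constants.

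The step that needs the hypotheses is verifying that $\begin{bmatrix} A & \operatorname{Id}_m\end{bmatrix}$ satisfies MRIP$(s,3\tilde\delta)$ (or MRIP with a constant multiple of $\tilde\delta$), which is exactly the content of Lemma~\ref{lem:new}: the two assumptions on $A$ — that $A$ itself satisfies MRIP$(s,\tilde\delta)$ and that $\sup_{x\in\Sigma_{s_l,n}}\|Ax\|_{[s_l]}\le\max\{\tilde\delta_l,\tilde\delta_l^2\}$ for every dyadic scale — are precisely the hypotheses of Lemma~\ref{lem:new}, and its conclusion gives MRIP$(s,3\tilde\delta)$ for the augmented matrix. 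Applying Theorem~\ref{thm:ORS} then yields, with probability at least $1-e^{-\eta}$,
\begin{equation*}
\sup_{x\in T,\, y\in\Sigma_{k,m}}\Big|\,\big\|\begin{bmatrix} A & \operatorname{Id}_m\end{bmatrix} D_\theta' (\tfrac1\eps x, y)\big\|_2^2 - \big\|(\tfrac1\eps x,y)\big\|_2^2\,\Big| \le \max\{\delta',\delta'^2\}\operatorname{rad}^2(\tilde T)
\end{equation*}
for a suitable Rademacher vector $D_\theta'$ on $\R^{n+m}$ and a distortion parameter $\delta'$ comparable to $\delta$; homogenizing in $\eps$ and polarizing to extract the bilinear term $\langle AD_\theta x, y\rangle$ then gives $\sup_{x\in T}\|AD_\theta x\|_{[k]} \lesssim \max\{\delta,\delta^2\}\eps$, and absorbing constants into $C_1,C_2$ (rescaling $\delta$) produces exactly the stated bound.

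I expect the main obstacle to be bookkeeping rather than a genuine difficulty: one has to check that the Rademacher sign pattern on the last $m$ coordinates is harmless (since $\Sigma_{k,m}$ is symmetric under coordinate sign flips, $D_\theta'$ restricted to those coordinates can be absorbed into the supremum over $y$), that the definition of $\tilde\delta$ in the statement matches the one required by Theorem~\ref{thm:ORS} for $\tilde T$ — here one uses $\operatorname{rad}(\tilde T)\simeq\operatorname{rad}(\Sigma_{k,m})\vee 1$ and the subadditivity of $\ell_*$ — and that the polarization step converting the $\ell_2$-near-isometry into a bound on $\sup_y\langle AD_\theta x,y\rangle$ does not lose more than an absolute constant. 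The sparsity level $s=C_1(1+\eta)$ must also be the same in Lemma~\ref{lem:new}, in the $[s_l]$-norm hypotheses, and in Theorem~\ref{thm:ORS}, which is why the statement is phrased with a single $s$; keeping the dyadic scales $s_l=2^l s$, $\tilde\delta_l = 2^{l/2}\tilde\delta$ aligned across all three inputs is the one place where care is genuinely required.
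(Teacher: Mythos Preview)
Your proposal is correct and follows essentially the same approach as the paper: rescale $T$ by $1/\eps$, express $\|AD_\theta x\|_{[k]}$ as a supremum of inner products over $\Sigma_{k,m}$, polarize to reduce to an $\ell_2$-isometry statement for the augmented matrix $[\,A\ \ \operatorname{Id}_m\,]$ acting via a combined Rademacher diagonal on $\tilde T$, invoke Lemma~\ref{lem:new} to get MRIP$(s,3\tilde\delta)$ for the augmented matrix, and then apply Theorem~\ref{thm:ORS}. The paper handles the extra Rademacher coordinates by introducing an independent Rademacher vector $\zeta\in\{-1,1\}^m$ and using $\|AD_\theta x\|_{[k]}=\|D_\zeta AD_\theta x\|_{[k]}$, which is exactly your observation that $\Sigma_{k,m}$ is invariant under coordinate sign flips.
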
	

\begin{proof}
	Let $\zeta\in \{-1,1\}^{m}$ be a Rademacher random vector that is independent of $\theta$. Set $T'=\tfrac{1}{\eps}T$. We have
	\begin{equation}\label{eq:rescaled_set}
	\sup_{x\in T} \Big\| A D_\theta x\Big\|_{[k]}=\eps\cdot \sup_{x\in T'} \Big\| A D_\theta x\Big\|_{[k]}
	\end{equation}
	and
	\begin{align*}
	\sup_{x\in T'} \Big\| A D_\theta x\Big\|_{[k]}  = \sup_{x\in T'} \Big\|D_{\zeta} A D_\theta x\Big\|_{[k]} 
	 = \sup_{x\in T'} \sup_{y\in \Sigma_{k,m}} \Big|\Big\langle D_{\zeta} A D_\theta x, y\Big\rangle\Big|.
	\end{align*}  
	Next,
	\begin{align*}
	 \Big|\Big\langle  A D_\theta x, D_{\zeta}y\Big\rangle\Big| 
	& =  \frac{1}{4}\Big|\Big\|  A D_\theta x+D_{\zeta}y\Big\|_2^2- \Big\| AD_\theta x-D_{\zeta}y\Big\|_2^2\Big| \\
	& = \frac{1}{4}\bigg|\left\|\begin{bmatrix}
	 AD_\theta  & D_{\zeta}
	\end{bmatrix}
	\begin{bmatrix}
	x \\ 
	y
	\end{bmatrix}
	\right\|_2^2 
	-  \left\|\begin{bmatrix}
	x \\ 
	y
	\end{bmatrix}
	\right\|_2^2 \\
	&\quad - \left(\left\|\begin{bmatrix}
	 A D_\theta & D_{\zeta}
	\end{bmatrix}
	\begin{bmatrix}
	x \\ 
	-y
	\end{bmatrix}
	\right\|_2^2 - \left\|\begin{bmatrix}
	x \\ 
	-y
	\end{bmatrix}
	\right\|_2^2\right) \bigg|.
	\end{align*}  
	Clearly,
	$$\begin{bmatrix}
	 A D_{\theta} & D_{\zeta}
	\end{bmatrix} = \begin{bmatrix}
	 A  & \operatorname{Id}_{m}
	\end{bmatrix}
	\begin{bmatrix}
	D_{\theta} & 0 \\
	0 & D_{\zeta}
	\end{bmatrix},
	$$
	where $\operatorname{Id}_{m}\in \R^{m\times m}$ is the identity matrix.
	Combining the calculations above, 
	\begin{align}\label{eq:setting_of_thm:ORS}
	\sup_{x\in T'} \Big\| A D_{\theta}x\Big\|_{[k]}
	\leq \frac{1}{2}\sup_{(x,y)^T\in \tilde{T}}\left|\left\|\begin{bmatrix}
	 A & \operatorname{Id}_{m}
	\end{bmatrix}
	\begin{bmatrix}
	D_{\theta} & 0 \\
	0 & D_{\zeta}
	\end{bmatrix}
	\begin{bmatrix}
	x \\ 
	y
	\end{bmatrix}
	\right\|_2^2 
	-  \left\|\begin{bmatrix}
	x \\ 
	y
	\end{bmatrix}
	\right\|_2^2\right|,
	\end{align}
	where $\tilde{T}=T'\times \Sigma_{k,m}$. 
	Since Lemma~\ref{lem:new} shows that 
	$\begin{bmatrix}
	A  & \operatorname{Id}_{m}
	\end{bmatrix}$ satisfies MRIP$(s, 3\tilde{\del})$, Theorem~\ref{thm:ORS} in combination with \eqref{eq:setting_of_thm:ORS} and \eqref{eq:rescaled_set}  
	implies that with probability at least
	$1-e^{-\eta}$,
	\begin{equation*}
	\sup_{x\in T} \Big\| A D_{\theta}x\Big\|_{[k]}\leq \frac{1}{2}\eps \max\{3\del,(3\del)^2\} \rad^2(\tilde{T}).
	\end{equation*}
	Using $\operatorname{rad}(\tilde{T})\leq \sqrt{2}$ and rescaling $C_2$ by an absolute constant factor yields the result.
\end{proof}

\begin{thm}\label{thm:k-norm_circulant}
Let $\xi\in \R^n$ be a centered random vector with independent, variance one, $K$-subgaussian entries. 
Let $\theta\in \{-1,1\}^n$ be an independent Rademacher random vector.
There exist absolute constants $c_1, c_2>0$ that only depend on $K$ such that the following holds.
Let $T\subset \eps B^n_2$ and $I\subset [n]$ with $|I|=m$. For every $k\leq m$ and $u>0$, if 
\begin{equation}
m\geq c_1 \max\{u^{-1}, u^{-2}\}(\eps^{-2}\ell_*(T)^2+ k\log(em/k))\log^8(n),
\end{equation} 
then 
\begin{equation*}
\sup_{x\in T} \Big\|\tfrac{1}{\sqrt{m}}R_I\Gamma_\xi D_{\theta}x\Big\|_{[k]}\leq  u \eps
\end{equation*}
with probability at least $$1-2\exp(-c_2 \min\{\sqrt{u}, u\} \sqrt{m}/(\eps^{-1}\ell_*(T) +\sqrt{k\log(em/k)})).$$
\end{thm}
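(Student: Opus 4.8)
The plan is to obtain this as an instance of Theorem~\ref{thm:new} applied to $A=\tfrac{1}{\sqrt m}R_I\Gamma_\xi$; since the hypotheses of Theorem~\ref{thm:new} are deterministic statements about $A$, it suffices to verify them on a high‑probability event of the randomness $\xi$, and then to invoke Theorem~\ref{thm:new} conditionally on that event using the independent Rademacher vector $\theta$. Abbreviate $\Lambda:=\eps^{-1}\ell_*(T)+\sqrt{k\log(em/k)}$. With $\tilde T=(\tfrac1\eps T)\times\Sigma_{k,m}$ as in Theorem~\ref{thm:new} we have $\tfrac1\eps T\subset B_2^n$ and $\Sigma_{k,m}\subset B_2^m$, hence $1\le\rad(\tilde T)\le\sqrt2$, and by subadditivity of the supremum together with the standard estimate $\ell_*(\Sigma_{k,m})\lesssim\sqrt{k\log(em/k)}$ one gets $\rad(\tilde T)\vee\ell_*(\tilde T)\simeq\Lambda$. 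Therefore the quantity $\tilde\delta$ of Theorem~\ref{thm:new} satisfies $\tilde\delta\simeq\del/\Lambda$. We choose $\del:=\min\{u,\sqrt u\}$; then $\max\{\del,\del^2\}=u$ and $\del^{-2}=\max\{u^{-1},u^{-2}\}$, so the conclusion of Theorem~\ref{thm:new} reads exactly $\sup_{x\in T}\|AD_\theta x\|_{[k]}\le u\eps$. Finally set $\eta:=c_0\min\{\sqrt u,u\}\sqrt m/\Lambda=:c_0p$ for a small constant $c_0$, and $s:=C_1(1+\eta)$.

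\textbf{Verifying the two hypotheses of Theorem~\ref{thm:new}.} For the MRIP condition we must show $A$ satisfies RIP$(s_\ell,\tilde\delta_\ell)$ for $\ell=1,\dots,L:=\lceil\log_2 n\rceil$, with $s_\ell=2^\ell s$, $\tilde\delta_\ell=2^{\ell/2}\tilde\delta$. Apply Theorem~\ref{thm:structure-4} with $(s,\delta)=(s_\ell,\tilde\delta_\ell)$: its requirement $m\ge c_1\tilde\delta_\ell^{-2}s_\ell\log^2(s_\ell)\log^2(n)$ is the $\ell$‑independent condition $m\gtrsim\tilde\delta^{-2}s\log^2(ns)\log^2(n)$ (the $2^\ell$'s cancel and $s_\ell\le 2ns$), and it holds with failure probability $2\exp(-c_2\tilde\delta_\ell^2 m/s_\ell)=2\exp(-c_2\tilde\delta^2 m/s)$, uniform in $\ell$; a union bound over $\ell$ costs a factor $L$. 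For the sparse $[s_\ell]$‑norm condition, when $s_\ell\le n$ we invoke Theorem~\ref{thm:structure-3} with $s=s_\ell$ and $u'=u'_\ell:=\tilde\delta\sqrt m/(2c_2\sqrt s)$; the same lower bound on $m$ forces $u'_\ell\ge\log(en)\log(s_\ell)\vee 1$, so with probability $\ge1-\exp(-c\tilde\delta^2 m/s)$ one obtains $\sup_{x\in\Sigma_{s_\ell,n}}\|Ax\|_{[s_\ell]}\le \tfrac{c_2}{\sqrt m}\sqrt{s_\ell}(\log(en)\log(s_\ell)+u'_\ell)\le 2^{\ell/2}\tilde\delta=\tilde\delta_\ell\le\max\{\tilde\delta_\ell,\tilde\delta_\ell^2\}$. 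When $s_\ell>n$ we have $\Sigma_{s_\ell,n}=B_2^n$ and $\|Ax\|_{[s_\ell]}=\|Ax\|_2$, so the supremum is $\|A\|_{\mathrm{op}}\le\tfrac1{\sqrt m}\|\Gamma_\xi\|_{\mathrm{op}}$; a routine bound ($\|\Gamma_\xi\|_{\mathrm{op}}=\max_k|\hat\xi(k)|\lesssim\sqrt{n\log n}$ with probability $\ge1-2n^{-c}$, by subgaussian concentration of $|\hat\xi(k)|^2$ and a union bound over the $n$ Fourier modes) together with $\tilde\delta_\ell\ge\tilde\delta\sqrt{n/s}$ and $m\gtrsim\tilde\delta^{-2}s\log n$ gives $\|A\|_{\mathrm{op}}\lesssim\sqrt{n\log n/m}\le\tilde\delta_\ell$. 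Another factor $L$ is lost on union bound.

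\textbf{Assembling the estimate.} On the intersection $\mathcal E$ of these events, which satisfies $\bP(\mathcal E^c)\lesssim L\exp(-c\tilde\delta^2 m/s)+2n^{-c}$, all deterministic hypotheses of Theorem~\ref{thm:new} hold for $A$, and since $\theta$ is independent of $\xi$, conditionally on $\mathcal E$ that theorem gives $\sup_{x\in T}\|AD_\theta x\|_{[k]}\le u\eps$ with probability $\ge1-e^{-\eta}$; hence the bound holds unconditionally with probability $\ge1-\bP(\mathcal E^c)-e^{-\eta}$. It remains to identify the exponents. With $s\simeq1+\eta\simeq p$ (the case $p\lesssim1$ never occurs: the hypothesis $m\gtrsim\max\{u^{-1},u^{-2}\}\Lambda^2\log^8 n$ forces $p\gtrsim\log^4 n$) and $\tilde\delta^2\simeq\del^2/\Lambda^2=\min\{u,u^2\}/\Lambda^2$, the elementary identities $\min\{u,u^2\}=\min\{\sqrt u,u\}^2$ and $\min\{u,u^2\}/\min\{\sqrt u,u\}=\min\{\sqrt u,u\}$ yield $\tilde\delta^2 m/s\simeq p\simeq\eta$, so each of $e^{-\eta}$, $L\exp(-c\tilde\delta^2 m/s)$ and $2n^{-c}$ is $\lesssim\exp(-c'p)$ — the $L$ and $n^{-c}$ losses being absorbed precisely because $p\gtrsim\log^4 n$. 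Choosing $c_0$ small and renaming constants gives the probability $1-2\exp(-c_2p)$. One also checks the hypothesis suffices for the lower bounds on $m$ used above: $\tilde\delta^{-2}s\simeq\Lambda\sqrt m/\min\{\sqrt u,u\}$, and using $m\le n$ (so $s\simeq p\lesssim\sqrt u\sqrt n$ is polynomial in $n$ in the only interesting regime, whence $\log(ns)\lesssim\log n$) the hypothesis $m\gtrsim\max\{u^{-1},u^{-2}\}\Lambda^2\log^8 n$ gives $\tilde\delta^{-2}s\log^2(ns)\log^2(n)\lesssim m$.

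\textbf{Main obstacle.} The genuine difficulty is the circular parameter coupling in the last step: the exponent $\eta$ that governs the success of Theorem~\ref{thm:new} also fixes the sparsity level $s=C_1(1+\eta)$, which then must stay below $m/\polylog(n)$ for the $m$‑lower bounds of Theorems~\ref{thm:structure-4} and \ref{thm:structure-3} to be met, while being large enough that $\tilde\delta^2 m/s$ matches $\eta$ so that all three failure probabilities have the same order. Pinning down that the self‑consistent choice $\eta\simeq\min\{\sqrt u,u\}\sqrt m/\Lambda$ does both at once is the crux; it rests on the small $\min/\max$ identities for powers of $u$ above and on the observation that the hypothesis already forces $\eta\gtrsim\log^4 n$, which is what lets the $\lceil\log_2 n\rceil$ and $n^{-c}$ union‑bound losses be swallowed. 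The remaining ingredients — the operator‑norm bound on the circulant generator to cover sparsity levels exceeding $n$, and the computation of $\rad(\tilde T),\ell_*(\tilde T)$ — are routine.
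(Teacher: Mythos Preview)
Your proposal is correct and follows essentially the same route as the paper: verify the deterministic hypotheses of Theorem~\ref{thm:new} for $A=\tfrac{1}{\sqrt m}R_I\Gamma_\xi$ on a high-probability event in $\xi$ using Theorems~\ref{thm:structure-3} and~\ref{thm:structure-4}, then apply Theorem~\ref{thm:new} conditionally on that event with the independent Rademacher vector $\theta$, and finally set $\eta=\tilde\delta\sqrt m$ and substitute $\delta=\min\{\sqrt u,u\}$. The paper carries out exactly these steps, with the same computations $\rad(\tilde T)\simeq 1$, $\ell_*(\tilde T)\simeq\Lambda$, $\tilde\delta\simeq\delta/\Lambda$, and the same self-consistent choice $\eta\simeq\tilde\delta\sqrt m$ that balances $\tilde\delta^2 m/s$ against $\eta$; your ``main obstacle'' paragraph accurately captures the crux. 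One small difference: you explicitly split off the regime $s_\ell>n$ and handle it via an operator-norm bound $\|\Gamma_\xi\|_{\mathrm{op}}\lesssim\sqrt{n\log n}$, whereas the paper simply runs the union bound over $\ell=1,\dots,\lceil\log_2 n\rceil$ without singling out that case (implicitly relying on $\Sigma_{s_\ell,n}=B_2^n$ once $s_\ell\ge n$, so Theorem~\ref{thm:structure-3} at level $n$ suffices). Your treatment is slightly more explicit but not materially different.
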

\begin{proof} Let $s\geq 1$, $\del>0$ and set $A=\tfrac{1}{\sqrt{m}}R_I\Gamma_{\xi}$.
From Theorem~\ref{thm:structure-4} and Theorem~\ref{thm:structure-3} it follows that if 
\begin{equation}\label{eq:cond_1_proof_thm:k-norm}
m\geq c_1 \del^{-2}s\log^2(s)\log^2(n),
\end{equation}	
then with probability at least $1-2\exp(-c_2\del^2 m/s)$ the following holds: 
$A$ satisfies RIP$(s,\del)$ and 
\begin{equation}\label{eq:cond_2_proof_thm:k-norm}
\sup_{x\in \Sigma_{s,n}}\|Ax\|_{[s]}\leq \del.
\end{equation}
A union bound shows that if 
\begin{equation}
m\geq c_1 \del^{-2}s\log^4(n),
\end{equation}	
then, with probability at least $1-2\exp(-c_2\del^2 m/s)$, the matrix $A$ satisfies MRIP$(s,\del)$ and \eqref{eq:cond_2_proof_thm:k-norm}
holds for $s=s_l=2^l s$ and $\del=\del_l=2^{l/2}\del$ simultaneously for all $l=1, \ldots, \lceil \log_2(n)\rceil$.
Next, fix 
$$s=C_1(1+\eta), \qquad \tilde{\delta} = C_2 \del\frac{\operatorname{rad}(\tilde{T})}{\max\{\operatorname{rad}(\tilde{T}),\ell_*(\tilde{T})\}}$$
where $\tilde{T}=T'\times \Sigma_{k,m}$, $T'=\tfrac{1}{\eps}T$ and $C_1, C_2>0$ are absolute constants from 
Theorem~\ref{thm:new}. 
Clearly,
$$\ell_*(\tilde{T}) \simeq \ell_*(T')+\ell_*(\Sigma_{k,m})\simeq \ell_*(T') + \sqrt{k\log(em/k)}$$
and 
$$\operatorname{rad}(\tilde{T}) \simeq \operatorname{rad}(T')+1\simeq 1,$$
which implies $$\tilde{\del}\simeq \frac{\del}{\ell_*(T') +\sqrt{k\log(em/k)}}.$$
Consequently, if 
\begin{equation}\label{eq:cond_|I|}
m\gtrsim \del^{-2} (\ell_*(T')^2 + k\log(em/k)) (1+\eta) \log^4(n),
\end{equation} 
then the assumptions of Theorem~\ref{thm:new} are satisfied for $A=\tfrac{1}{\sqrt{m}}R_I\Gamma_{\xi}$ with probability at least 
$1-2\exp(-c_2\tilde{\del}^2 m/s)$. Hence, if \eqref{eq:cond_|I|} holds then 
\begin{equation*}
\sup_{x\in T} \Big\| A  D_{\theta}x\Big\|_{[k]}\leq \max\{\del,\del^2\}  \eps
\end{equation*}
with probability at least $1-2\exp(-c_2\tilde{\del}^2 m/\eta)-\exp(-\eta)$.
Choosing $\eta=\tilde{\delta}\sqrt{m}$ we conclude that if 
\begin{equation*}
m\gtrsim \del^{-2}(\ell_*(T')^2+ k\log(em/k))\log^8(n)
\end{equation*}
then 
\begin{equation*}
\sup_{x\in T} \Big\|A D_{\theta}x\Big\|_{[k]}\leq \max\{\del,\del^2\} \eps
\end{equation*}
with probability at least $1-2\exp(-c_7 \del \sqrt{m}/(\ell_*(T') +\sqrt{k\log(em/k)}))$.
Since $$\max\{\min\{\sqrt{u},u\}, \min\{u,u^2\}\}=u,$$ the result now follows by substituting $\del$ by $\min\{\sqrt{u}, u\}$.
\end{proof}
Applying Theorem~\ref{thm:k-norm_circulant} with $u=\tfrac{C(\eps^{-1}\ell_*(T)+\sqrt{k\log(em/k)})u'\log^4(n)}{\sqrt{m}}$ where 
$C>0$ denotes an absolute constant and $u'\geq 1$ yields the following corollary: 
\begin{coro}\label{coro:k-norm_circulant}
	Let $\xi\in \R^n$ be a centered random vector with independent, variance one, $K$-subgaussian entries. 
	Let $\theta\in \{-1,1\}^n$ be an independent Rademacher random vector.
	There exist absolute constants $c_1, c_2, c_3>0$ that only depend on $K$ such that the following holds.
	Let $T\subset \eps B^n_2$ and $I\subset [n]$ with $|I|=m$. For every $k\leq m$ and $u'\geq 1$, if
	\begin{equation*}
	m\geq c_1(\eps^{-2}\ell_*(T)^2+ k\log(em/k))(u')^2\log^8(n),
	\end{equation*}
	then 
	\begin{equation*}
	\sup_{x\in T} \Big\|R_I\Gamma_\xi D_{\theta}x\Big\|_{[k]}\leq c_2(\ell_*(T)+\eps \sqrt{k\log(em/k)})u' \log^4(n)
	\end{equation*}
	with probability at least $1-2\exp(-c_3 u' \log^4(n))$.
\end{coro}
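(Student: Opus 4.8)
The plan is to obtain the Corollary directly from Theorem~\ref{thm:k-norm_circulant} via the substitution flagged in the preceding sentence, so that the only real work is a change of variables together with bookkeeping of constants. Abbreviate
\[
L := \eps^{-1}\ell_*(T) + \sqrt{k\log(em/k)},
\]
and note that $\tfrac{1}{2}L^2 \le \eps^{-2}\ell_*(T)^2 + k\log(em/k) \le L^2$, so the complexity factors appearing in the two hypotheses agree up to a constant. I would then apply Theorem~\ref{thm:k-norm_circulant} with
\[
u = \frac{C\,L\,u'\log^4(n)}{\sqrt m},
\]
where $C>0$ is an absolute constant to be taken large enough in terms of the constant $c_1$ of Theorem~\ref{thm:k-norm_circulant}.

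First I would verify that the hypothesis $m \ge c_1\max\{u^{-1},u^{-2}\}(\eps^{-2}\ell_*(T)^2 + k\log(em/k))\log^8(n)$ of Theorem~\ref{thm:k-norm_circulant} follows from the hypothesis $m \gtrsim L^2(u')^2\log^8(n)$ of the Corollary. The latter gives $\sqrt m \gtrsim Lu'\log^4(n)$, hence $u \le c_0$ for an absolute constant $c_0$; in particular $u$ is bounded above. Moreover $u^{-2}L^2\log^8(n) = m/(C^2(u')^2) \le m/C^2$, and, using $\sqrt m \gtrsim Lu'\log^4(n)$ once more, $u^{-1}L^2\log^8(n) = \sqrt m\,L\log^4(n)/(Cu') \lesssim m/(C(u')^2)$; since $u' \ge 1$ it follows that $\max\{u^{-1},u^{-2}\}(\eps^{-2}\ell_*(T)^2 + k\log(em/k))\log^8(n) \le m/c_1$ as soon as $C$ is large enough relative to $c_1$. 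Thus Theorem~\ref{thm:k-norm_circulant} applies with this $u$.

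It then remains to push the conclusion and the probability bound through the substitution. Multiplying $\sup_{x\in T}\|\tfrac{1}{\sqrt m}R_I\Gamma_\xi D_\theta x\|_{[k]} \le u\eps$ by $\sqrt m$ gives $\sup_{x\in T}\|R_I\Gamma_\xi D_\theta x\|_{[k]} \le \sqrt m\,u\eps = CLu'\log^4(n)\,\eps = C(\ell_*(T) + \eps\sqrt{k\log(em/k)})u'\log^4(n)$, which is the asserted bound with $c_2 = C$. For the probability, the exponent $c_2\min\{\sqrt u,u\}\sqrt m/L$ equals $c_2Cu'\log^4(n)\cdot\min\{\sqrt u,u\}/u = c_2Cu'\log^4(n)\cdot\min\{1,u^{-1/2}\}$ because $\sqrt m/L = Cu'\log^4(n)/u$; since $u \le c_0$, the factor $\min\{1,u^{-1/2}\}$ is bounded below by an absolute constant, so the exponent is $\gtrsim u'\log^4(n)$ and the failure probability is at most $2\exp(-c_3u'\log^4(n))$. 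The only mildly delicate point I anticipate is arranging a single choice of $C$ to do all three jobs at once — keep $u$ below an absolute constant, validate the hypothesis of Theorem~\ref{thm:k-norm_circulant}, and leave the probability exponent in the advertised shape — but each of these merely requires $C$ large in terms of $c_1,c_2$, so there is no genuine obstacle: the Corollary is a pure reparametrization of Theorem~\ref{thm:k-norm_circulant}.
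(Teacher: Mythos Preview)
Your proposal is correct and follows exactly the approach the paper indicates: the paper's entire proof is the one-line remark ``Applying Theorem~\ref{thm:k-norm_circulant} with $u=\tfrac{C(\eps^{-1}\ell_*(T)+\sqrt{k\log(em/k)})u'\log^4(n)}{\sqrt{m}}$ \dots\ yields the following corollary,'' and you have carefully supplied the bookkeeping behind that substitution. Your treatment of the three verifications (hypothesis, conclusion, probability exponent) is accurate, including the use of $u\le c_0$ to control $\min\{1,u^{-1/2}\}$ from below.
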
	

The second ingredient for the proof of Theorem~\ref{thm:Gaussian_Circulant_Dithered_main} is a consequence of Theorem~\ref{thm:k-norm_circulant}.

\begin{lemma}\label{lem:bias_terms}
Let $\xi\in \R^n$ be a centered random vector with independent, variance one, $K$-subgaussian entries. 
Let $\theta\in \{-1,1\}^n$ be an independent Rademacher random vector.
Set $A=R_I\Gamma_\xi D_{\theta}$ for $I\subset[n]$ with $|I|=m$.
There exist absolute constants $c_1, c_2, c_3>0$ that only depend on $K$ such that the following holds.
For any $T\subset RB^n_2$, $\la>0$ and $\del\in (0, 1)$,
if 
\begin{equation*}
m\geq c_1\delta^{-1}R^{-2}\ell_*(T)^2 \log^{8}(n), \quad  \la\geq c_2 R\sqrt{\log(e\la^2/\del R^2)}\log^4(n),
\end{equation*}
then 
\begin{equation*}
\sup_{x,y\in T} \frac{1}{m}\sum_{i=1}^m |\skp{a_i}{x}||\skp{a_i}{y}|1_{|\skp{a_i}{x}|> \la}\leq \del R^2
\end{equation*}
with probability at least $1-2\exp(-c_3 R\sqrt{\del m}/\ell_*(T))-2\exp(-c_3 \lambda/R\sqrt{\log(e\la^2/\del R^2)})$.
\end{lemma}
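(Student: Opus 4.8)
The plan is to control the "large deviation" sum $\frac{1}{m}\sum_{i=1}^m |\skp{a_i}{x}||\skp{a_i}{y}|1_{|\skp{a_i}{x}|>\la}$ by splitting it according to how large each factor is, and then reducing to a bound on $\|Az\|_{[k]}$-type quantities to which Corollary~\ref{coro:k-norm_circulant} applies. First I would observe that if $|\skp{a_i}{x}|>\la$, then at least one of $|\skp{a_i}{x}|$ or $|\skp{a_i}{y}|$ is comparable to $\lambda$ or larger, or else both are large. More concretely, use $|\skp{a_i}{x}||\skp{a_i}{y}| \leq \frac12(|\skp{a_i}{x}|^2 + |\skp{a_i}{y}|^2)$ together with the indicator $1_{|\skp{a_i}{x}|>\la}$ so that each surviving term carries a factor that is at least $\lambda^2$ on one coordinate; this lets me bound the whole sum by a constant times $\frac{1}{m}\sum_i (|\skp{a_i}{x}|^2 + |\skp{a_i}{y}|^2) 1_{\max\{|\skp{a_i}{x}|,|\skp{a_i}{y}|\}>\la/2}$. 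The point of forcing a threshold of order $\lambda$ is that, since $x,y\in RB_2^n$ and $A$ behaves like an (approximate) isometry in the relevant sense, the number of indices $i$ with $|\skp{a_i}{x}| > \lambda/2$ is small — it should be controlled by $k \simeq \delta m R^2/\lambda^2$ coordinates once $\lambda \gtrsim R\sqrt{\log(e\lambda^2/\delta R^2)}\log^4 n$, exactly as in the proof of Theorem~\ref{thm:GaussianDithered} via \eqref{eqn:largeCoeffk-norm1}.

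The key steps, in order, are: (1) Reduce, as above, to bounding $\sup_{x\in T}\frac{1}{m}\|Ax\|^2$ restricted to the index set $E_x = \{i : |\skp{a_i}{x}|>\la/2\}$, i.e.\ $\sup_{x\in T}\frac{1}{m}\|Ax\|_{[|E_x|]}^2$, plus the symmetric term in $y$. (2) Show $\sup_{x\in T}|E_x| \le k := \lfloor \delta m R^2/\lambda^2\rfloor$ with high probability; this follows by the same argument as \eqref{eqn:largeCoeffk-norm1}-\eqref{eqn:largeCoeffk-norm2}: $(Ax)^*_{k} \le (\frac{1}{k}\|Ax\|_{[k]}^2)^{1/2}$, and Corollary~\ref{coro:k-norm_circulant} applied to $T\subset RB_2^n$ (so $\eps = R$) with this $k$ gives $\sup_{x\in T}\|Ax\|_{[k]} \lesssim (\ell_*(T) + R\sqrt{k\log(em/k)})u'\log^4 n$, which is $\le \sqrt{k}\cdot(\la/2)$ — hence $(Ax)^*_k \le \la/2$ — provided $m\gtrsim \delta^{-1}R^{-2}\ell_*(T)^2\log^8 n$ and $\la \gtrsim R\sqrt{\log(em/k)}\log^4 n \simeq R\sqrt{\log(e\la^2/\delta R^2)}\log^4 n$ (using $m/k \simeq \la^2/\delta R^2$); the choice $u'\simeq 1$ gives the stated probability $1-2\exp(-c_3\lambda/(R\sqrt{\log(e\la^2/\delta R^2)}))$ for this event, roughly. (3) On that event, $E_x$ has at most $k$ elements, so $\frac{1}{m}\sum_{i\in E_x}|\skp{a_i}{x}|^2 = \frac{1}{m}\|Ax\|_{[k]}^2 \le \frac{1}{m}\big(c_2(\ell_*(T)+R\sqrt{k\log(em/k)})u'\log^4 n\big)^2$; dividing out and using $k\log(em/k)/m \simeq \delta R^2\log(e\la^2/\delta R^2)/\la^2$ together with the hypothesis on $\lambda$, one checks this is $\lesssim \delta R^2$, after possibly rescaling $\delta$ by an absolute constant. (4) Apply the same to the $y$-term (note we need this for the index set $E_x$, not $E_y$, but since $\frac{1}{m}\|Ay\|_{[k]}^2$ with $k$ the bound on $|E_x|$ is what appears, the same estimate works uniformly over $y\in T$), and combine; the first probability term $1-2\exp(-c_3 R\sqrt{\delta m}/\ell_*(T))$ comes from Corollary~\ref{coro:k-norm_circulant} with the appropriate choice of $u'$ scaling as $\sqrt{\delta m}\,\eps^{-1}\ell_*(T)^{-1}$ type quantity.

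The main obstacle is the bookkeeping in matching the probabilistic tail and the two regimes of the threshold: Corollary~\ref{coro:k-norm_circulant} has a "localized Gaussian complexity" part $\ell_*(T)$ and a "covering/sparsity" part $k\log(em/k)$, and one must choose $u'$ and balance $m$ so that \emph{both} the bound $\sup_x|E_x|\le k$ and the bound $\frac{1}{m}\|Ax\|_{[k]}^2 \lesssim \delta R^2$ hold, while the resulting failure probabilities assemble into the two exponential terms stated — one driven by $\ell_*(T)$ (giving $\exp(-c_3 R\sqrt{\delta m}/\ell_*(T))$) and one driven by the threshold $\lambda$ (giving $\exp(-c_3\lambda/(R\sqrt{\log(e\la^2/\delta R^2)}))$). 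Getting the logarithmic factors $\log^8 n$ vs $\log^4 n$ to land correctly (they enter quadratically through $u'$ in the sample complexity but linearly in the norm bound) requires care, but it is the same calculus already carried out in the proof of Corollary~\ref{coro:k-norm_circulant}; I do not expect any genuinely new difficulty beyond this.
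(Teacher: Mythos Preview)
Your strategy—bound the cardinality of $E_x=\{i:|\langle a_i,x\rangle|>\lambda\}$ by some $k$ and then control $\sup_{z\in T}\tfrac{1}{m}\|Az\|_{[k]}^2$—is exactly the paper's. Two points streamline it considerably. First, the paper uses Cauchy--Schwarz rather than AM--GM, obtaining directly
\[
\frac{1}{m}\sum_{i\in S}|\langle a_i,x\rangle||\langle a_i,y\rangle|\le\Bigl(\sup_{z\in T}\tfrac{1}{\sqrt m}\|Az\|_{[k]}\Bigr)^2
\]
once $|S|\le k$; this avoids your artificial $\lambda/2$ threshold and the asymmetry between the $x$- and $y$-terms. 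Second, and this is the main observation you are circling without landing on: with the choice $k=m\delta R^2/\lambda^2$, the cardinality condition $\tfrac{1}{\sqrt m}\|Ax\|_{[k]}\le\lambda\sqrt{k/m}$ and the norm condition $\tfrac{1}{\sqrt m}\|Ax\|_{[k]}\le\sqrt{\delta}\,R$ are \emph{the same inequality}, since $\lambda\sqrt{k/m}=\sqrt{\delta}\,R$. So your steps (2) and (3) collapse into a single event, handled by one application of Theorem~\ref{thm:k-norm_circulant} with $u=\sqrt{\delta}$, $\eps=R$.

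On the tool choice: invoke Theorem~\ref{thm:k-norm_circulant} directly, not Corollary~\ref{coro:k-norm_circulant}. The corollary's failure probability is $2\exp(-c_3 u'\log^4 n)$; with $u'\simeq 1$ this matches the lemma's claimed bound only when $m$ and $\lambda$ sit exactly at their thresholds, and for larger values it is strictly weaker, while enlarging $u'$ to compensate would force a sample-complexity condition stronger than the lemma assumes. Theorem~\ref{thm:k-norm_circulant} instead gives failure probability $2\exp\bigl(-c\sqrt{\delta m}\big/(\eps^{-1}\ell_*(T)+\sqrt{k\log(em/k)})\bigr)$, and splitting that denominator into its two summands produces precisely the two exponential terms stated in the lemma.
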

\begin{proof}
For $k\in [m]$, define the event 
\begin{equation*}
\mathcal{A}_k=\Big\{\sup_{x\in T}  \sum_{i=1}^m 1_{|\skp{a_i}{x}|> \la}\leq k\Big\} .
\end{equation*}
On the event $\mathcal{A}_k$,
\begin{align*}
&\sup_{x,y\in T} \frac{1}{m}\sum_{i=1}^m |\skp{a_i}{x}||\skp{a_i}{y}|1_{|\skp{a_i}{x}|> \la}\\
&\qquad \leq \sup_{x,y\in T} \max_{S\subset [m], |S|\leq k}\frac{1}{m}\sum_{i\in S}|\skp{a_i}{x}||\skp{a_i}{y}|\\
&\qquad \leq \sup_{x,y\in T} \max_{S\subset [m], |S|\leq k}\frac{1}{m}\Big(\sum_{i\in S}|\skp{a_i}{x}|^2\Big)^{1/2} \Big(\sum_{i\in S}|\skp{a_i}{y}|^2\Big)^{1/2}\\
&\qquad \leq \Big(\sup_{x\in T} \Big\|\tfrac{1}{\sqrt{m}}Ax\Big\|_{[k]}\Big)^{2}.
\end{align*}
Analogously to \eqref{eqn:largeCoeffk-norm1} and \eqref{eqn:largeCoeffk-norm2}, we see that
\begin{align*}
\Big\{\sup_{x\in T} \tfrac{1}{\sqrt{m}}\|Ax\|_{[k]}\leq \la\sqrt{\tfrac{k}{m}}  \Big\} \subset \mathcal{A}_k.
\end{align*}
Consequently, for every $\del>0$ and every $k\in [m]$,
\begin{align*}
\bP\Big(\sup_{x,y\in T} \frac{1}{m}\sum_{i=1}^m |\skp{a_i}{x}||\skp{a_i}{y}|1_{|\skp{a_i}{x}|> \la}>\del R^2\Big)
&\leq \bP\Big(\sup_{x\in T} \|\tfrac{1}{\sqrt{m}}Ax\|_{[k]}>\la\sqrt{\tfrac{k}{m}}\Big)\\
&\quad + \bP\Big(\sup_{x\in T} \|\tfrac{1}{\sqrt{m}}Ax\|_{[k]}>\sqrt{\del}R\Big).
\end{align*}
For $k=m\del R^2/\la^2\in [m]$ we obtain
\begin{align*}
&\bP\Big(\sup_{x,y\in T} \frac{1}{m}\sum_{i=1}^m |\skp{a_i}{x}||\skp{a_i}{y}|1_{|\skp{a_i}{x}|> \la}>\del R^2\Big)\\
&\leq 2\bP\Big(\sup_{x\in T} \|\tfrac{1}{\sqrt{m}}Ax\|_{[m\del R^2/\la^2]}>\sqrt{\del} R\Big).
\end{align*}
Applying Theorem~\ref{thm:k-norm_circulant} with $k=m\del R^2/\la^2, \eps=R$ and $u=\sqrt{\del}$ yields the result. 
\end{proof}

The final ingredient for the proof of Theorem~\ref{thm:Gaussian_Circulant_Dithered_main} is the following observation, which is analogous to Lemma~\ref{lem:exp}. 	
\begin{lemma}
\label{lem:exp_mult} 
Let $\lambda>0$ and let $\tae, \tae'$ be independent and uniformly distributed in $[-\la,\la]$. Define the function 
\begin{equation*}
\phi_\la(s)=(|s|-\la)1_{|s|> \la} \qquad \text{for } s\in \R.
\end{equation*}	
Then for $a,b\in \R$,
\begin{align*}
|\la^2\E[\sgn(a+\tae)\sgn(b+\tae')] - ab| \leq \phi_\la(a)\,|b| + |a|\,\phi_\la(b). 
\end{align*}
\end{lemma}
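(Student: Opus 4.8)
The plan is to exploit the independence of $\tae$ and $\tae'$ to factor the expectation, reduce to a one-variable computation, and then telescope. First, by independence,
\[
\E[\sgn(a+\tae)\sgn(b+\tae')] = \E[\sgn(a+\tae)]\cdot\E[\sgn(b+\tae')].
\]
Next I would compute $\psi_\la(a) := \la\,\E[\sgn(a+\tae)]$ explicitly. Since $-\tae$ is uniform on $[-\la,\la]$, one has $\E[\sgn(a+\tae)] = \bP(\tae>-a)-\bP(\tae<-a)$, and distinguishing the cases $a>\la$, $|a|\leq\la$, $a<-\la$ gives $\psi_\la(a) = \la\sgn(a)$ if $|a|>\la$ and $\psi_\la(a)=a$ if $|a|\leq\la$; equivalently, $\psi_\la(a)$ is $a$ truncated to $[-\la,\la]$. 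Two elementary consequences of this formula are the only facts I will use: $|\psi_\la(a)|\leq|a|$ and $|a-\psi_\la(a)| = \phi_\la(a)$, both immediate by inspection of the two cases. With this, and again using independence, $\la^2\E[\sgn(a+\tae)\sgn(b+\tae')] = \psi_\la(a)\psi_\la(b)$, so the left-hand side of the claim equals $|\psi_\la(a)\psi_\la(b)-ab|$.

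Finally I telescope: writing
\[
\psi_\la(a)\psi_\la(b)-ab = \psi_\la(a)\bigl(\psi_\la(b)-b\bigr) + \bigl(\psi_\la(a)-a\bigr)b,
\]
the triangle inequality together with $|\psi_\la(a)|\leq|a|$, $|\psi_\la(b)-b| = \phi_\la(b)$ and $|\psi_\la(a)-a|=\phi_\la(a)$ yields
\[
|\psi_\la(a)\psi_\la(b)-ab| \leq |a|\,\phi_\la(b) + \phi_\la(a)\,|b|,
\]
which is exactly the assertion.

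There is no serious obstacle here; the only point requiring (minor) care is the case analysis in computing $\psi_\la$, in particular checking that the boundary values $a=\pm\la$ are handled consistently with the definition $\phi_\la(s)=(|s|-\la)1_{|s|>\la}$ used in this lemma (they are, since $|a|-\la=0$ there anyway, and a single value of $\tae$ has probability zero). Compared to Lemma~\ref{lem:exp}, the argument is in fact shorter, precisely because independence of $\tae$ and $\tae'$ lets us avoid a joint case analysis in $(a,b)$.
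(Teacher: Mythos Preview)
Your proof is correct and follows essentially the same approach as the paper: factor via independence, compute $\lambda\,\E[\sgn(a+\tae)]$ explicitly, and then telescope. The only difference is organizational---the paper carries out the final bound by a three-case analysis on whether $|a|,|b|$ exceed $\lambda$, whereas you abstract the two properties $|\psi_\la(a)|\le|a|$ and $|a-\psi_\la(a)|=\phi_\la(a)$ and apply the telescoping identity uniformly, which is a bit cleaner.
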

\begin{proof}
By independence of $\tae$ and $\tae'$,
\begin{equation}\label{eq:lem:exp_mult_independence}
\E[\sgn(a+\tae)\sgn(b+\tae')]=\E[\sgn(a+\tae)]\E[\sgn(b+\tae')].
\end{equation}
If $|a|\leq \lambda$, then
\begin{align}\label{eq:lem:exp_mult_sign}
\E[\sgn(a+\tae)]&=\bP(a+\tae\geq 0) -  \bP(a+\tae< 0)\\ \nonumber
&=\bP(\tae\geq -a) -  \bP(\tae< -a)\\ \nonumber
&=\frac{\la+a}{2\la} - \frac{\la-a}{2\la}\\ \nonumber
&=\frac{a}{\la}.
\end{align}
Moreover, if $|a|>\la$, then $\E[\sgn(a+\tae)]=\sgn(a)$.
We distinguish three cases:
\begin{enumerate}
\item[1)] $|a|\leq \la$ and $|b|\leq \la$,
\item[2)] $|a|\leq \la$ and $|b|>\la$ or $|a|> \la$ and $|b|\leq\la$,
\item[3)] $|a|> \la$ and $|b|> \la$.
\end{enumerate}
In the first case, we have $\phi_\la(a)=\phi_\la(b)=0$. Hence, we need to show that $$\E[\sgn(a+\tae)\sgn(b+\tae')] = \frac{ab}{\la^2}.$$
This follows from \eqref{eq:lem:exp_mult_independence} and \eqref{eq:lem:exp_mult_sign}. In the second case, if $|a|\leq \la$ and $|b|>\la$,
then $\E[\sgn(a+\tae)\sgn(b+\tae')]=\frac{a}{\la}\sign(b)$. Therefore, 
\begin{align*}
|\la^2\E[\sgn(a+\tae)\sgn(b+\tae')]-ab|=|a\la\sign(b)-ab|&=|a||\la \sgn(b)-|b|\sgn(b)|\\
&=|a|(|b|-\la)\\
&=|a|\phi_\la(b).
\end{align*}
Analogously we obtain $|\la^2\E[\sgn(a+\tae)\sgn(b+\tae')]-ab|\leq |b|\phi_\la(a)$ if $|a|>\la$ and $|b|\leq \la$.
In the third case, we have
\begin{align*}
|\la^2\E[\sgn(a+\tae)\sgn(b+\tae')]-ab|&=|\la \sign(a) \la \sign(b)-|a| \sign(a) |b| \sign(b)|\\
&\leq |\la \sign(a) \la \sign(b) - \la \sign(a)  |b| \sign(b)|\\
&\quad + |\la \sign(a)  |b| \sign(b) - |a| \sign(a) |b| \sign(b)|\\
&=\la(|b|-\la)  + (|a|-\la)|b|\\
&\leq |a|(|b|-\la)  + (|a|-\la)|b|.
\end{align*}
This shows the result.
\end{proof}

We are now ready to prove the main result of this section.\\ 

\begin{proof}[Proof of Theorem~\ref{thm:Gaussian_Circulant_Dithered_main}]
For $r\leq  \del R$ let $N_r \subset \D$ be a minimal $r$-net of $\D$ with respect to the Euclidean metric. For every $x\in \D$ we fix a $\pi(x)$ in $\text{argmin}_{z\in N_r}\|x-z\|_2$.
By the triangle inequality,
\begin{align*}
\big|\tfrac{\la^2}{m}\skp{f(x)}{f'(y)}- \skp{x}{y}\big|&\leq \big|\tfrac{\la^2}{m}\skp{f(x)}{f'(y)}- \tfrac{\la^2}{m}\skp{f(\pi(x))}{f'(\pi(y))}\big|\\ 
&\quad + \big|\tfrac{\la^2}{m}\skp{f(\pi(x))}{f'(\pi(y))} - \E_{\tau, \tau'}\tfrac{\la^2}{m}\skp{f(\pi(x))}{f'(\pi(y))}\big|\\
&\quad + \big|\E_{\tau, \tau'}\tfrac{\la^2}{m}\skp{f(\pi(x))}{f'(\pi(y))} - \skp{\pi(x)}{\pi(y)}\big|\\
&\quad + \big|\skp{\pi(x)}{\pi(y)}- \skp{x}{y}\big|=: (1)+ (2) + (3) + (4).
\end{align*}
Next, we bound each of the four summands above uniformly for all $x, y\in \D$.
Clearly, $(4)\leq 2R r\leq 2\del R^2$. \par \vskip 0.1cm
\textbf{Estimate of $(2)$.}
Hoeffding's inequality and the union bound imply that if $m\gtrsim \del^{-2}\log(\mathcal{N}(\D,r)/\eta)$ then 
with probability at least $1-\eta$, 
\begin{equation*}
\sup_{x,y\in \D}\big|\tfrac{\la^2}{m}\skp{f(\pi(x))}{f'(\pi(y))} - \E_{\tau, \tau'}\tfrac{\la^2}{m}\skp{f(\pi(x))}{f'(\pi(y))}\big|\leq \del \la^2.
\end{equation*}
\indent \textbf{Estimate of $(1)$.}
For all $x, y\in \D$, it follows analogously to \eqref{eq:applyDM18} that
\begin{align*}
&\big|\tfrac{\la^2}{m}\skp{f(x)}{f'(y)}- \tfrac{\la^2}{m}\skp{f(\pi(x))}{f'(\pi(y))}\big|\\ \nonumber
&\qquad \leq\tfrac{\la^2}{m}\sum_{i=1}^m |\sign(\skp{a_i}{x}+\tau_i)\sign(\skp{a_i}{y}+\tau_i') \\
&\qquad  \qquad \qquad\qquad\qquad
 -\sign(\skp{a_i}{\pi(x)}+\tau_i)\sign(\skp{a_i}{\pi(y)}+\tau_i')|\\ \nonumber
&\qquad \leq \tfrac{2\la^2}{m}\sum_{i=1}^m 1_{\sign(\skp{a_i}{x}+\tau_i)\neq \sign(\skp{a_i}{\pi(x)}+\tau_i)}+
 1_{\sign(\skp{a_i}{y}+\tau_i')\neq \sign(\skp{a_i}{\pi(y)}+\tau_i')}.\\
\end{align*}
Therefore,
\begin{align*}
&\sup_{x,y\in \D}\big|\tfrac{\la^2}{m}\skp{f(x)}{f'(y)}- \tfrac{\la^2}{m}\skp{f(\pi(x))}{f'(\pi(y))}\big|\\
&\qquad \leq 2\sup_{x\in \D}\tfrac{\la^2}{m}\sum_{i=1}^m  1_{\sign(\skp{a_i}{x}+\tau_i)\neq \sign(\skp{a_i}{\pi(x)}+\tau_i)}\\ 
&\qquad \qquad \quad + 2\sup_{x\in \D}\tfrac{\la^2}{m}\sum_{i=1}^m  1_{\sign(\skp{a_i}{x}+\tau_i')\neq \sign(\skp{a_i}{\pi(x)}+\tau_i')}.
\end{align*}
By \eqref{eqn:signChangeCov}
\begin{align*}
1_{\sign(\skp{a_i}{x}+\tau_i)\neq \sign(\skp{a_i}{\pi(x)}+\tau_i)}
\leq 1_{|\langle a_i,\pi(x)\rangle+\tau_i|\leq \delta\la} + 1_{|\langle a_i,\pi(x)-x\rangle|> \delta\la}
\end{align*}
and hence
\begin{align*}
&\sup_{x,y\in \D}\big|\tfrac{\la^2}{m}\skp{f(x)}{f'(y)}- \tfrac{\la^2}{m}\skp{f(\pi(x))}{f'(\pi(y))}\big|\\
&\qquad \leq 2\sup_{x\in N_{r}} \tfrac{\la^2}{m}\sum_{i=1}^m  1_{|\langle a_i,x\rangle+\tau_i|\leq \delta\la} 
+ 2\sup_{x\in N_{r}} \tfrac{\la^2}{m}\sum_{i=1}^m 1_{|\langle a_i,x\rangle+\tau_i'|\leq \delta\la} \\
&\qquad \quad+ 4\sup_{z\in (\D-\D)\cap r B_2^n} \tfrac{\la^2}{m}\sum_{i=1}^m 1_{|\langle a_i,z\rangle|>\delta\la}.
\end{align*}
For every $x\in \R^n$,
$$\bP_{\tau}(|\langle a_i,x\rangle+\tau_i|\leq \delta\la)\leq \del.$$
Therefore, the Chernoff bound implies that 
$$\sum_{i=1}^m 1_{|\langle a_i,x\rangle+\tau_i|\leq \delta\la}\leq 2\delta m$$
with probability at least $1-\exp(-cm\delta)$. By the union bound, if $m\gtrsim \del^{-1} \log|N_r|$
then with probability at least $1-\exp(-c'm\delta)$, 
$$\sup_{x\in N_{r}} \frac{\la^2}{m}\sum_{i=1}^m  1_{|\langle a_i,x\rangle+\tau_i|\leq \delta\la} \lesssim \delta\la^2.$$
Observe that 
\begin{align*}
\sup_{z\in (\D-\D)\cap r B_2^n} \frac{1}{m}\sum_{i=1}^m 1_{|\langle a_i,z\rangle|>\delta\la}\leq \del
\end{align*}
if 
\begin{align}\label{eq:uniform_k-norm_bound}
\sup_{z\in (\D-\D)\cap r B_2^n} \|Az \|_{[\del m]}\leq \del\la \sqrt{\del m}.
\end{align}
Since $T=(\D-\D)\cap r B^n_2\subset \del R B^n_2$ and $\sqrt{\del}\lambda/R\leq 1$,
applying Theorem~\ref{thm:k-norm_circulant} with $k=\del m$, $T=(\D-\D)\cap r B^n_2$, $\eps=\del R$, $u=\sqrt{\del}\lambda/R$
shows that \eqref{eq:uniform_k-norm_bound} holds with probability at least $1-\eta$, if
$$\lambda\gtrsim R\sqrt{\log(e/\del)}(\log^4(n)+\log(\eta^{-1}))$$ and 
\begin{equation*}\label{eq:cond_I}
m\gtrsim \lambda^{-2}\del^{-3}\ell_*((\D-\D)\cap r B^n_2)^2\big(\log^8(n)+\log^2(\eta^{-1})\big).
\end{equation*} 
\par \vskip 0.1cm
\textbf{Estimate of $(3)$.}
By the triangle inequality, 
\begin{align}\label{eq:bias_polar_summands}
& |\E_{\tau, \tau'}\tfrac{\la^2}{m}\skp{f(\pi(x))}{f'(\pi(y))} - \skp{\pi(x)}{\pi(y)}|\\ \nonumber
&\leq   |\E_{\tau, \tau'}\tfrac{\la^2}{m}\skp{f(\pi(x))}{f'(\pi(y))}-\tfrac{1}{m}\skp{A\pi(x)}{A\pi(y)}|\\ \nonumber
&\quad + |\tfrac{1}{m}\skp{A\pi(x)}{A\pi(y)}-\skp{\pi(x)}{\pi(y)}|.
\end{align}
To bound the first summand, we apply Lemma~\ref{lem:exp_mult} which yields
\begin{align*}
&|\E_{\tau, \tau'}\tfrac{\la^2}{m}\skp{f(\pi(x))}{f'(\pi(y))} - \tfrac{1}{m}\skp{A\pi(x)}{A\pi(y)}|\\
&\leq \frac{1}{m}\sum_{i=1}^m  \E_{\tau, \tau'}|\la^2\sign(\skp{a_i}{\pi(x)} + \ta_i)\sign(\skp{a_i}{\pi(y)}+\ta_i') - \skp{a_i}{\pi(x)}\skp{a_i}{\pi(y)}|\\
&\leq \frac{1}{m}\sum_{i=1}^m \phi_\la(\skp{a_i}{\pi(x)})|\skp{a_i}{\pi(y)}| + \phi_\la(\skp{a_i}{\pi(y)})|\skp{a_i}{\pi(x)}|.\\
\end{align*}
Hence, 
\begin{align*}
&\sup_{x,y\in \D}|\E_{\tau, \tau'}\tfrac{\la^2}{m}\skp{f(\pi(x))}{f'(\pi(y))} - \tfrac{1}{m}\skp{A\pi(x)}{A\pi(y)}|\\
&\qquad \leq 2 \sup_{x, y\in N_r}\frac{1}{m}\sum_{i=1}^m |\skp{a_i}{x}||\skp{a_i}{y}|1_{|\skp{a_i}{x}|> \la}.
\end{align*}
From Lemma~\ref{lem:bias_terms} and $\ell_*(N_r)\lesssim R\sqrt{\log(|N_r|)}$ it follows that if 
\begin{align*}\label{eq:meas_lem:bias_terms}
m&\gtrsim \delta^{-1}\log(|N_r|)(\log^8(n)+\log^2(\eta^{-1})), \\
\la&\gtrsim R\sqrt{\log(e\lambda^2/\del R^2)}(\log^{4}(n)+\log(\eta^{-1})),
\end{align*} 
then
\begin{equation*}
\sup_{x, y\in N_r}\frac{1}{m}\sum_{i=1}^m |\skp{a_i}{x}||\skp{a_i}{y}|1_{|\skp{a_i}{x}|> \la}\leq \del R^2
\end{equation*}
with probability at least $1-\eta$.
Combining the polarization identity with the triangle inequality we estimate the second summand on the right hand side of \eqref{eq:bias_polar_summands} by
\begin{align*}
\big|\tfrac{1}{m}\skp{A\pi(x)}{A\pi(y)} - \skp{\pi(x)}{\pi(y)}\big|
&\leq \tfrac{1}{4}\big|\|\tfrac{1}{\sqrt{m}}A(\pi(x)+\pi(y))\|_2^2-\|\pi(x)+\pi(y)\|_2^2\big| \\
&\quad + \tfrac{1}{4}\big|\|\tfrac{1}{\sqrt{m}}A(\pi(x)-\pi(y))\|_2^2-\|\pi(x)-\pi(y)\|_2^2\big|.
\end{align*}
Theorem~\ref{thm:structure-4} in combination with \cite[Theorem~3.1]{KrW11} yields the following:
there exist absolute constants $c_1, c_2>0$ such that if
\begin{equation*}
m\geq c_1 \del^{-2} \log(|N_r|/\eta)\big(\log^4(n)+\log(\eta^{-1})\big),
\end{equation*}
then with probability at least $1-\eta$,
\begin{equation*}
\Big|\|\tfrac{1}{\sqrt{m}}A(\pi(x)+\pi(y))\|_2^2-\|\pi(x)+\pi(y)\|_2^2\Big|\leq \del \|\pi(x)+\pi(y)\|_2^2 \quad \text{ for all } x,y\in \D
\end{equation*}
and 
\begin{equation*}
\Big|\|\tfrac{1}{\sqrt{m}}A(\pi(x)-\pi(y))\|_2^2-\|\pi(x)-\pi(y)\|_2^2\Big|\leq \del \|\pi(x)-\pi(y)\|_2^2 \quad \text{ for all } x,y\in \D.
\end{equation*}
Combining our estimates, we find that there exist absolute constants $C, C'\geq 1$ such that if
$$\lambda\gtrsim R\sqrt{\log(e\lambda^2/\del R^2)}(\log^{4}(n)+\log(\eta^{-1})),\quad R^2\geq \del \lambda^2$$
and
\begin{align*}
m&\gtrsim \lambda^{-2}\del^{-3}\ell_*((\D-\D)\cap r B^n_2)^2\big(\log^8(n)+\log^2(\eta^{-1})\big), \\
m&\gtrsim \delta^{-1}\log(\mathcal{N}(\D,r))(\log^8(n)+\log^2(\eta^{-1})), \\
m&\gtrsim  \del^{-2} \log(\mathcal{N}(\D,r)/\eta)\big(\log^4(n)+\log(\eta^{-1})\big),
\end{align*}
for $r\leq \del R$,
then, with probability at least $1-C\eta$,
\begin{equation*}
\sup_{x,y\in \D}|\tfrac{\la^2}{m}\skp{f(x)}{f'(y)}-\skp{x}{y}|\leq C' \del\la^2.
\end{equation*}
By rescaling $\del$ and $\eta$ we obtain the result.
\end{proof}

\section{Lower bounds}

In this section we derive bounds on the necessary number of bits required by an oblivious random embedding that preserves Euclidean or squared Euclidean distances.  
\begin{thm}
\label{thm:optimalbit_Euclidean}
	Let $N\in \N$, $\eta\in (0,\tfrac{1}{2})$, $R>0$, $\del\in (0,\tfrac{R}{2})$ 
	and assume that 
	\begin{equation}
	\label{eqn:optimalbit_Euclidean}
	n\gtrsim \del^{-2} R^2\log(N/\eta),\qquad \sqrt{N/\eta}\gtrsim \del^{-2}R^2\log(N/\eta).
	\end{equation}
	Let $f: R B^n_2\to \{-1,1\}^m$ be a random binary embedding map and $d:\{-1,1\}^m\times \{-1,1\}^m\to \R$ a (random) reconstruction map such that given any finite data set $\D\subset RB^n_2$ with $|\D|=N$,
	\begin{equation*}
	\bP\Big(\sup_{x,y\in \D}|d(f(x),f(y))- \eu{x-y}|\leq \del\Big)\geq 1-\eta.
	\end{equation*}
	Then $m\gtrsim \del^{-2} R^2\log(N/\eta)$.
\end{thm}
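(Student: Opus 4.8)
The plan is to reduce to a \emph{deterministic} encoder–decoder and then to exhibit a hard distribution of data sets on which any deterministic pair that succeeds must retain a large amount of information about its input; counting bits then forces $m$ to be large. This follows the lower bound strategy of Alon and Klartag.

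\textbf{Reduction to deterministic maps.} Fix a probability distribution $\mu$ on data sets $\D\subset RB^n_2$ with $|\D|=N$ (chosen below). By hypothesis, for each fixed such $\D$ we have $\bP_{f,d}\big(\sup_{x,y\in\D}|d(f(x),f(y))-\eu{x-y}|\le\del\big)\ge 1-\eta$. Averaging over $\D\sim\mu$ and using Fubini produces a single deterministic realization $(f_0,d_0)$ for which $\bP_{\D\sim\mu}\big(\sup_{x,y\in\D}|d_0(f_0(x),f_0(y))-\eu{x-y}|\le\del\big)\ge 1-\eta\ge\tfrac12$.

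\textbf{The hard family.} Set $\ell\asymp\del^{-2}R^2\log(N/\eta)$; the first hypothesis on $n$ guarantees $\ell\le n$, so we may place everything inside a fixed $\ell$-dimensional coordinate subspace. Consider data sets $\D=\{x_1,\dots,x_N\}$ with $x_i=\tfrac{R}{\sqrt\ell}\,v_i$, $v_i\in\{-1,1\}^\ell$, so that $\D\subset R\,\mathbb{S}^{\ell-1}\subset RB^n_2$ and $\eu{x_i-x_j}^2=\tfrac{2R^2}{\ell}d_H(v_i,v_j)$; thus the whole Euclidean geometry of $\D$ is governed by the Hamming matrix $(d_H(v_i,v_j))_{i<j}$, and knowing all $\eu{x_i-x_j}$ up to $\del$ amounts, up to constants, to knowing each $d_H(v_i,v_j)$ up to a coarse additive error dictated by $\del$, $R$ and $\ell$. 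The key combinatorial input — the part taken over from Alon–Klartag — is that one can select a subfamily $\mathcal F$ of such data sets which is pairwise \emph{$2\del$-distinguishable} (any two members $\D\neq\D'$ contain a pair of points whose distances differ by more than $2\del$) and has $\log|\mathcal F|\gtrsim\del^{-2}R^2 N\log(N/\eta)$; equivalently, for $\D$ drawn uniformly from $\{-1,1\}^{\ell\times N}$ the $\del$-rounded distance matrix of $\D$ has entropy $\gtrsim\del^{-2}R^2 N\log(N/\eta)$. This is established by an anti-concentration/counting argument over the $\binom N2$ pairs $(v_i,v_j)$, controlling how many tuples $(v_1,\dots,v_N)$ collapse onto the same $\del$-rounded distance matrix; it is here that one needs both that the ambient space admits $\ell\asymp\del^{-2}R^2\log(N/\eta)$ independent directions and that $N$ is large relative to this effective dimension, the latter being exactly the content of the second hypothesis $\sqrt{N/\eta}\gtrsim\del^{-2}R^2\log(N/\eta)$ (which makes the $\binom N2\asymp N^2$ pairwise constraints outweigh the correlations carried by the $\ell N$ underlying bits).

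\textbf{Conclusion and main obstacle.} On the event of probability $\ge\tfrac12$ that $(f_0,d_0)$ succeeds on $\D\sim\mu$, the numbers $\big(d_0(f_0(x_i),f_0(x_j))\big)_{i<j}$, hence the $\del$-rounded distance matrix of $\D$, are a function of the tuple $\big(f_0(x_1),\dots,f_0(x_N)\big)\in(\{-1,1\}^m)^N$, which takes at most $2^{mN}$ values; since distinct members of $\mathcal F$ are $2\del$-distinguishable, successful members have distinct such tuples, so $2^{mN}\ge\tfrac12|\mathcal F|$ and therefore $mN\gtrsim\del^{-2}R^2 N\log(N/\eta)$, i.e.\ $m\gtrsim\del^{-2}R^2\log(N/\eta)$. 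The main obstacle is the second step: producing the $2\del$-distinguishable family $\mathcal F$ of the asserted size, i.e.\ quantifying precisely how coarse a rounding of the distance (or Gram) matrix still separates $\exp(\Omega(\del^{-2}R^2 N\log(N/\eta)))$ configurations — the two hypotheses on $n$ and on $\sqrt{N/\eta}$ are exactly what make this estimate tight. Finally, replacing $\eu{x-y}$ by $\eu{x-y}^2$ throughout only rescales the rounding resolution by a constant, so the same argument yields the squared-distance version.
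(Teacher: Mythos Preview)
There is a genuine gap in your second step. You claim a $2\delta$-distinguishable family $\mathcal F$ of $N$-point configurations with $\log|\mathcal F|\gtrsim \delta^{-2}R^2 N\log(N/\eta)$, but such a family cannot exist in general. Any $2\delta$-distinguishable family injects into the set of $\delta$-rounded distance matrices, and since each of the $\binom N2$ distances lies in $[0,2R]$ there are at most $(CR/\delta)^{\binom N2}$ such matrices; hence always $\log|\mathcal F|\lesssim N^2\log(R/\delta)$. Take for instance $R=1$, a fixed small $\delta$, a fixed $N$, and let $\eta\to 0$: the hypothesis $\sqrt{N/\eta}\gtrsim\delta^{-2}\log(N/\eta)$ is satisfied (since $\sqrt{1/\eta}\gg\log(1/\eta)$), yet your asserted lower bound $\delta^{-2}N\log(N/\eta)\to\infty$ while the upper bound $N^2\log(1/\delta)$ stays fixed. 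The Alon--Klartag construction you invoke only gives $\log|\mathcal F|\gtrsim \delta^{-2}R^2 N\log N$, with no $\eta$.

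The root cause is that your Fubini reduction uses only $1-\eta\ge\tfrac12$, so the $\eta$-dependence must be recovered entirely from the geometry of $\mathcal F$ --- and it cannot be. The paper extracts the $\eta$-dependence from the probabilistic guarantee itself. After reducing to inner products on $B_2^n$, it fixes a maximal $\tfrac12$-packing $\mathcal N\subset\mathbb S^{k-1}$ with $k\sim\delta^{-2}\log(N/\eta)$ and draws $s=\lceil 1/\eta\rceil$ \emph{independent blocks} $S_1,\ldots,S_s$ of $N-1$ uniform probe vectors on $\mathbb S^{k-1}$ (so $\sim N/\eta$ probes in total). For a uniformly random $X\in\mathcal N$, the product structure and Jensen's inequality turn the $(1-\eta)$ guarantee on each $N$-point set $\{X\}\cup S_j$ into
\[
\bP\Big(\forall j\le s:\ \sup_{y\in S_j}|d(f(X),f(y))-\langle X,y\rangle|\le\delta\Big)\ \ge\ (1-\eta)^s(1-2^{-k})^s\ \gtrsim\ 1.
\]
Having $\sim N/\eta$ probes (rather than $N$) is exactly what makes the event $\mathcal B=\{\text{every pair in }\mathcal N\text{ is }2\delta\text{-distinguished by some probe}\}$ hold with high probability when $k\sim\delta^{-2}\log(N/\eta)$; this is where the second hypothesis $\sqrt{N/\eta}\gtrsim\delta^{-2}\log(N/\eta)$ is used. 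On $\mathcal B$, at most one $x\in\mathcal N$ per value of $\tilde f(x)$ can be correctly decoded against all probes, so the counting is done on the \emph{single} point $X$ (not on the full $N$-tuple): $2^m\ge c\,|\mathcal N|\ge c\,2^k$, giving $m\gtrsim\delta^{-2}\log(N/\eta)$.
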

By considering $R^{-1}\D\subset B^n_2$ it suffices to prove the result for $R=1$. Further, if 
$|d(f(x),f(y))- \eu{x-y}|\leq 1$ and $x, y\in B^n_2$ then
\begin{align*}
|d(f(x),f(y))^2- \eu{x-y}^2|&\leq |d(f(x),f(y))+ \eu{x-y}|\cdot |d(f(x),f(y))- \eu{x-y}|\\
&\leq 5 \cdot |d(f(x),f(y))- \eu{x-y}|.
\end{align*}
Therefore, Theorem~\ref{thm:optimalbit_Euclidean} with $R=1$ follows from the following result:
\begin{thm}\label{thm:optimalbit_squaredEuclidean}
	Let $N\in \N, \eta, \del\in (0,\tfrac{1}{2})$ and assume that \eqref{eqn:optimalbit_Euclidean} holds for $R=1$.
	Let $f: B^n_2\to \{-1,1\}^m$ be a random binary embedding map and $d:\{-1,1\}^m\times \{-1,1\}^m\to \R$ a (random) reconstruction map such that given any finite data set $\D\subset B^n_2$ with $|\D|=N$,
	\begin{equation}\label{eq:preservation_assumption}
	\bP\Big(\sup_{x,y\in \D}|d(f(x),f(y))- \eu{x-y}^2|\leq \del\Big)\geq 1-\eta.
	\end{equation}
	Then $m\gtrsim \del^{-2} \log(N/\eta)$.
\end{thm}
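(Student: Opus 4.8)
The plan is to combine three ingredients: a pigeonhole bound on how many datasets a \emph{fixed} embedding can handle, an averaging argument that converts the permitted failure probability into a deterministic statement, and a covering/boosting step that inflates the point count so that the packing construction behind the lower bound of Alon and Klartag \cite{AlK17} can be plugged in. (The reduction to squared distances and $R=1$ has already been carried out, so I work directly with the statement of Theorem~\ref{thm:optimalbit_squaredEuclidean}.)

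\textbf{Step 1: pigeonhole and averaging.} Call a finite collection $\D_1,\dots,\D_K$ of $N$-point datasets in $B^n_2$, each equipped with an enumeration $\D_a=\{\D_a(1),\dots,\D_a(N)\}$, \emph{$2\del$-separated} if for all $a\neq b$ there are $i,j$ with $|\,\eu{\D_a(i)-\D_a(j)}^2-\eu{\D_b(i)-\D_b(j)}^2\,|>2\del$. For a fixed (deterministic) realization of $(f,d)$: if $(f,d)$ satisfies \eqref{eq:preservation_assumption} on both $\D_a$ and $\D_b$ while the strings $(f(\D_a(i)))_{i=1}^N$ and $(f(\D_b(i)))_{i=1}^N$ in $(\{-1,1\}^m)^N$ coincide, then $d$ returns, entry for entry, the same approximate squared-distance matrix for the two datasets, contradicting $2\del$-separation. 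Hence $(f,d)$ meets \eqref{eq:preservation_assumption} on at most $2^{mN}$ of the $\D_a$. Averaging over the randomness of $(f,d)$ and using the hypothesis for each $\D_a$ separately,
\[
(1-\eta)K\ \le\ \sum_{a=1}^{K}\bP\!\left(\sup_{x,y\in\D_a}|d(f(x),f(y))-\eu{x-y}^2|\le\del\right)\ =\ \bE\,\#\{a:\ (f,d)\text{ succeeds on }\D_a\}\ \le\ 2^{mN},
\]
so $m\ge N^{-1}\log_2((1-\eta)K)$. A $2\del$-separated family of $N$-point datasets of size $\exp(c\del^{-2}N\log N)$ would already give $m\gtrsim\del^{-2}\log N$; to reach $\log(N/\eta)$ I would not apply this to $N$ points directly.

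\textbf{Step 2: boosting the point count.} Set $P=\lfloor N/(2\sqrt{\eta})\rfloor$. By \eqref{eqn:optimalbit_Euclidean} (with $R=1$) one has $P\gtrsim\sqrt{N/\eta}\gtrsim\del^{-2}\log(N/\eta)\ge\del^{-2}\log P$ and $n\gtrsim\del^{-2}\log(N/\eta)\ge\del^{-2}\log P$. Fix a $2\del$-separated family $\mathcal{C}_1,\dots,\mathcal{C}_K$ of $P$-point configurations in $B^n_2$ together with a covering design, i.e.\ index sets $S_1,\dots,S_T\subset[P]$ of size $N$ with $T=O(P^2/N^2)=O(1/\eta)$ such that every pair $\{i,j\}\subset[P]$ lies in some $S_t$. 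For a fixed $a$, a union bound over $t$ and the hypothesis applied to each $N$-point set $\mathcal{C}_a|_{S_t}$ show that with probability at least $1-T\eta\ge \tfrac12$ the embedding satisfies \eqref{eq:preservation_assumption} on $\mathcal{C}_a|_{S_t}$ for every $t$; on that event the length-$mP$ string $(f(\mathcal{C}_a(j)))_{j=1}^{P}$ together with $d$ pins down the \emph{entire} squared-distance matrix of $\mathcal{C}_a$ up to $\del$. Repeating Step~1's pigeonhole with these $P$-point configurations yields $\tfrac12 K\le 2^{mP}$, i.e.\ $m\ge P^{-1}\log_2(K/2)$.

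\textbf{Step 3, and the main obstacle.} It remains to produce such a $2\del$-separated family of $P$-point configurations in $B^n_2$ with $\log_2 K\gtrsim\del^{-2}P\log P$; since $P\gtrsim\del^{-2}\log P$ and $n\gtrsim\del^{-2}\log P$, this is exactly what the packing behind the Alon--Klartag lower bound \cite{AlK17} provides (in our oblivious setting only the number of \emph{distinguishable} configurations matters, not a worst-case data structure). Feeding $\log_2 K\gtrsim\del^{-2}P\log P$ into Step~2 and using $\log P\simeq\log(N/\eta)$ gives
\[
m\ \gtrsim\ \del^{-2}\log P\ \gtrsim\ \del^{-2}\log(N/\eta),
\]
as claimed. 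The genuine difficulty lies in this packing: one must build $\exp(\Theta(\del^{-2}P\log P))$ configurations of $P$ points living in an ambient space of dimension only $\del^{-2}\log P$ (far below $P$) whose squared-distance matrices are pairwise $\ell_\infty$-separated by $2\del$ — equivalently, a matching lower bound for the $\ell_\infty$-metric entropy of the set of squared-distance matrices of $P$ points in $B^n_2$. The subtle point is bounding how many distinct configurations can share, up to $\del$, the same distance matrix, since the per-point degrees of freedom must be spread over coordinates that are individually unresolvable at scale $\del$ but jointly carry $\Theta(\del^{-2}\log P)$ bits; this is where the Johnson--Lindenstrauss phenomenon enters the construction. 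It is precisely for this step that both parts of \eqref{eqn:optimalbit_Euclidean} are needed: enough ambient dimension to host the $\del^{-2}\log P$-dimensional structure, and — through $P\simeq N/\sqrt{\eta}$ — enough points that the per-point freedom dominates the $O(\log P)$-bit ambiguity of the distance matrix.
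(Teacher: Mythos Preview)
Your route is genuinely different from the paper's, and it is worth contrasting the two.

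\textbf{What the paper does.} The paper first passes from squared distances to inner products (Theorem~\ref{thm:optimalbit_inner}, which is declared equivalent) and then proves the slightly stronger Theorem~\ref{thm:optimalbit_inner_2}. There the argument runs point--by--point rather than configuration--by--configuration: one fixes a maximal $\tfrac12$-packing $\mathcal N\subset\Sp^{k-1}$ with $k\simeq\del^{-2}\log(N/\eta)$, draws a single uniform $X\in\mathcal N$, and draws $s=\lceil 1/\eta\rceil$ \emph{independent} blocks $S_1,\dots,S_s$ of $N'=N-1$ uniform probe vectors each. The hypothesis is applied to each $N$-point set $\{X\}\cup S_j$; independence of the blocks together with Jensen's inequality gives
\[
\bP\Big(\text{embedding succeeds on }\{X\}\cup S_j\text{ for all }j\Big)\ \ge\ (1-\eta)^s\ \gtrsim\ 1,
\]
so one may pass to a deterministic $(\tilde f,\tilde d)$. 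On the separation event $\mathcal B$ (which uses all $sN'\sim N/\eta$ probes and the second half of \eqref{eqn:optimalbit_Euclidean}) two distinct points of $\mathcal N$ cannot both succeed with the same $\tilde f$-image, whence $|\mathcal N|\lesssim 2^m$ and $m\gtrsim k$.

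\textbf{What you do.} You keep whole configurations as the basic objects: boost from $N$-point sets to a single $P$-point set with $P\sim N/\sqrt{\eta}$ via a pair-covering design and a \emph{union} bound over the $T\sim(P/N)^2\sim 1/\eta$ blocks, then invoke the Alon--Klartag packing of $P$-point configurations as a black box and pigeonhole on $2^{mP}$.

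\textbf{The quantitative gap.} The union-bound boosting caps you at $P\lesssim N/\sqrt{\eta}$; Jensen on independent blocks is what lets the paper reach $sN'\sim N/\eta$ probes. This difference matters precisely at your Step~3. The probe-based packing (the construction the paper spells out, and the one sketched after Theorem~\ref{thm:optimalbit_inner_2}) needs roughly $\sqrt{M}\gtrsim\del^{-2}\log M$ probes for the separation event $\mathcal B$ to hold; the paper's choice $M\sim N/\eta$ makes this exactly the second hypothesis in \eqref{eqn:optimalbit_Euclidean}. With your $M\sim P\sim N/\sqrt{\eta}$ you would instead need $\sqrt{P}\gtrsim\del^{-2}\log P$, i.e.\ $(N/\eta)^{1/2}\eta^{1/4}\gtrsim\del^{-2}\log(N/\eta)$, which is \emph{not} implied by the stated hypothesis (it is stronger by a factor $\eta^{-1/4}$). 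So if ``the packing behind the Alon--Klartag lower bound'' means the random-probe construction, your claimed sufficient conditions $P,n\gtrsim\del^{-2}\log P$ are not enough, and Step~3 does not close under the hypotheses of the theorem. You would either need to cite a sharper packing statement from \cite{AlK17} under explicitly weaker assumptions, or replace the union-bound boost by the paper's Jensen trick on independent blocks---which is exactly the mechanism your covering-design reduction gives up.

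A minor secondary point: achieving $T=O(P^2/N^2)$ exactly requires a near-optimal pair cover (R\"odl nibble); a crude random cover only gives $T=O((P^2/N^2)\log P)$, which would force a further logarithmic shrinkage of $P$. This does not affect $\log P\simeq\log(N/\eta)$, but it is another place where the constants need care.
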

Clearly this result is equivalent to the following statement.
\begin{thm}\label{thm:optimalbit_inner}
	Let $N\in \N, \eta, \del\in (0,\tfrac{1}{2})$ and assume that \eqref{eqn:optimalbit_Euclidean} holds for $R=1$.
	Let $f: B^n_2\to \{-1,1\}^m$ be a random binary embedding map and $d:\{-1,1\}^m\times \{-1,1\}^m\to \R$ a (random) reconstruction map such that given any finite data set $\D\subset B^n_2$ with $|\D|=N$,
	\begin{equation}\label{eq:preservation_assumption}
	\bP\Big(\sup_{x,y\in \D}|d(f(x),f(y))- \skp{x}{y}|\leq \del\Big)\geq 1-\eta.
	\end{equation}
	Then $m\gtrsim \del^{-2} \log(N/\eta)$.
\end{thm}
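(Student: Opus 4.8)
The plan is to adapt the counting argument of Alon and Klartag \cite{AlK17} to the present randomized, oblivious setting; the reductions already recorded above show that it suffices to treat the inner-product formulation with $R=1$, i.e. Theorem~\ref{thm:optimalbit_inner} itself. The first step is to remove the randomness of the embedding by averaging. Fix (to be constructed below) a finite family $\mathcal F$ of datasets $\D\subset B_2^n$, each of size $N$, and draw $\D$ uniformly from $\mathcal F$. Since $\bP\big(\sup_{x,y\in\D}|d(f(x),f(y))-\skp xy|\le\del\big)\ge 1-\eta$ for \emph{every} $\D$, Fubini gives $\E_{f,d}\big[\#\{\D\in\mathcal F:(f,d)\text{ succeeds on }\D\}\big]\ge(1-\eta)|\mathcal F|$, so there is a \emph{deterministic} pair $(f_0,d_0)$ with $f_0:B_2^n\to\{-1,1\}^m$ that succeeds on a subfamily $\mathcal F_0\subseteq\mathcal F$ with $|\mathcal F_0|\ge(1-\eta)|\mathcal F|\ge\tfrac12|\mathcal F|$. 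All further reasoning is with $(f_0,d_0)$ fixed.

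Second, a pigeonhole step reduces everything to a combinatorial construction. For a dataset $\D=\{x_1,\dots,x_N\}$ with its natural labelling, set $\Phi(\D)=(f_0(x_1),\dots,f_0(x_N))\in(\{-1,1\}^m)^N$; this takes at most $2^{Nm}$ values. If $\D,\D'\in\mathcal F_0$ satisfy $\Phi(\D)=\Phi(\D')$, then $f_0(x_i)=f_0(x_i')$ for all $i$, hence $d_0(f_0(x_i),f_0(x_j))=d_0(f_0(x_i'),f_0(x_j'))$, and the triangle inequality yields $|\skp{x_i}{x_j}-\skp{x_i'}{x_j'}|\le 2\del$ for all $i,j$; i.e. the Gram matrices of $\D$ and $\D'$ agree up to an entrywise error $2\del$. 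Therefore, if $\mathcal F$ is chosen so that the Gram matrices of distinct members differ by more than $2\del$ in at least one entry, then $\Phi$ is injective on $\mathcal F_0$, so $\tfrac12|\mathcal F|\le|\mathcal F_0|\le 2^{Nm}$ and $m\ge(\log_2|\mathcal F|-1)/N$. It thus remains to exhibit (explicitly, or by a probabilistic argument) a $2\del$-Gram-separated family $\mathcal F$ of $N$-point subsets of $B_2^n$ with $\log_2|\mathcal F|\gtrsim N\del^{-2}\log(N/\eta)$.

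The third step — the construction of $\mathcal F$ — is the crux, and is carried out essentially as in \cite{AlK17}. Put $n_0:=\lceil c\,\del^{-2}\log(N/\eta)\rceil$; the hypothesis $n\gtrsim\del^{-2}\log(N/\eta)$ guarantees $n_0\le n$, so all configurations may be placed in the first $n_0$ coordinates. The skeleton is an ``anchor-and-cloud'' design: fix a suitable system of anchor vectors in $\R^{n_0}$ and let each non-anchor point of a dataset range independently over a family of (sparse, structured) unit vectors whose coordinates can be recovered, up to error $\del$, from the inner products with the anchors, so that distinct parameter choices force a discrepancy $>2\del$ in some Gram entry. A naive version of this yields only $\log_2|\mathcal F|\sim N\del^{-2}\log\log(N/\eta)$, so the delicate point — and the step I expect to be the main obstacle — is to extract the full $\log(N/\eta)$ factor, which forces one to use the inter-point inner products (not only those with the anchors) in the separation argument and is exactly where the abundance of points supplied by the second hypothesis $\sqrt{N/\eta}\gtrsim\del^{-2}\log(N/\eta)$ enters; this is the part of \cite{AlK17} that has to be reworked in the present setting. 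Once Theorem~\ref{thm:optimalbit_inner} is proved, Theorems~\ref{thm:optimalbit_squaredEuclidean} and \ref{thm:optimalbit_Euclidean} follow from the equivalences already recorded.
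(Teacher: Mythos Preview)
Your outline follows the Alon--Klartag counting argument more literally than the paper does, and it would recover $m\gtrsim\del^{-2}\log N$, but as written it cannot produce the $\log(N/\eta)$ factor. The obstruction sits precisely at your derandomization step: once you pass from ``succeeds with probability $\ge 1-\eta$'' to ``a fixed $(f_0,d_0)$ succeeds on $\ge(1-\eta)|\mathcal F|\ge\tfrac12|\mathcal F|$ datasets'', the parameter $\eta$ has left the argument, and no construction of $\mathcal F$ can bring it back. Concretely, the pigeonhole step yields $m\ge(\log_2|\mathcal F|-1)/N$, so you would need a $2\del$-Gram-separated family with $\log|\mathcal F|\gtrsim N\del^{-2}\log(N/\eta)$. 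But any such family satisfies $\log|\mathcal F|\lesssim N^2\log(1/\del)$ (there are $O(N^2)$ Gram entries, each in $[-1,1]$, separated at scale $\del$), which is smaller than $N\del^{-2}\log(N/\eta)$ whenever $\log(1/\eta)$ is large compared to $N\del^2\log(1/\del)$---a regime allowed by the hypotheses (take e.g.\ $\eta=e^{-N}$ with $\del$ a small constant). So the ``reworking'' you anticipate in step three is not a refinement but a dead end: the $\eta$ had to be exploited \emph{before} derandomizing, not after.

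The paper's proof supplies exactly the missing mechanism. Instead of encoding whole datasets, it fixes a $\tfrac12$-packing $\mathcal N\subset\Sp^{k-1}$ with $k\sim\del^{-2}\log(N/\eta)$, draws a \emph{single} point $X$ uniformly from $\mathcal N$, and tests it against $s=\lceil 1/\eta\rceil$ \emph{independent} batches $S_1,\dots,S_s$ of $N-1$ random unit vectors. Each $S_j\cup\{X\}$ is a legitimate $N$-point dataset, so the embedding succeeds on it with probability $\ge 1-\eta$. Conditioning on $(f,d,X)$, the batches are independent, and Jensen's inequality applied to $t\mapsto t^s$ gives
\[
\bP\Big(\text{success on all }s\text{ batches}\Big)\ \ge\ \big[(1-\eta)(1-2^{-k})\big]^s\ \gtrsim\ 1.
\]
Only now does one derandomize to a fixed $(\tilde f,\tilde d)$ succeeding, with constant probability over $(X,V)$, against all $sN'\sim N/\eta$ test vectors simultaneously. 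This many test vectors suffice to $2\del$-separate every pair in $\mathcal N$ (this is where $\sqrt{N/\eta}\gtrsim\del^{-2}\log(N/\eta)$ is used), and the counting step is then simply $|\{\tilde f(x):x\in\mathcal N\}|\le 2^m$, bounding the range of $\tilde f$ on a single point rather than on an $N$-tuple. This yields $m\gtrsim\log|\mathcal N|\gtrsim k\sim\del^{-2}\log(N/\eta)$. The $\eta$-dependence thus enters through the Jensen/boosting step, not through the size of a Gram-separated family.
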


\begin{lemma}\label{lem:sphere:uniformmeasure} 
	Let $V$ be uniformly distributed on the sphere $\Sp^{k-1}$. Let $\del\in (0,1]$ and $k\geq 2\del^{-2}$. Then for any $x\in \Sp^{k-1}$,
	\begin{equation}\label{eq:sphere:bound}
	\tfrac{1}{6\del\sqrt{k}}(1-\del^2)^{\tfrac{k-1}{2}}\leq \bP(\skp{V}{x}\geq \del)\leq \tfrac{1}{2\del\sqrt{k}}(1-\del^2)^{\tfrac{k-1}{2}}.
	\end{equation} 
	If additionally $\del\leq\tfrac{1}{\sqrt{2}}$, then
	\begin{equation}\label{eq:sphere:lowerbound}
	\bP(|\skp{V}{x}|\geq \del)\geq \exp(-2\del^2k).
	\end{equation}
	Moreover, for $\alpha\in (0,1]$ and $k\geq 8$,
	\begin{equation}\label{eq:sphere:small_ball}
	\bP(\eu{V-x}\leq \alpha)\leq \tfrac{1}{\sqrt{k}}\alpha^{k-1}.
	\end{equation} 
\end{lemma}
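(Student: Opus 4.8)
The plan is to reduce all three inequalities to estimates on the one–dimensional distribution of the marginal $\skp{V}{x}$, which by rotational invariance does not depend on $x$. Writing $V=G/\eu{G}$ for a standard Gaussian vector $G\in\R^k$ (equivalently, using the Archimedes hat–box formula), $\skp{V}{x}$ has density $D_k^{-1}(1-t^2)^{(k-3)/2}$ on $[-1,1]$, where $D_k=\int_{-1}^1(1-t^2)^{(k-3)/2}\,dt$. Hence, for every $x\in\Sp^{k-1}$,
\[
\bP(\skp{V}{x}\geq\del)=\frac{N_k(\del)}{D_k},\qquad N_k(\del):=\int_\del^1(1-t^2)^{(k-3)/2}\,dt ,
\]
and, since $\eu{V-x}^2=2-2\skp{V}{x}$ for unit vectors, also $\bP(\eu{V-x}\leq\alpha)=\bP(\skp{V}{x}\geq 1-\tfrac{\alpha^2}{2})$. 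So everything comes down to two–sided control of $N_k(\del)$ and $D_k$.

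For $D_k$ I would use the Beta–integral identity $D_k=\sqrt{\pi}\,\Gamma(\tfrac{k-1}{2})/\Gamma(\tfrac k2)$ together with the log–convexity of $\Gamma$: from $\Gamma(\tfrac k2)^2\leq\Gamma(\tfrac{k-1}{2})\Gamma(\tfrac{k+1}{2})=\tfrac{k-1}{2}\Gamma(\tfrac{k-1}{2})^2$ and $\Gamma(\tfrac{k-1}{2})^2\leq\Gamma(\tfrac{k-2}{2})\Gamma(\tfrac k2)=\tfrac{2}{k-2}\Gamma(\tfrac k2)^2$ one gets
\[
\sqrt{\tfrac{2\pi}{k-1}}\ \leq\ D_k\ \leq\ \sqrt{\tfrac{2\pi}{k-2}}\qquad(k\geq 3),
\]
the case $k=2$ — which in \eqref{eq:sphere:bound} forces $\del=1$ and is trivial — being set aside. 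For $N_k(\del)$ the substitution $u=(1-t^2)/(1-\del^2)$ gives
\[
N_k(\del)=\frac{(1-\del^2)^{(k-1)/2}}{2}\int_0^1\frac{u^{(k-3)/2}}{\sqrt{1-(1-\del^2)u}}\,du ,
\]
and since $\del\leq\sqrt{1-(1-\del^2)u}\leq 1$ on $[0,1]$, the upper bound $N_k(\del)\leq(1-\del^2)^{(k-1)/2}/(\del(k-1))$ is immediate. For a matching lower bound I would restrict the integral to a left–neighbourhood of $u=1$ on which $1-(1-\del^2)u\lesssim\del^2$ (so the integrand is $\gtrsim\del^{-1}u^{(k-3)/2}$); using $\del^2 k\geq 2$ one checks that this subinterval still carries a fixed fraction of the mass $\int_0^1 u^{(k-3)/2}\,du=\tfrac{2}{k-1}$, yielding $N_k(\del)\geq c\,(1-\del^2)^{(k-1)/2}/(\del(k-1))$ for an absolute $c>0$. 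Inserting the $D_k$ bounds into $\bP(\skp{V}{x}\geq\del)=N_k(\del)/D_k$ produces \eqref{eq:sphere:bound}; it then remains to verify that the numerical constants $\tfrac16$ and $\tfrac12$ actually come out for all $k\geq 2\del^{-2}$, with a couple of small values of $k$ (e.g.\ $k=3$, where the marginal is uniform on $[-1,1]$) dispatched directly.

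Inequality \eqref{eq:sphere:lowerbound} then follows from the lower bound just established: by symmetry $\bP(|\skp{V}{x}|\geq\del)=2\bP(\skp{V}{x}\geq\del)\geq\tfrac{1}{3\del\sqrt k}(1-\del^2)^{(k-1)/2}$, so it suffices to show $\tfrac{1}{3\del\sqrt k}(1-\del^2)^{(k-1)/2}\geq e^{-2\del^2 k}$. Taking logarithms, using $-\log(1-\del^2)\leq\del^2+\del^4\leq\tfrac32\del^2$ (valid since $\del^2\leq\tfrac12$) and writing $s:=\del^2 k\geq 2$, this reduces to the scalar inequality $\log 3+\tfrac12\log s\leq\tfrac54 s$ for $s\geq 2$, which is elementary. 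Finally, for \eqref{eq:sphere:small_ball} note $\eu{V-x}\leq\alpha\iff\skp{V}{x}\geq\del$ with $\del:=1-\tfrac{\alpha^2}{2}\in[\tfrac12,1)$, so $2\del^{-2}\leq 8\leq k$ and the upper bound in \eqref{eq:sphere:bound} applies; since $1-\del^2=\alpha^2(1-\tfrac{\alpha^2}{4})\leq\alpha^2$ and $\tfrac{1}{2\del}\leq 1$, it gives exactly $\bP(\eu{V-x}\leq\alpha)\leq\tfrac{1}{\sqrt k}\alpha^{k-1}$.

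The main obstacle is the two–sided estimate of $N_k(\del)$ with constants sharp enough to yield the stated $\tfrac16$ and $\tfrac12$: the naive approach of extending the integral to $[\del,\infty)$ and bounding $1-t^2\leq(1-\del^2)\exp(-2\del(t-\del)/(1-\del^2))$ loses a factor $\del$ in the lower bound, which is fatal precisely when $\del$ is near its minimal admissible value $\sqrt{2/k}$ (where $(1-\del^2)^{(k-1)/2}$ is of constant order). Localizing $u$ near $1$ after the substitution $u=(1-t^2)/(1-\del^2)$ is what repairs this; the remainder is bookkeeping with Gamma–function ratios and a handful of one–variable inequalities.
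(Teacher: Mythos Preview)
Your proposal is correct and follows essentially the same route as the paper: derive \eqref{eq:sphere:lowerbound} and \eqref{eq:sphere:small_ball} from the two–sided estimate \eqref{eq:sphere:bound}, the latter via the identification $\eu{V-x}\leq\alpha\iff\skp{V}{x}\geq 1-\tfrac{\alpha^2}{2}$, which is exactly what the paper does. The only differences are cosmetic: the paper does not reprove \eqref{eq:sphere:bound} but cites it as standard (Boucheron--Lugosi--Massart, Section~7.2), and for \eqref{eq:sphere:lowerbound} it uses the single inequality $1-u\geq e^{-2u}$ on $[0,\tfrac12]$ in place of your two–step bound $-\log(1-\del^2)\leq\tfrac32\del^2$, which shortens the bookkeeping slightly but is the same idea.
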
	
\begin{proof}
Inequality \eqref{eq:sphere:bound} is standard (e.g. see \cite[Section~7.2]{boucheron2013concentration}). The second inequality immediately follows from the lower bound in \eqref{eq:sphere:bound} and using $1-u\geq \exp(-2u)$ for $u\in [0, \tfrac{1}{2}]$. Inequality \eqref{eq:sphere:small_ball} follows from the upper bound in \eqref{eq:sphere:bound}
for $\del = 1-\tfrac{\alpha^2}{2}$ by observing that $\eu{V-x}\leq \alpha$ if and only if $\skp{V}{x}\geq 1-\tfrac{\alpha^2}{2}$.
\end{proof}
\begin{lemma}\label{lem:packing:sphere} For $\del\in (0, 1]$ let $\mathcal{N}\subset \Sp^{k-1}$ be a maximal $\del$-packing of $\Sp^{k-1}$ (a maximal subset of $\Sp^{k-1}$ such that 
$\eu{x-x'}>\del$ for all distinct $x, x'\in \mathcal{N}$). Then
\begin{equation}
\sqrt{k}\cdot\big(\tfrac{1}{\del}\big)^{k-1}\leq|\mathcal{N}|\leq \big(\tfrac{4}{\del}+1\big)^k.
\end{equation}
\end{lemma}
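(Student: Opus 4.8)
The plan is to prove the two inequalities separately by the textbook dichotomy for maximal packings: the upper bound follows from a volume comparison in $\R^k$, and the lower bound follows from the fact that a maximal $\del$-packing is automatically a $\del$-net, combined with the spherical cap estimate \eqref{eq:sphere:small_ball} of Lemma~\ref{lem:sphere:uniformmeasure}.

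For the upper bound, I would observe that the open Euclidean balls $x+\tfrac{\del}{2}B^k_2$, $x\in\mathcal{N}$, are pairwise disjoint: if $x,x'\in\mathcal{N}$ are distinct then $\eu{x-x'}>\del$, so these two balls of radius $\del/2$ cannot meet. Since each center $x$ lies on $\Sp^{k-1}$, each ball is contained in $(1+\tfrac{\del}{2})B^k_2$. Comparing $k$-dimensional volumes gives
\[
|\mathcal{N}|\,\Big(\tfrac{\del}{2}\Big)^{k}\operatorname{vol}(B^k_2)\;\leq\;\Big(1+\tfrac{\del}{2}\Big)^{k}\operatorname{vol}(B^k_2),
\]
whence $|\mathcal{N}|\leq\big(1+\tfrac{2}{\del}\big)^{k}\leq\big(1+\tfrac{4}{\del}\big)^{k}$ (in fact the slightly sharper bound $(1+2/\del)^k$ drops out).

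For the lower bound, I would first note that maximality forces $\mathcal{N}$ to be a $\del$-net of $\Sp^{k-1}$: for every $y\in\Sp^{k-1}$ there is some $x\in\mathcal{N}$ with $\eu{y-x}\leq\del$, since otherwise $\mathcal{N}\cup\{y\}$ would still be a valid $\del$-packing, contradicting maximality. Hence the spherical caps $\{v\in\Sp^{k-1}:\eu{v-x}\leq\del\}$, $x\in\mathcal{N}$, cover $\Sp^{k-1}$. Letting $V$ be uniformly distributed on $\Sp^{k-1}$ and applying the union bound together with \eqref{eq:sphere:small_ball} (with $\alpha=\del$),
\[
1=\bP\big(\exists\,x\in\mathcal{N}:\eu{V-x}\leq\del\big)\leq\sum_{x\in\mathcal{N}}\bP\big(\eu{V-x}\leq\del\big)\leq|\mathcal{N}|\cdot\tfrac{1}{\sqrt{k}}\,\del^{\,k-1},
\]
which rearranges to $|\mathcal{N}|\geq\sqrt{k}\,(1/\del)^{k-1}$.

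Since both halves are routine, the only points demanding any care are the geometric inclusion $x+\tfrac{\del}{2}B^k_2\subset(1+\tfrac{\del}{2})B^k_2$ used in the volume count and the applicability of \eqref{eq:sphere:small_ball}, which was stated for $k\geq 8$; this holds in the regime where the lemma is invoked, and for the few remaining small values of $k$ the bound can be checked directly. I therefore expect no real obstacle: the proof is essentially a repackaging of the standard "maximal packing $=$ net" argument together with the cap volume estimate already established.
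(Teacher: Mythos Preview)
Your proposal is correct and follows essentially the same route as the paper: the lower bound argument is identical to the paper's (maximal packing $\Rightarrow$ $\del$-net, then union bound with the cap estimate \eqref{eq:sphere:small_ball}), and for the upper bound you spell out the volumetric argument that the paper simply cites from Vershynin's book. Your remark about the $k\geq 8$ hypothesis in \eqref{eq:sphere:small_ball} is a fair caveat that the paper glosses over, but as you note it is harmless in the regime where the lemma is applied.
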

\begin{proof}
The upper bound follows from \cite[Lemma 4.2.8 and Corollary 4.2.13]{RomanHDP}. Define
$\mathcal{B}_\del(x)=\{y\in \Sp^{k-1}\; : \; \eu{x-y}\leq \del\}$.
Since $\Sp^{k-1}$ is covered by the union of balls $\mathcal{B}_\del(x)$ with centers $x\in \mathcal{N}$, it follows 
$$1\leq \sum_{x\in \mathcal{N}} \bP(\eu{V-x}\leq \del),$$
where $V$ is uniformly distributed on $\Sp^{k-1}$. The claim now follows from inequality \eqref{eq:sphere:small_ball}.
\end{proof}	
Theorem~\ref{thm:optimalbit_inner} is clearly implied by the following result:
\begin{thm}\label{thm:optimalbit_inner_2}
	There exists an absolute constant $\alpha\in (0,\tfrac{1}{2})$ such that the following holds.
	Let $N\in \N, \del, \eta\in (0,\tfrac{1}{2})$ and assume that 
	\begin{equation}
	n\gtrsim \del^{-2} \log(N/\eta), \qquad \sqrt{N/\eta}\gtrsim \del^{-2}\log(N/\eta).
	\end{equation}
	Let $f: B^n_2\to \{-1,1\}^m$ be a random binary embedding map and $d:\{-1,1\}^m\times \{-1,1\}^m\to \R$ a random reconstruction map such that given any finite data set $\D\subset B^n_2$ with $|\D|=N$ 
	and $(\D-\D)\cap \alpha B^n_2=\{0\}$, 
	\begin{equation}\label{eq:optimalbit_inner_2:assumption}
	\bP\Big(\sup_{x,y\in \D}|d(f(x),f(y))- \skp{x}{y}|\leq \del\Big)\geq 1-\eta.
	\end{equation}
	Then $m\gtrsim \del^{-2} \log(N/\eta)$.
\end{thm}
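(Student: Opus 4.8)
The plan is a counting argument in the spirit of Alon--Klartag \cite{AlK17}, preceded by a derandomization. Set $k=\lceil c_0\delta^{-2}\log(N/\eta)\rceil$ for an absolute constant $c_0$ to be fixed; by the hypothesis $n\gtrsim\delta^{-2}\log(N/\eta)$ we may assume $k\le n$ and work inside a copy of $\Sp^{k-1}$ in $\R^n$. Let $\mathcal{Q}\subset\Sp^{k-1}$ be a maximal $\alpha$-packing; Lemma~\ref{lem:packing:sphere} gives $|\mathcal{Q}|\ge\sqrt{k}\,\alpha^{-(k-1)}\ge 2^{ck}$ for an absolute $c>0$ once $\alpha$ is chosen small. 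I would draw the data set $\D$ uniformly among the $N$-point subsets of $\mathcal{Q}$: every such $\D$ is $\alpha$-separated, hence admissible, and $|\mathcal{Q}|\ge 2^{ck}\ge N$ provided $c_0$ is large. Since the stated embedding succeeds on every admissible $\D$ with probability $\ge 1-\eta$, averaging over the uniform $\D$ produces a \emph{deterministic} pair $(f_0,d_0)$ with $\Pb_{\D}(\text{success})\ge 1-\eta$; from now on $f_0$ is a fixed partition of $\Sp^{k-1}$ into $T\le 2^m$ cells.

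Next I would extract the force of the separation hypothesis. Suppose $p\ne q$ lie in the same $f_0$-cell and both belong to an admissible data set on which $(f_0,d_0)$ succeeds. Then $d_0(f_0(p),f_0(q))=d_0(f_0(p),f_0(p))$ must lie within $\delta$ of $\|p\|_2^2=1$ and within $\delta$ of $\skp{p}{q}$, so $\skp{p}{q}\ge 1-2\delta$ and $\|p-q\|_2^2\le 4\delta$; since $\|p-q\|_2>\alpha$ this is impossible as soon as $\delta<\alpha^2/4$. (If $\delta\ge\alpha^2/4$ then $\delta$ is an absolute constant, so $\delta^{-2}\log(N/\eta)\asymp\log(N/\eta)$ and the cruder estimate below already suffices.) Hence, on the success event, $f_0$ is injective on $\D$, i.e.\ $\Pb_{\D}\big(\text{two points of }\D\text{ land in a common }f_0\text{-cell}\big)\le\eta$.

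The remaining work is a dichotomy on $T$. If $T\ge|\mathcal{Q}|/2$, then $2^m\ge T\ge 2^{ck}/2$, so $m\gtrsim k\gtrsim\delta^{-2}\log(N/\eta)$ directly. If $T<|\mathcal{Q}|/2$, a hypergeometric birthday estimate for the random subset $\D$, combined with Cauchy--Schwarz on the cell sizes inside $\mathcal{Q}$, shows that $\Pb_{\D}(\text{a common cell is hit twice})\gtrsim\min\{1,N^2/T\}$; comparing with the previous paragraph, $N^2/T\lesssim\eta$, hence $T\gtrsim N^2/\eta$ and $m\gtrsim\log(N/\eta)$. To upgrade this to the full $\delta^{-2}\log(N/\eta)$ one argues that in this regime $f_0$ is forced to use not merely many cells but cells that are \emph{geometrically small}: using the spherical-cap measure bounds of Lemma~\ref{lem:sphere:uniformmeasure} at the scale $k\asymp\delta^{-2}\log(N/\eta)$, a cell of $\Sp^{k-1}$ that is too spread out must contain two packing points that can be completed to an admissible data set, contradicting the injectivity just established; quantifying this and balancing it against the hypothesis $\sqrt{N/\eta}\gtrsim\delta^{-2}\log(N/\eta)$, which controls how dense the random $N$-subset of $\mathcal{Q}$ is, pins the covering number of the $f_0$-partition below by $2^{ck}$. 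In all cases $m\gtrsim\delta^{-2}\log(N/\eta)$, and rescaling absolute constants completes the proof.

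The genuinely delicate point is this last step — passing from the conclusion that $f_0$ is injective on a random $N$-subset of $\mathcal{Q}$ to a lower bound of $2^{ck}$ on the number of $f_0$-cells. Injectivity alone only yields $m\gtrsim\log(N/\eta)$; one has to show that in the small-$\delta$ regime no partition of $\Sp^{k-1}$ into few, necessarily large, cells can be consistent with succeeding, with failure probability $\eta$, on all admissible data sets drawn from $\mathcal{Q}$. This is precisely where the tail bound $\Pb(|\skp{V}{x}|\ge\delta)\ge e^{-2\delta^2k}$ and the two-sided packing estimates of Lemma~\ref{lem:packing:sphere} at the chosen scale enter, and it is what forces the value $k\asymp\delta^{-2}\log(N/\eta)$: $|\mathcal{Q}|$ must be large enough to make the first branch of the dichotomy strong, yet the random subset must be dense enough in $\mathcal{Q}$ for the second branch to bite.
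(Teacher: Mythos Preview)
Your argument has a genuine gap, and you have already located it: the ``upgrade'' from $m\gtrsim\log(N/\eta)$ to $m\gtrsim\delta^{-2}\log(N/\eta)$ is not a proof. After you derandomize and observe that success forces $f_0$ to be injective on $\D$, the \emph{only} constraint left on $f_0$ is that a uniformly random $N$-subset of $\mathcal{Q}$ avoids collisions in the $f_0$-partition with probability $\ge 1-\eta$. This constraint is purely combinatorial; it says nothing about the geometry of the cells, so the spherical-cap bounds of Lemma~\ref{lem:sphere:uniformmeasure} have nothing to act on. Indeed, a partition of $\mathcal{Q}$ into $T\sim N^2/\eta$ equal parts already satisfies your constraint (the expected number of colliding pairs in a random $N$-subset is $\sim N^2/(2T)\sim\eta/2$), yet $\log T\lesssim\log(N/\eta)\ll k$ when $\delta$ is small. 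Your birthday bound $T\gtrsim N^2/\eta$ is therefore sharp for this route, and the hand-wave about ``cells that are too spread out'' cannot close the gap.

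The idea you are missing, and which the paper supplies, is to use \emph{probe vectors} rather than drawing all of $\D$ from the packing. The paper takes $\D=\{X\}\cup\{V_1,\dots,V_{N'}\}$ with $X$ uniform in the packing $\mathcal{N}$ and the $V_i$ independent uniform on $\Sp^{k-1}$; to handle $\eta$ it replicates the probes in $s=\lceil 1/\eta\rceil$ independent batches and uses Jensen's inequality to pass from $(1-\eta)$ per batch to a constant lower bound overall. The point of the probes is that, with high probability (event $\mathcal{B}$), \emph{every} pair $x\ne x'\in\mathcal{N}$ is separated by some $V_i$ in the sense $|\langle x-x',V_i\rangle|>2\delta$; this is exactly where the tail bound $\Pb(|\langle V,u\rangle|\ge\delta)\ge e^{-2\delta^2k}$ enters, and it is what forces the choice $k\lesssim\delta^{-2}\log(sN')$. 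On $\mathcal{B}$, if $\tilde f(x)=\tilde f(x')$ then the common value $\tilde d(\tilde f(x),\tilde f(V_i))$ cannot lie within $\delta$ of both $\langle x,V_i\rangle$ and $\langle x',V_i\rangle$ for all $i$, so at most one point of $\mathcal{N}$ per $\tilde f$-cell can succeed against the full probe set. Since $X$ is uniform in $\mathcal{N}$, this immediately yields $2^m\gtrsim|\mathcal{N}|\gtrsim 2^{k}$, i.e.\ the full bound $m\gtrsim\delta^{-2}\log(N/\eta)$, with no dichotomy or upgrade step needed.
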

To prove this result, we will make use of ideas developed by Alon and Klartag in \cite{AlK17}. In that work they were concerned with the task of designing so-called \emph{$\del$-distance sketches}, that is, data structures which allow reconstruction of all Euclidean scalar products between points of a given $N$-point set $\mathcal{D}\subset B^n_2$ up to additive error $\del$. 
Using a probabilistic argument, they showed in \cite[Lemma~5.1]{AlK17} that if $n\sim \del^{-2}\log(N)$, then at least $c\del^{-2}N\log N$ bits are needed for any such data structure.  
Let us briefly sketch their argument, as it will form a key ingredient of our proof of Theorem~\ref{thm:optimalbit_inner_2}.  
First, they fix a maximal subset $\mathcal{N}\subset B^n_2$ consisting of $\tfrac{1}{2}$-separated points. It is well-known that 
$|\mathcal{N}|\sim 2^n$. In a second step, they choose independent random vectors $V_1, \ldots, V_{N/2}\in B^n_2$, which are uniformly distributed in $B^n_2$, and prove that with positive probability for any distinct points $x,x'\in \mathcal{N}$ there exists an $i\in [\tfrac{N}{2}]$ such that $|\skp{V_i}{x}-\skp{V_i}{x'}|> \del$. In particular, there exist at least $|\mathcal{N}|^{N/2}\sim 2^{nN/2}$ ordered $N$-point sets (each consisting of the vectors $V_1, \ldots, V_{N/2}$ and $\tfrac{N}{2}$-points from $\mathcal{N}$) whose Gram matrices differ pairwise in at least one entry by $\delta$. This immediatlely yields the lower bound on the number of bits needed for $\tfrac{\del}{2}$-distance sketches of $N$-point sets by using $n\sim \del^{-2}\log(N)$. 

In contrast to Alon and Klartag's setup, in Theorem~\ref{thm:optimalbit_inner_2} we are concerned with random encodings
$f$ (which can be seen to induce specific data structures) and aim to show a lower bound for the number of bits needed for the encoding of each individual vector, which additionally quantifies the dependence on the probability $\eta$ that the embedding fails. 
In order to do so, we will quantify and extend the argument by Alon and Klartag by showing the following:
\begin{enumerate}
\item For $k\leq n$, let $\mathcal{N}\subset \Sp^{k-1}$ be a maximal $\tfrac{1}{2}$-packing ($|\mathcal{N}|\sim 2^k$)
and for $M\sim \frac{N}{\eta}$ consider 
 independent vectors  
$V_1, \ldots ,V_{M}\in \Sp^{k-1}$ which are uniformly distributed in $\Sp^{k-1}$.
\item If $k\sim \del^{-2}\log(N/\eta)$ and the assumptions of Theorem~\ref{thm:optimalbit_inner_2} are satisfied then 
\begin{equation*}
\mathcal{B}=\big\{\forall x\neq  x'\in \mathcal{N} \; \exists i\in [M] : |\skp{x}{V_i}-\skp{x'}{V_i}|> 2\del\big\}\
\end{equation*}
occurs with probability at least $1-\exp(-32k)$.
\item If $X$ is uniformly distributed in $\mathcal{N}$ and \eqref{eq:optimalbit_inner_2:assumption} holds, then there exist 
	deterministic $\tilde{f}, \tilde{d}$ such that 
	\begin{equation*}
	\mathcal{A}= \Big\{\sup_{i\in [M]}|\tilde{d}(\tilde{f}(X),\tilde{f}(V_i))- \skp{X}{V_i}|\leq \del\Big\}
	\end{equation*}
	occurs with probability at least $c$, where $c$ is a suitable absolute constant.
\item On the event $\mathcal{B}$, 
	\begin{equation*}
	\Big|\Big\{x\in 
	\mathcal{N}\; : \; \sup_{i\in [M]}|\tilde{d}(\tilde{f}(x),\tilde{f}(V_i))- \skp{x}{V_i}|\leq \del\Big\} \Big|\leq |\{\tilde{f}(x)\; : \; x\in \mathcal{N}\}|\leq 2^m,
	\end{equation*}
	which implies $\bP(\mathcal{A}\cap \mathcal{B})\leq \tfrac{2^m}{|\mathcal{N}|}$. Part $(2)$ and $(3)$ imply that $\bP(\mathcal{A}\cap \mathcal{B})\geq \tfrac{c}{2}$ (for $k\gtrsim 1$), which shows the claim using $|\mathcal{N}|\sim 2^k\sim 2^{\del^{-2}\log(N/\eta)}$.	
\end{enumerate}

\begin{proof}[Proof of Theorem~\ref{thm:optimalbit_inner_2}]
	Set $s=\lceil 1/\eta\rceil\geq 2$.
	For $k\in [n]$, $k\gtrsim \log(N)$,
	let $\mathcal{N}\subset \Sp^{k-1}$ be a maximal $\tfrac{1}{2}$-packing of $\Sp^{k-1}$. Note that $\sqrt{k}\cdot2^{k-1}\leq |\mathcal{N}|\leq 9^k$ by Lemma~\ref{lem:packing:sphere}.
	Set $N'=N-1$, 
	let $V_1, \ldots, V_{sN'}$ denote independent random vectors which are uniformly distributed
	in $\Sp^{k-1}$ and set $S_j=\{V_{(j-1)N'+1}, \ldots, V_{j N'}\}$ for $j\in [s]$.
	Further, define the independent events 
	\begin{equation*}
	\mathcal{E}_{S_j}:=\bigg\{\inf_{x\neq y\in S_j\cup \mathcal{N}} \eu{x-y} >\alpha\bigg\}.
	\end{equation*}
	Since $k\gtrsim \log(N)$ and $|\mathcal{N}|\geq 2^{k-1}$ it follows $|\mathcal{N}|\geq N'=|S_j|$.
	Using inequality \eqref{eq:sphere:small_ball} we obtain
	\begin{align*}
	\bP(\mathcal{E}_{S_j}^c)&\leq 
	\sum_{x\in \mathcal{N}}\sum_{y\in S_j}\bP(\eu{x-y}\leq \alpha) + \sum_{x\neq y\in S_j}\bP(\eu{x-y}\leq \alpha)\\
	&\leq 2|\mathcal{N}|^2\alpha^{k-1} \leq 2^{-k},
	\end{align*}
	where the last inequality follows by choosing $\alpha>0$ small enough.
	Define the event 
	\begin{equation*}
	\mathcal{B}=\big\{\forall x\neq  x'\in \mathcal{N} \; \exists i\in [sN'] : |\skp{x}{V_i}-\skp{x'}{V_i}|> 2\del\big\}.
	\end{equation*}
	Let us show that if 
	\begin{equation}\label{eq:cond:k} 
	\tfrac{1}{8}\del^{-2}\leq k\leq \tfrac{1}{64} \del^{-2}\log(sN'),\quad \del^{-2}\log(sN')\leq \sqrt{sN'},
	\end{equation}
	then 
	\begin{equation*}
	\Pb(\mathcal{B})\geq 1-\exp(-32k).
	\end{equation*}
	Using the independence of the vectors $V_i$ and the union bound we obtain
	\begin{align*}
	\bP(\mathcal{B}^c)=&\Pb(\exists x\neq  x'\in \mathcal{N} \; : \; \forall i\in [sN'] : |\skp{x}{V_i}-\skp{x'}{V_i}|\leq 2\del)\\
	&\leq \sum_{x\neq  x'\in \mathcal{N}} \big(\Pb(|\skp{x}{V}-\skp{x'}{V}|\leq 2\del)\big)^{sN'}
	\end{align*}
	where $V$ is uniformly distributed in $\Sp^{k-1}$. Since 
	$\eu{x-x'}> \frac{1}{2}$ for any $x\neq x'\in \mathcal{N}$,
	\begin{equation*}
	\Pb(|\skp{x}{V}-\skp{x'}{V}|> 2\del)
	\geq \Pb\left(\left|\left\langle\frac{x-x'}{\eu{x-x'}},V\right\rangle\right|> 4\del\right)\geq \exp(-32\del^2k)
	\end{equation*}
	where in the last step we applied Lemma~\ref{lem:sphere:uniformmeasure}.
	Hence, 
	\begin{align*}
	\Pb(\mathcal{B}^c)
	&\leq |\mathcal{N}|^2 \big(1-\exp(-32\del^2k)\big)^{sN'}\\
	&\leq \exp(\log(81)k)\cdot\big(\exp(-\exp(-32\del^2k))\big)^{sN'}\\
	&=\exp(\log(81)k - sN'\exp(-32\del^2k)).
	\end{align*}
	Finally, using \eqref{eq:cond:k} we obtain
	\begin{equation*}
	\log(81)k - sN'\exp(-32\del^2k)\leq -32k.
	\end{equation*}
In the following, we consider $k=\tfrac{1}{64}\del^{-2}\log(sN')$ and assume $\del^{-2}\log(sN')\leq (sN')^{1/2}$.
Let $X$ be uniformly distributed in $\mathcal{N}$. Let $S=\{V_1,\ldots, V_{N'}\}$ be a set 
of independent random vectors $V_i$ which are uniformly distributed on $\Sp^{k-1}$.
By independence of the random vectors $V_1, \ldots, V_{sN'}$ and Jensen's inequality, 
	\begin{align*}
	&\bP\Big( \cap_{j\in [s]}\mathcal{E}_{S_j}, \sup_{y\in \{V_{1}, \ldots, V_{sN'}\}}|d(f(X),f(y))- \skp{X}{y}|\leq \del\Big)\\
	&=\E_{f,d,X}\bP_V\Big( \cap_{j\in [s]} \big\{\sup_{y\in S_j}|d(f(X),f(y))- \skp{X}{y}|\leq \del, \, \mathcal{E}_{S_j}\big\}\Big)\\
	&=\E_{f,d,X} \Big[\bP_S\Big( \sup_{y\in S}|d(f(X),f(y))- \skp{X}{y}|\leq \del, \, \mathcal{E}_S\Big)\Big]^s\\
	&\geq \Big[\bP\Big( \sup_{y\in S}|d(f(X),f(y))- \skp{X}{y}|\leq \del, \, \mathcal{E_S}\Big)\Big]^s.
	\end{align*}
	Observe that $S\cup \{X\}\subset B^n_2$ satisfies 
	$|S\cup \{X\}|\leq N$. Therefore, \eqref{eq:optimalbit_inner_2:assumption} implies
	\begin{align*}
	&\bP\Big( \sup_{y\in S}|d(f(X),f(y))- \skp{X}{y}|\leq \del, \, \mathcal{E_S}\Big)
	\geq (1-\eta) \bP(\mathcal{E_S})
	\geq (1-\eta) (1-2^{-k}).
	\end{align*}
	Hence,
	\begin{align*}
	\bP\Big( \cap_{j\in [s]}\mathcal{E}_j, \sup_{y\in \{V_{1}, \ldots, V_{sN'}\}}|d(f(x),f(y))- \skp{x}{y}|\leq \del\Big)\geq (1-\eta)^s (1-2^{-k})^s.
	\end{align*}
	Observe that $1-\eta\geq \exp(-2\eta)$ for $\eta\in [0,\frac{1}{2}]$. Hence, if $\eta\in (0, \tfrac{1}{2}]$, then $(1-\eta)^s\geq \exp(-2\eta s)\geq \exp(-4)$. Further, if $k\geq \log(\eta^{-1})$, then $(1-2^{-k})^s\geq \frac{1}{2}$.
	Hence, 
	\begin{equation*}
	e^{-4}\cdot2^{-1}\leq \bP\Big( \sup_{y\in \{V_{1}, \ldots, V_{sN'}\}}|d(f(X),f(y))- \skp{X}{y}|\leq \del\Big).
	\end{equation*}
	This implies that there exist (deterministic) $\tilde{f}: B^n_2\to \{-1,1\}^m$ and \\$\tilde{d}:\{-1,1\}^m\times \{-1,1\}^m\to \R$ such that 
	\begin{equation*}
	e^{-4}\cdot2^{-1}\leq \bP_{X,V}\Big( \sup_{y\in \{V_{1}, \ldots, V_{sN'}\}}|\tilde{d}(\tilde{f}(X),\tilde{f}(y))- \skp{X}{y}|\leq \del\Big).
	\end{equation*}
	Since $X$ is uniformly distributed in $\mathcal{N}$,
	\begin{align*}
	& e^{-4}\cdot2^{-1} - \bP(\mathcal{B}^c) \\
	&\qquad \leq \Pb_{X,V}\Big( \sup_{y\in \{V_{1}, \ldots, V_{sN'}\}}|\tilde{d}(\tilde{f}(X),\tilde{f}(y))- \skp{X}{y}|\leq \del,\; \mathcal{B}\Big)\\
	&\qquad =  \E_V \Big[1_{\mathcal{B}}\cdot \Pb_{X}\Big( \sup_{y\in \{V_{1}, \ldots, V_{sN'}\}}|\tilde{d}(\tilde{f}(X),\tilde{f}(y))- \skp{X}{y}|\leq \del \Big)\Big]\\
	&\qquad =\frac{1}{|\mathcal{N}|} \E_V \Big[1_{\mathcal{B}} \cdot \Big|\Big\{x\in 
	\mathcal{N}\; : \; \sup_{y\in \{V_{1}, \ldots, V_{sN'}\}}|\tilde{d}(\tilde{f}(x),\tilde{f}(y))- \skp{x}{y}|\leq \del\Big\} \Big|\Big].
	\end{align*}
	Let us now show that on the event $\mathcal{B}$
	\begin{equation*}
	\Big|\Big\{x\in 
	\mathcal{N}\; : \; \sup_{y\in \{V_{1}, \ldots, V_{sN'}\}}|\tilde{d}(\tilde{f}(x),\tilde{f}(y))- \skp{x}{y}|\leq \del\Big\} \Big|\leq |\{\tilde{f}(x)\; : \; x\in \mathcal{N}\}|.
	\end{equation*}
	It suffices to show that for any $b\in \{\tilde{f}(x)\; : \; x\in \mathcal{N}\}$, 
	\begin{equation}\label{eq:proof:thm:optimalbit_inner_2:sum}
	\sum_{x\in \tilde{f}^{-1}(\{b\})\cap \mathcal{N}}1_{\big\{\sup_{y\in \{V_{1}, \ldots, V_{sN'}\}}|\tilde{d}(\tilde{f}(x),\tilde{f}(y))- \skp{x}{y}|\leq \del \big\}}\leq 1.
	\end{equation}
	To prove \eqref{eq:proof:thm:optimalbit_inner_2:sum} assume that $\sup_{y\in \{V_{1}, \ldots, V_{sN'}\}}|\tilde{d}(b,\tilde{f}(y))- \skp{x}{y}|\leq \del$ for some $x\in \tilde{f}^{-1}(\{b\})\cap \mathcal{N}$. By definition of the event $\mathcal{B}$, for any $x'\in \tilde{f}^{-1}(\{b\})\cap \mathcal{N}$ with $x'\neq x$ there exists a $y\in \{V_{1}, \ldots, V_{sN'}\}$ such that $|\skp{x}{y}-\skp{x'}{y}|> 2\del$. This implies 
	\begin{align*}
	2\del < |\skp{x}{y}-\skp{x'}{y}|&\leq |\skp{x}{y}- \tilde{d}(b,\tilde{f}(y))| + |\tilde{d}(b,\tilde{f}(y))-\skp{x'}{y}|\\
	&\leq \del + |\tilde{d}(b,\tilde{f}(y))-\skp{x'}{y}|
	\end{align*}
	and therefore
	\begin{equation*}
	\del< |\tilde{d}(\tilde{f}(x'),\tilde{f}(y))-\skp{x'}{y}|.
	\end{equation*}
	Hence, 
	\begin{equation*}
	1_{\big\{\sup_{y\in \{V_{1}, \ldots, V_{sN'}\}}|\tilde{d}(\tilde{f}(x'),\tilde{f}(y))- \skp{x'}{y}|\leq \del\big\}}=0,
	\end{equation*}
	which proves \eqref{eq:proof:thm:optimalbit_inner_2:sum}.
	In summary, we have shown that if $k=\tfrac{1}{64}\del^{-2}\log(sN')$ and $\del^{-2}\log(sN')\leq \sqrt{sN'}$, then 
	\begin{equation*}
	\frac{|\mathcal{N}|}{4e^4}\leq\big(e^{-4}\cdot2^{-1}-e^{-32k}\big)|\mathcal{N}|\leq\big(e^{-4}\cdot2^{-1}- \bP(\mathcal{B}^c)\big) |\mathcal{N}| \leq |\{\tilde{f}(x)\; : \; x\in \mathcal{N}\}|\leq 2^m.
	\end{equation*}
	Since $|\mathcal{N}|\geq 2^k$, this yields the result.
\end{proof}

\begin{thm}\label{thm:lowerbound:continuous}
	There exists an absolute constant $\alpha\in (0,\tfrac{1}{2})$ such that the following holds. 
	Let $\del\in (0, \alpha), \eta\in (0,\tfrac{1}{2}), N_\del\in \N, w_\del>0$ and assume that 
	\begin{equation}
	n\gtrsim \del^{-2} \log(N_\del/\eta)+ \del^{-2} w_\del^2, \qquad \sqrt{N_\del/\eta}\gtrsim \del^{-2}\log(N_\del/\eta), \qquad \del^{-2}w_\del^2\gtrsim 1.
	\end{equation}
	Let $f: B^n_2\to \{-1,1\}^m$ be a random binary embedding map and $d:\{-1,1\}^m\times \{-1,1\}^m\to \R$ a random reconstruction map such that given any $\D\subset B^n_2$ 
	with 
	\begin{equation*}
	\mathcal{N}(\D, 8\del)\leq N_\del\; \text{ and }\;  \ell_*((\D-\D)\cap 8\del B^n_2)\leq w_\del
	\end{equation*}
	we have
	\begin{equation}\label{eq:thm:lowerbound:continuous}
	\bP\Big(\sup_{x,y\in \D}|d(f(x),f(y))- \skp{x}{y}|\leq \del\Big)\geq 1-\eta.
	\end{equation}
	Further, assume that the quantization cells of $f$ are convex.
	Then $m\gtrsim \del^{-2} \log(N_\del/\eta) + \del^{-2}w_\del^2$.
\end{thm}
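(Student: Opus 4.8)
The plan is to prove the lower bound by exhibiting explicit hard datasets. Since the claimed bound is a sum of two terms and $a+b\le 2\max\{a,b\}$, it suffices to produce, for each term, a dataset $\D\subset B^n_2$ obeying \emph{both} complexity constraints $\mathcal{N}(\D,8\del)\le N_\del$ and $\ell_*((\D-\D)\cap 8\del B^n_2)\le w_\del$ on which \eqref{eq:thm:lowerbound:continuous} forces $m$ to be at least that term; the assumed accuracy then applies to both datasets at once. No preliminary reduction is needed, as \eqref{eq:thm:lowerbound:continuous} is already phrased for inner products.

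First I would recover $\del^{-2}\log(N_\del/\eta)$. Let $\alpha\in(0,\tfrac12)$ be the absolute constant of Theorem~\ref{thm:optimalbit_inner_2}, shrunk if necessary so that $8\del\le\alpha$ (permissible since $\del<\alpha$). Take $\D$ to be an $\alpha$-separated set of $N_\del$ points in $B^n_2$, which exists because $n\gtrsim\del^{-2}\log(N_\del/\eta)\gtrsim\log N_\del$. Then $(\D-\D)\cap\alpha B^n_2=\{0\}$, so $\mathcal{N}(\D,8\del)=N_\del$ and $\ell_*((\D-\D)\cap 8\del B^n_2)=0\le w_\del$; in particular the reconstruction guarantee passes to all such $\D$, and since our numerical hypotheses are exactly those of Theorem~\ref{thm:optimalbit_inner_2} with $N=N_\del$, that theorem yields $m\gtrsim\del^{-2}\log(N_\del/\eta)$. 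Convexity of the cells plays no role here.

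Next I would use convexity to recover $\del^{-2}w_\del^2$. Put $k\simeq\del^{-2}w_\del^2$ (an integer in $[1,n]$, available by $\del^{-2}w_\del^2\gtrsim 1$ and $n\gtrsim\del^{-2}w_\del^2$), fix a $k$-dimensional subspace $E$, and take $\D\subset E$ to be a genuinely $k$-dimensional ``fat'' set placed off the origin -- schematically $v_0+T$ with $v_0\in E$, $\eu{v_0}\simeq 1$, and $T$ a packing of a small Euclidean ball in $E$ -- calibrated so that $\mathcal{N}(\D,8\del)\le N_\del$ and $\ell_*((\D-\D)\cap 8\del B^n_2)\simeq\del\sqrt{k}\le w_\del$. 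Since $\eta<\tfrac12$, fix one realization of $(f,d)$ for which \eqref{eq:thm:lowerbound:continuous} holds. If $x,x'\in\D$ lie in the same quantization cell then $d(f(x),f(\cdot))=d(f(x'),f(\cdot))$, whence $|\skp{x-x'}{y}|\le 2\del$ for every $y\in\D$; the off-center placement makes these inner products vary at scale $\del$, so this is a genuine constraint. Intersecting the (convex) cells with $E$ then turns ``same cell'' into ``contained in a slab/convex body of controlled width'', and a volumetric count in $E$ -- in the spirit of Lemmas~\ref{lem:sphere:uniformmeasure}--\ref{lem:packing:sphere} -- shows that covering the fat piece by $2^m$ such thin convex pieces is impossible unless $m\gtrsim k\simeq\del^{-2}w_\del^2$.

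The two datasets together give $m\gtrsim\del^{-2}\log(N_\del/\eta)$ and $m\gtrsim\del^{-2}w_\del^2$, hence the claimed bound. The crux -- and the only place convexity is used -- is the second step: a fat low-dimensional piece large enough (relative to $\del$) to force many cells tends to have a large covering number at scale $8\del$, so the geometry of $\D$ must be chosen with care, and convexity of the cells is precisely what replaces the randomized reference points of the argument of Alon and Klartag, allowing a $k$-dimensional packing/volume lower bound to be extracted from a set of small diameter. Making the dependence on $N_\del$ and $w_\del$ come out with the stated power $\del^{-2}$ is the delicate point.
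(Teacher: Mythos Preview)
Your first step—reducing $m\gtrsim\del^{-2}\log(N_\del/\eta)$ to Theorem~\ref{thm:optimalbit_inner_2} via an $\alpha$-separated set—is correct and matches the paper.

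The second step has a genuine gap. Since the bound $m\gtrsim\del^{-2}w_\del^2$ must hold even when $N_\del=1$, your dataset $\D=v_0+T$ must then have diameter $O(\del)$. Writing $x=v_0+s$, $x'=v_0+s'$, $y=v_0+t$ with $\eu{s},\eu{s'},\eu{t}\lesssim\del$, the constraint you extract from $f(x)=f(x')$, namely $|\skp{x-x'}{y}|\le 2\del$ for all $y\in\D$, becomes $|\skp{s-s'}{v_0}+\skp{s-s'}{t}|\le 2\del$; the second term is $O(\del^2)$, so this is essentially the \emph{single} slab constraint $|\skp{s-s'}{v_0}|\lesssim\del$. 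One slab direction in $E$ cannot yield a $k$-dimensional volumetric bound: the cells may be thin only in the $v_0$-direction and unrestricted in the remaining $k-1$ directions of $E$, so $O(1)$ convex cells already cover $\D$. The off-centering buys you one useful direction, not $k$.

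The paper's argument is structurally different and does not use a single dataset. It takes a maximal $\tfrac12$-packing $\mathcal{N}$ of $E\cap B_2^n$ (with $\dim E=k\simeq\del^{-2}w_\del^2$, hence $|\mathcal{N}|\ge 2^k$) and applies the hypothesis to each ball $\D_x=x+(E\cap 8\del B_2^n)$ separately; every $\D_x$ has covering number $1$ and localized width $\lesssim\del\sqrt{k}\le w_\del$. After averaging over $x\in\mathcal{N}$ and passing to a deterministic $(\tilde f,\tilde d)$, convexity is used to compare \emph{far-apart} centers: if $\tilde f(x)=\tilde f(x')$ for distinct $x,x'\in\mathcal{N}$, then $\tilde v=8\del(x'-x)/\eu{x'-x}$ satisfies $x+\tilde v\in[x,x']\cap\D_x$, so $\tilde f(x+\tilde v)=\tilde f(x')$ and $\tilde d(\tilde f(x),\tilde f(x+\tilde v))=\tilde d(\tilde f(x'),\tilde f(x'))$ would have to approximate both $\skp{x}{x+\tilde v}$ and $\skp{x'}{x'}$; a direct computation using $\eu{x'-x}>\tfrac12$ shows these differ by more than $2\del$. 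Hence at most one $x$ per value of $\tilde f$ can satisfy the guarantee, giving $(1-\eta)|\mathcal{N}|\le 2^m$ and $m\gtrsim k$. The needed separation of inner products comes from the $\tfrac12$-packing of the \emph{full} ball in $E$, not from the geometry inside one small ball—this is precisely what your single-dataset construction cannot provide.
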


\begin{proof} 
	Let $\alpha\in (0,\tfrac{1}{2})$ denote the absolute constant from Theorem~\ref{thm:optimalbit_inner_2}.
	If $8\del\leq \alpha$ then every $\D\subset B^n_2$ with $|\D|=N_\del$ and $(\D-\D)\cap \alpha B^n_2=\{0\}$
	satisfies
	\begin{equation*}
	\mathcal{N}(\D, 8\del)\leq N_\del \text{ and }  \ell_*((\D-\D)\cap 8\del B^n_2)=0\leq w_\del,
	\end{equation*}
	so by assumption
	\begin{equation*}
	\bP\Big(\sup_{x,y\in \D}|d(f(x),f(y))- \skp{x}{y}|\leq \del\Big)\geq 1-\eta.
	\end{equation*}
	By Theorem~\ref{thm:optimalbit_inner_2} this implies $m\gtrsim \del^{-2} \log(N_\del/\eta)$. 
	Next, let us show that if $n\geq \del^{-2}w_\del^2$
	and 
	\begin{equation*}
	\bP\Big(\sup_{x,y\in \D}|d(f(x),f(y))- \skp{x}{y}|\leq \del\Big)\geq 1-\eta
	\end{equation*}
	for every $\D\subset B^n_2$ with $\mathcal{N}(\D, 8\del)=1$ and $\ell_*((\D-\D)\cap 8\del B^n_2)\leq w_\del$, then 
	\begin{equation*}
	m\geq \del^{-2}w_\del^2 - 1.
	\end{equation*}
	Set $k= c\del^{-2}w_{\del}^2$ for a constant $c>0$
	and let $E\subset \R^n$ denote a $k$-dimensional subspace.
	Let $\mathcal{N}\subset E\cap B^n_2$ be a maximal $\tfrac{1}{2}$-packing of $E\cap B^n_2$, then $2^k\leq |\mathcal{N}|\leq 9^k$ (e.g. see \cite[Lemma 4.2.8 and Corollary 4.2.13]{RomanHDP}).
	For $x\in \R^n$ define $\D_x= x + (E\cap 8\del  B^n_2)$. 
	Then $\mathcal{N}(\D_x, 8\del)=1$ and $\ell_*((\D_x-\D_x)\cap 8\del B^n_2)\leq 2\ell_*(E\cap 8\del  B^n_2)\lesssim \del \sqrt{k}$.
	Hence, if $c>0$ is chosen small enough, then $\ell_*((\D_x-\D_x)\cap 8\del B^n_2)\leq w_\del$. Let $X$ be uniformly distributed in $\mathcal{N}$.
	By assumption,
	\begin{equation*}
	1-\eta\leq \bP_{X} \bP_{f,d}\Big(\sup_{z, z'\in \D_{X}} |d(f(z),f(z'))- \skp{z}{z'}|\leq \del\Big).
	\end{equation*}
	Hence, there exist (deterministic) $\tilde{f}: B^n_2\to \{-1,1\}^m$ and \\$\tilde{d}:\{-1,1\}^m\times \{-1,1\}^m\to \R$ such that 
	\begin{equation*}
	1-\eta\leq \bP_{X}\Big(\sup_{z, z'\in \D_{X}}|\tilde{d}(\tilde{f}(z),\tilde{f}(z'))- \skp{z}{z'}|\leq \del\Big).
	\end{equation*}
	Clearly, 
	\begin{align*}
	&\bP_{X}\Big(\sup_{z, z'\in \D_{X}}|\tilde{d}(\tilde{f}(z),\tilde{f}(z'))- \skp{z}{z'}|\leq \del\Big)\\
	&\qquad =\frac{1}{|\mathcal{N}|}\sum_{x\in \mathcal{N}}1_{\big\{  \sup_{v, v'\in E\cap 8\del  B^n_2}|\tilde{d}(\tilde{f}(x+v),\tilde{f}(x+v'))- \skp{x+v}{x+v'}|\leq \del \big\}}\\
	&\qquad \leq\frac{1}{|\mathcal{N}|}\sum_{x\in \mathcal{N}}1_{\big\{  \sup_{v\in E\cap 8\del  B^n_2}|\tilde{d}(\tilde{f}(x),\tilde{f}(x+v))- \skp{x}{x+v}|\leq \del \big\}}.
	\end{align*}
	Next, let us show
	\begin{equation*}
	\sum_{x\in \mathcal{N}}1_{\big\{  \sup_{v\in E\cap 8\del  B^n_2}|\tilde{d}(\tilde{f}(x),\tilde{f}(x+v))- \skp{x}{x+v}|\leq \del \big\}}\leq |\{\tilde{f}(x)\; : \; x\in \mathcal{N}\}|.
	\end{equation*}
	Clearly it suffices to prove that 
	\begin{align}\label{eq:proof:thm:lowerbound:continuous}
	&1_{\big\{\sup_{v\in E\cap 8\del  B^n_2}|\tilde{d}(\tilde{f}(x),\tilde{f}(x+v))- \skp{x}{x+v}|\leq \del\big\}}\\
	&\quad +1_{\big\{\sup_{v\in E\cap 8\del  B^n_2}|\tilde{d}(\tilde{f}(x'),\tilde{f}(x'+v))- \skp{x'}{x'+v}|\leq \del\big\}}\leq 1 \nonumber
	\end{align}
	for all $x\neq x'\in \mathcal{N}$ with $\tilde{f}(x)=\tilde{f}(x')$. Without loss of generality, we assume $\eu{x}\leq \eu{x'}$ and set
	$\tilde{v}=8\del\frac{x'-x}{\eu{x'-x}}\in E\cap 8\del  B^n_2$. If $8\del \leq \frac{1}{2}$,
	then $\tfrac{8\del}{\eu{x-x'}}\leq 1$ which implies $x+\tilde{v}\in \operatorname{Span}(\{x,x'\})$. 
	Since the quantization cells of $\tilde{f}$ are convex it follows $\tilde{f}(x+\tilde{v})=\tilde{f}(x)=\tilde{f}(x')$.
	Further, using $\eu{x}\leq \eu{x'}$ we find
	\begin{align*}
	\skp{x'}{x'} - \skp{x}{x+\tilde{v}} &= \skp{x'}{x'} - \skp{x}{x} + \frac{8\del}{\eu{x-x'}}\skp{x}{x-x'}\\
	&= \frac{8\del}{\eu{x-x'}}(\skp{x'}{x'} - \skp{x}{x}+ \skp{x}{x-x'})\\
	&\qquad \quad + \Big[1-\frac{8\del}{\eu{x-x'}}\Big](\skp{x'}{x'} - \skp{x}{x})\\
	&\geq \frac{8\del}{2\eu{x-x'}}(2\skp{x'}{x'} - 2\skp{x}{x'})\\
	&\geq 4\del \eu{x-x'}\\
	&> 2\del.
	\end{align*}
	Therefore, $1_{\{|\tilde{d}(\tilde{f}(x),\tilde{f}(x+\tilde{v}))- \skp{x}{x+\tilde{v}}|\leq \del\}} + 1_{\{|\tilde{d}(\tilde{f}(x'),\tilde{f}(x'))- \skp{x'}{x'}|\leq \del\}}\leq 1$,
	which implies \eqref{eq:proof:thm:lowerbound:continuous}.
	In summary, we showed $(1-\eta)|\mathcal{N}|\leq |\{\tilde{f}(x)\; : \; x\in \mathcal{N}\}|\leq 2^m$. Since $\eta<\tfrac{1}{2}$ and $2^k\leq |\mathcal{N}|$, this yields $m\geq c\del^{-2}w_\del^2 - 1\geq \tfrac{c}{2}\del^{-2}w_\del^2$, provided that $\tfrac{c}{2}\del^{-2}w_\del^2\geq 1$.
\end{proof}

\section*{Acknowledgements}

S.D.\ was supported by the Deutsche Forschungsgemeinschaft (DFG, German Research Foundation) under SPP 1798 (COSIP - Compressed Sensing in Information Processing) through project CoCoMiMo.  A.S.\ acknowledges support by the Fonds de la Recherche Scientifique - FNRS under Grant n$^{\circ}$ T.0136.20 (Learn2Sense).


\begin{thebibliography}{10}

\bibitem{Achlioptas03}
D.~Achlioptas.
\newblock Database-friendly random projections: {Johnson-Lindenstrauss} with
  binary coins.
\newblock {\em J. Comput. Syst. Sci.}, 66(4):671--687, 2003.

\bibitem{AC09}
N.~Ailon and B.~Chazelle.
\newblock The {Fast} {Johnson--Lindenstrauss} {Transform} and approximate
  nearest neighbors.
\newblock {\em SIAM J. Comput.}, 39(1):302--322, 2009.

\bibitem{AL09}
N.~Ailon and E.~Liberty.
\newblock Fast dimension reduction using {Rademacher} series on dual {BCH}
  codes.
\newblock {\em Discrete Comput. Geom.}, 42(4):615--630, 2009.

\bibitem{AL13}
N.~Ailon and E.~Liberty.
\newblock An almost optimal unrestricted fast {Johnson}-{Lindenstrauss}
  transform.
\newblock {\em ACM Transactions on Algorithms}, 9(3):21, 2013.

\bibitem{AlK17}
N.~{Alon} and B.~{Klartag}.
\newblock Optimal compression of approximate inner products and dimension
  reduction.
\newblock In {\em 2017 IEEE 58th Annual Symposium on Foundations of Computer
  Science (FOCS)}, pages 639--650, Oct 2017.

\bibitem{BalcanB05}
M.~Balcan and A.~Blum.
\newblock A {PAC}-style model for learning from labeled and unlabeled data.
\newblock In {\em Learning Theory, 18th Annual Conference on Learning Theory,
  {COLT} 2005, Bertinoro, Italy, June 27-30, 2005, Proceedings}, pages
  111--126, 2005.

\bibitem{BalcanBV06}
M.~Balcan, A.~Blum, and S.~Vempala.
\newblock Kernels as features: On kernels, margins, and low-dimensional
  mappings.
\newblock {\em Machine Learning}, 65(1):79--94, 2006.

\bibitem{BlockiBDS12}
J.~Blocki, A.~Blum, A.~Datta, and O.~Sheffet.
\newblock The {J}ohnson-{L}indenstrauss transform itself preserves differential
  privacy.
\newblock In {\em 53rd Annual {IEEE} Symposium on Foundations of Computer
  Science, {FOCS} 2012, New Brunswick, NJ, USA, October 20-23, 2012}, pages
  410--419, 2012.

\bibitem{boucheron2013concentration}
S.~Boucheron, G.~Lugosi, and P.~Massart.
\newblock {\em Concentration Inequalities: A Nonasymptotic Theory of
  Independence}.
\newblock Oxford University Press, 2013.

\bibitem{BDN15}
J.~Bourgain, S.~Dirksen, and J.~Nelson.
\newblock Toward a unified theory of sparse dimensionality reduction in
  {E}uclidean space.
\newblock {\em Geom. Funct. Anal.}, 25(4):1009--1088, 2015.

\bibitem{BuhlerT02}
J.~Buhler and M.~Tompa.
\newblock Finding motifs using random projections.
\newblock {\em Journal of Computational Biology}, 9(2):225--242, 2002.

\bibitem{CLAD20}
A.~Chowdhury, P.~London, H.~Avron, and P.~Drineas.
\newblock Speeding up linear programming using randomized linear algebra.
\newblock {\em arXiv preprint arXiv:2003.08072}, 2020.

\bibitem{Clarkson08}
K.~L. Clarkson.
\newblock Tighter bounds for random projections of manifolds.
\newblock In {\em Proceedings of the 24th {ACM} Symposium on Computational
  Geometry, College Park, MD, USA, June 9-11, 2008}, pages 39--48, 2008.

\bibitem{ClarksonW09}
K.~L. Clarkson and D.~P. Woodruff.
\newblock Numerical linear algebra in the streaming model.
\newblock In {\em Proceedings of the 41st Annual {ACM} Symposium on Theory of
  Computing, {STOC} 2009, Bethesda, MD, USA, May 31 - June 2, 2009}, pages
  205--214, 2009.

\bibitem{ContrerasM12}
P.~Contreras and F.~Murtagh.
\newblock Fast, linear time hierarchical clustering using the {Baire} metric.
\newblock {\em J. Classification}, 29(2):118--143, 2012.

\bibitem{CooleyT65}
J.~W. Cooley and J.~M. Tukey.
\newblock An algorithm for the machine calculation of complex {Fourier} series.
\newblock {\em Math. Comput.}, 19:297--301, 1965.

\bibitem{DKS10}
A.~Dasgupta, R.~Kumar, and T.~Sarl{\'o}s.
\newblock A sparse {Johnson-Lindenstrauss} transform.
\newblock In {\em Proceedings of the 42nd ACM Symposium on Theory of Computing
  (STOC)}, pages 341--350, 2010.

\bibitem{Dir15}
S.~Dirksen.
\newblock Tail bounds via generic chaining.
\newblock {\em Electron. J. Probab.}, 20:no. 53, 1--29, 2015.

\bibitem{Dir16}
S.~Dirksen.
\newblock Dimensionality reduction with subgaussian matrices: a unified theory.
\newblock {\em Foundations of Computational Mathematics}, 16(5):1367--1396,
  2016.

\bibitem{Dir19}
S.~Dirksen.
\newblock Quantized compressed sensing: a survey.
\newblock In {\em Compressed Sensing and Its Applications}, pages 67--95.
  Springer, 2019.

\bibitem{DiM18b}
S.~Dirksen and S.~Mendelson.
\newblock Robust one-bit compressed sensing with partial circulant matrices.
\newblock {\em ArXiv:1812.06719}.

\bibitem{DM18}
S.~Dirksen and S.~Mendelson.
\newblock Non-gaussian hyperplane tessellations and robust one-bit compressed
  sensing.
\newblock {\em To appear in Journal of the European Mathematical Society,
  arXiv:1805.09409}, 2018.

\bibitem{DiS18}
S.~Dirksen and A.~Stollenwerk.
\newblock Fast binary embeddings with gaussian circulant matrices: improved
  bounds.
\newblock {\em Discrete \& Computational Geometry}, 60(3):599--626, 2018.

\bibitem{Gordon88}
Y.~Gordon.
\newblock On {Milman's} inequality and random subspaces which escape through a
  mesh in $\mathbb{R}^n$.
\newblock {\em Geometric Aspects of Functional Analysis}, pages 84--106, 1988.

\bibitem{HarPeledIM12}
S.~Har{-}Peled, P.~Indyk, and R.~Motwani.
\newblock Approximate nearest neighbor: Towards removing the curse of
  dimensionality.
\newblock {\em Theory of Computing}, 8(1):321--350, 2012.

\bibitem{BHW07}
C.~Hegde, M.~Wakin, and R.~Baraniuk.
\newblock Random projections for manifold learning.
\newblock In {\em Advances in neural information processing systems}, pages
  641--648, 2007.

\bibitem{HuS20}
T.~Huynh and R.~Saab.
\newblock Fast binary embeddings and quantized compressed sensing with
  structured matrices.
\newblock {\em Communications on Pure and Applied Mathematics}, 73(1):110--149,
  2020.

\bibitem{Indyk01}
P.~Indyk.
\newblock Algorithmic applications of low-distortion geometric embeddings.
\newblock In {\em Proceedings of the 42nd Annual Symposium on Foundations of
  Computer Science (FOCS)}, pages 10--33, 2001.

\bibitem{Jac15}
L.~Jacques.
\newblock A quantized {J}ohnson-{L}indenstrauss lemma: {T}he finding of
  {B}uffon's needle.
\newblock {\em IEEE Transactions on Information Theory}, 61(9):5012--5027,
  2015.

\bibitem{Jac17}
L.~Jacques.
\newblock Small width, low distortions: quantized random embeddings of
  low-complexity sets.
\newblock {\em IEEE Transactions on information theory}, 63(9):5477--5495,
  2017.

\bibitem{JaCa17}
L.~Jacques and V.~Cambareri.
\newblock Time for dithering: fast and quantized random embeddings via the
  restricted isometry property.
\newblock {\em Information and Inference: A Journal of the IMA}, 6(4):441--476,
  2017.

\bibitem{JLB13}
L.~Jacques, J.~N. Laska, P.~T. Boufounos, and R.~G. Baraniuk.
\newblock Robust 1-bit compressive sensing via binary stable embeddings of
  sparse vectors.
\newblock {\em IEEE Trans. Inform. Theory}, 59(4):2082--2102, 2013.

\bibitem{JL84}
W.~B. Johnson and J.~Lindenstrauss.
\newblock Extensions of {Lipschitz} mappings into a {Hilbert} space.
\newblock {\em Contemporary Mathematics}, 26:189--206, 1984.

\bibitem{KN14}
D.~M. Kane and J.~Nelson.
\newblock Sparser {Johnson}-{Lindenstrauss} transforms.
\newblock {\em Journal of the ACM}, 61(1):4, 2014.

\bibitem{KM05}
B.~Klartag and S.~Mendelson.
\newblock Empirical processes and random projections.
\newblock {\em J. Funct. Anal.}, 225(1):229--245, 2005.

\bibitem{KMR14}
F.~Krahmer, S.~Mendelson, and H.~Rauhut.
\newblock Suprema of chaos processes and the restricted isometry property.
\newblock {\em Comm. Pure Appl. Math.}, 67(11):1877--1904, 2014.

\bibitem{KW11}
F.~Krahmer and R.~Ward.
\newblock New and improved {J}ohnson-{L}indenstrauss embeddings via the
  {R}estricted {I}sometry {P}roperty.
\newblock {\em SIAM J. Math. Anal.}, 43(3):1269--1281, 2011.

\bibitem{KrW11}
F.~Krahmer and R.~Ward.
\newblock New and improved {J}ohnson-{L}indenstrauss embeddings via the
  restricted isometry property.
\newblock {\em SIAM J. Math. Anal.}, 43(3):1269--1281, 2011.

\bibitem{LaN17}
K.~G. {Larsen} and J.~{Nelson}.
\newblock Optimality of the {Johnson-Lindenstrauss} lemma.
\newblock In {\em 2017 IEEE 58th Annual Symposium on Foundations of Computer
  Science (FOCS)}, pages 633--638, Oct 2017.

\bibitem{LeS14}
Y.~T. Lee and A.~Sidford.
\newblock Path finding methods for linear programming: Solving linear programs
  in o (vrank) iterations and faster algorithms for maximum flow.
\newblock In {\em 2014 IEEE 55th Annual Symposium on Foundations of Computer
  Science}, pages 424--433. IEEE, 2014.

\bibitem{MMR19}
K.~Makarychev, Y.~Makarychev, and I.~Razenshteyn.
\newblock Performance of {Johnson-Lindenstrauss} transform for k-means and
  k-medians clustering.
\newblock In {\em Proceedings of the 51st Annual ACM SIGACT Symposium on Theory
  of Computing}, pages 1027--1038, 2019.

\bibitem{MPT07}
S.~Mendelson, A.~Pajor, and N.~Tomczak-Jaegermann.
\newblock Reconstruction and subgaussian operators in asymptotic geometric
  analysis.
\newblock {\em Geom. Funct. Anal.}, 17(4):1248--1282, 2007.

\bibitem{NN13a}
J.~Nelson and H.~L. Nguy$\tilde{\hat{\mbox{e}}}$n.
\newblock Sparsity lower bounds for dimensionality-reducing maps.
\newblock In {\em Proceedings of the 45th ACM Symposium on Theory of Computing
  (STOC)}, pages 101--110, 2013.

\bibitem{NPW14}
J.~Nelson, E.~Price, and M.~Wootters.
\newblock New constructions of {RIP} matrices with fast multiplication and
  fewer rows.
\newblock In {\em Proceedings of the 25th Annual ACM-SIAM Symposium on Discrete
  Algorithms (SODA)}, 2014.

\bibitem{OyR15}
S.~Oymak and B.~Recht.
\newblock Near-optimal bounds for binary embeddings of arbitrary sets.
\newblock {\em CoRR}, abs/1512.04433, 2015.

\bibitem{ORS15}
S.~Oymak, B.~Recht, and M.~Soltanolkotabi.
\newblock Isometric sketching of any set via the restricted isometry property.
\newblock {\em Information and Inference: A Journal of the IMA}, 7(4):707--726,
  03 2018.

\bibitem{PlV14}
Y.~Plan and R.~Vershynin.
\newblock Dimension reduction by random hyperplane tessellations.
\newblock {\em Discrete Comput. Geom.}, 51(2):438--461, 2014.

\bibitem{Sch06}
G.~Schechtman.
\newblock Two observations regarding embedding subsets of {E}uclidean spaces in
  normed spaces.
\newblock {\em Advances in Mathematics}, 200(1):125--135, 2006.

\bibitem{SpielmanS11}
D.~A. Spielman and N.~Srivastava.
\newblock Graph sparsification by effective resistances.
\newblock {\em {SIAM} J. Comput.}, 40(6):1913--1926, 2011.

\bibitem{Tal14}
M.~Talagrand.
\newblock {\em Upper and lower bounds for stochastic processes}, volume~60 of
  {\em Ergebnisse der Mathematik und ihrer Grenzgebiete. 3. Folge. A Series of
  Modern Surveys in Mathematics [Results in Mathematics and Related Areas. 3rd
  Series. A Series of Modern Surveys in Mathematics]}.
\newblock Springer, Heidelberg, 2014.
\newblock Modern methods and classical problems.

\bibitem{TYUC19}
J.~A. Tropp, A.~Yurtsever, M.~Udell, and V.~Cevher.
\newblock Streaming low-rank matrix approximation with an application to
  scientific simulation.
\newblock {\em SIAM Journal on Scientific Computing}, 41(4):A2430--A2463, 2019.

\bibitem{RomanHDP}
R.~Vershynin.
\newblock {\em High-Dimensional Probability: An Introduction with Applications
  in Data Science}, volume~47 of {\em Cambridge Series in Statistical and
  Probabilistic Mathematics}.
\newblock Cambridge University Press, 2018.

\bibitem{WDLSA09}
K.~Q. Weinberger, A.~Dasgupta, J.~Langford, A.~J. Smola, and J.~Attenberg.
\newblock Feature hashing for large scale multitask learning.
\newblock In {\em Proceedings of the 26th Annual International Conference on
  Machine Learning (ICML)}, pages 1113--1120, 2009.

\bibitem{Woo14}
D.~P. Woodruff.
\newblock Sketching as a tool for numerical linear algebra.
\newblock {\em Foundations and Trends® in Theoretical Computer Science},
  10(1–2):1--157, 2014.

\end{thebibliography}
\end{document}